\documentclass[12pt]{article} 

\usepackage{setspace}
\usepackage{hyperref}
\usepackage{url}

\usepackage{amsmath,amssymb,amscd,braket,amsthm}
\usepackage{amsfonts}
\usepackage{color}
\usepackage{graphicx}
\usepackage{cite}
\usepackage{bbold}
\usepackage{tikz} 
\usepackage{multirow}
\usepackage{mathrsfs}
\usepackage{subcaption}
\usepackage{comment}
\usepackage{mathtools}

\numberwithin{equation}{section}

 \newcommand{\bea}{\begin{eqnarray}}
\newcommand{\eea}{\end{eqnarray}}
\newcommand{\be}{\begin{equation}}
\newcommand{\ee}{\end{equation}}
\newcommand{\ba}{\begin{align}}
\newcommand{\ea}{\end{align}}

\newcommand{\PP}{\mathbb{P}} % Proiettivo complesso

\newcommand{\N}{\mathcal{N}}

\newcommand{\M}{\mathcal{M}}

\newcommand{\R}{\mathcal{R}}

\newcommand{\Tr}{{\rm {Tr}}}

\newcommand{\tr}{ {\rm Tr}}

\newcommand{\A}{\mathcal{A}}

\newlength{\slength}
\renewcommand{\title}[1]{\vbox{\center\LARGE{#1}}\vspace{5mm}}
\renewcommand{\author}[1]{\vbox{\center#1}\vspace{5mm}}
\newcommand{\address}[1]{\vbox{\center\footnotesize\em#1}}
\newcommand{\email}[1]{\vbox{\center\footnotesize\tt#1}\vspace{5mm}}

\newcommand\E{\text{E}}

\newcommand\la{\lambda}
\newcommand\Ga{\Gamma}
\newcommand\F{\mathcal{F}}
\newcommand\V{\mathcal{V}}
\newcommand\U{\mathcal{U}}
\newcommand\W{\mathcal{W}}
\newcommand\PS{\mathcal{PS}}

\theoremstyle{definition}
\newtheorem{definition}{Definition}[section]
\theoremstyle{plain}
\newtheorem{theorem}{Theorem}[section]
\theoremstyle{plain}

\theoremstyle{plain}

\theoremstyle{plain}

\newcommand{\bluecheck}{}%
\DeclareRobustCommand{\bluecheck}{%
  \tikz\fill[scale=0.4, color=blue]
  (0,.35) -- (.25,0) -- (1,.7) -- (.25,.15) -- cycle;%
}
\newcommand{\redcross}{}%
\DeclareRobustCommand{\redcross}{%
  \textcolor{red}{$\times$}%
}

\begin{document}

\begin{titlepage}

\begin{center}

\hfill \\
\hfill \\
\vskip 1cm

\title{Microcanonical Operator Spectrum, BPS Chaos \\ and Free Probability Theory}

\author{Alexandre Belin$^{a,b}$, Yichao Fu$^{c,d}$, Federico La Rocca$^{a}$
}

\address{
${}^a$Dipartimento di Fisica, Universit\`a di Milano - Bicocca \\
I-20126 Milano, Italy

\vspace{1em}
${}^b$INFN, sezione di Milano-Bicocca, I-20126 Milano, Italy
}

\address{${}^c$Department of Physics and Photon Science, Gwangju Institute of Science and Technology, \\
123 Cheomdan-gwagiro, Gwangju 61005, Korea}

\address{${}^d$Center for High Energy Physics, Peking University,\\
No.5 Yiheyuan Rd, Beijing 100871, P. R. China}

\email{alexandre.belin@unimib.it, yichao.fu.physics@gmail.com, federicolarocca99@gmail.com}

\end{center}

\abstract{

Matrix elements of simple operators are central to thermalization in chaotic quantum many-body systems. Their moments are often described statistically through the Eigenstate Thermalization Hypothesis. In this paper, we study a more refined characterization of these matrix elements: the eigenspectrum of simple operators once projected to microcanonical energy windows. The spectrum encodes all moments of the matrix elements at once, and is particularly relevant when the spectrum of the Hamiltonian is degenerate. Assuming that the eigenbasis of the simple operator is Haar-randomly oriented with respect to that of the Hamiltonian, we prove that the spectrum of an operator projected to a small microcanonical window obeys Gaussian random matrix statistics. We do this directly by studying the probability distribution inherited from the random orientation, as well as using tools from Free Probability Theory. We discuss the implications for BPS chaos and more generally for eigenstate thermalization.

}

\vfill

\end{titlepage}

\eject

\tableofcontents

\newpage

\section{Introduction}

Chaotic quantum many-body systems are believed to obey universal properties. A prominent example is the fact that they thermalize. Microscopically, this is best explained by the Eigenstate Thermalization Hypothesis (ETH) \cite{PhysRevA.43.2046,PhysRevE.50.888}
\be
\braket{E_m | O | E_n} = \delta_{m,n} f(\bar{E}) + g(\bar{E}, \delta E) e^{-S(\bar{E}/2)} R_{mn} \,,
\ee
where $\bar{E},\delta E$ are the mean energy and energy difference of the two states, and $S,f,g$ are smooth functions of their arguments encoding the entropy, thermal one- and two-point functions, respectively. $R_{mn}$ are erratic numbers, which statistically have an approximate Gaussian distribution with zero mean and unit variance. 

The ETH ansatz explains thermalization in systems that obey it (see \cite{DAlessio:2015qtq} for a review). The erratic off-diagonal matrix elements $R_{mn}$ were considered independent random Gaussian variables for a long time. Since the main observable of interest was the two-point function (as a standard diagnostic of thermalization), this was sufficient. But it was later realized that the higher point moments are just as important if one wants to consider higher-point correlation functions \cite{Foini:2018sdb,Chan:2018fsp,Murthy:2019fgs}. Consistency with higher-point correlation functions thus implies a whole tower of non-Gaussian moments.\footnote{These non-trivial moments have also been observed for OPE coefficients in CFTs, where they follow from crossing symmetry \cite{Belin:2021ryy,Anous:2021caj}.} They are often described in a moment expansion since the net weight of higher non-Gaussianities are exponentially suppressed in the entropy (even if their collective effect can be just as important as the Gaussian part in correlation functions).

In this paper, we will consider a different approach to this problem. We will study the eigenspectrum of a simple operator, once projected to a microcanonical window. This corresponds to the eigenvalues of the operator
\be \label{projectorintro}
O_K = P_K O P_K \,,
\ee
where $P_K$ is a projector onto a microcanonical window of size $K$. The basis we project to is not the eigenbasis of $O$ but rather that of the Hamiltonian, and the projection is crucial. To see this, one can imagine the spin operator on one site in a chaotic spin chain. Its eigenvalues are half $1/2$, half $-1/2$, so the spectrum is extremely simple. However, one projected to a smaller window, the spectrum will become very complicated. The spectrum should be viewed as a collective description of all the moments of $R_{mn}$. It contains however very fine-grained information, for example eigenvalue statistics, which cannot be easily read off from the first few moments described by the ETH.

This question has unfortunately not been explored much. It was first discussed in \cite{Richter:2020bkf,Wang:2021mtp,Wang:2023qon}\footnote{See also \cite{Dymarsky:2017zoc} for an earlier discussion, focusing mostly on the largest eigenvalue.}, where the authors studied the scale at which Gaussian ETH (i.e. where $R_{mn}$ are actual Gaussian independent) becomes accurate in definite Hamiltonians. Once the $R_{mn}$ become actual independent random Gaussian, this fixes the spectrum. Another direct attempt was done in \cite{SredIni}: assuming that the relative basis change between the eigenbasis of a spin operator and that of the Hamiltonian is a Haar-random unitary, \cite{SredIni} studied the expectation value (under the Haar average) of the one-point function of the density of states. In the limit of small relative size of the projection, a semicircle law was found. This will be the general idea that we will explore in this paper, by building the full probability distribution for the eigenvalues of the projected operator. It is also worth mentioning the ETH matrix model \cite{Jafferis:2022uhu,Jafferis:2022wez}, which also in principle encodes the spectrum of $O_K$, even if it has not been studied in that model.

An important motivation for our work is the study of BPS chaos (or more generally, the study of chaos when the spectrum of the Hamiltonian is degenerate). In that case, there are many states of fixed energy and we can take the projector $P_K$ to project on that degenerate subspace. Standard probes of quantum chaos such as eigenvalue statistics or ETH are not applicable there, as all eigenvalues have exactly zero spacing, and there is no preferred basis in a degenerate subspace so matrix elements are not meaningful. The eigenspectrum of the projected operator becomes the only physical quantity to study. \cite{Lin:2022rzw,Lin:2022zxd,Chen:2024oqv} conjectured that the eigenvalue statistics of the projector operator should encode whether a BPS subspace is chaotic.\footnote{\cite{Chen:2024oqv} stresses the distinction between strong chaos and weak chaos: to have strong chaos, the projected operator needs to display random matrix signals over the full size of the projected Hilbert space.} The motivation for this criterion was precisely what we will study in this paper, namely that chaos of the underlying Hamiltonian should imply that eigenspaces of operators are randomly oriented, even with respect to degenerate subspaces. Our work provides further justification for BPS chaos and makes the diagnostic quantitative: by assuming that particular subspaces are randomly oriented, we will find a definite statistical distribution for the eigenvalues.

\subsection*{Summary of Results}

In this paper, we establish the following result. Consider an operator $O$ acting on an $N$-dimensional Hilbert space. $O$ has eigenvalues $O_i$ with $1\leq i \leq N$. We consider the projected operator $O_K$, given by \eqref{projectorintro}. We take the projector to be an Haar-randomly rotated basis, compared to the original eigenbasis of $O$. We then consider the large $N$ limit, such that $N,K\to\infty$, with the ratio $\alpha=K/N$ fixed. We prove the following result for the probability measure on the space of eigenvalues ${\lambda_j}$ of $O_K$: to quadratic order in the small $\alpha$ expansion, we have
\be \label{Pgaussianintro}
\mu(\lambda_1,\cdots,\lambda_K) = \mu(\text{GUE})_{\sigma_O, \bar{O}} +\mathcal{O}(\alpha^3) \,.
\ee
The probability measure $\mu(\text{GUE})_{\sigma_O, \bar{O}}$ is a Gaussian probability distribution in the GUE universality class, centered at the mean of the trace of $O$: $\bar{O}=\frac{1}{N}\Tr O$. The width of the probability distribution is set by the variance of the operator $\sigma_ O$. More precise equations can be found around equation \eqref{eq-jpdf in LargeN}. Note that the eigenvalues of $O$ were deterministic (i.e. fixed), but those of $O_K$ are probabilistic. This follows from taking a Haar-random unitary to rotate the operator before taking the projection. The probability distribution \eqref{Pgaussianintro} should be viewed as inherited from the Haar measure. The results we prove are independent of the original eigenvalue distribution for $O$.\footnote{There is a mild assumption on the scaling with $N$ of the moments of $O$, which can be viewed in our approach as an equivalent of dealing with a simple operator.} Our result can thus be viewed as a type of central limit theorem for random projection.\footnote{We note that at the level of the cumulants, this result was already derived in \cite{Wang:2023qon}. Here, we derive the full probability distribution, which in particular can be used to probe correlations in the spectrum. }

\eqref{Pgaussianintro} follows directly from the Haar measure on the random unitary, by evaluating the projection in the large $N$ limit. This thus provides a direct derivation of the full probability distribution. However, evaluating the first few moments of the density of states $\rho(\lambda)$ can also be accomplished by using Free Probability Theory. Free Probability Theory is the theory of probability distributions over non-commuting objects (matrices). It has so far only made rare appearances in high-energy community\footnote{See however \cite{Gopakumar:1994iq} for an early discussion in the context of large $N$ gauge theories, and \cite{Wang:2022ots,Camargo:2025zxr} for more recent connections.}, but it is becoming an important idea in the chaos/ETH context \cite{Pappalardi:2022aaz,Fava:2023pac,Vallini:2025vvq, Espindola:2026xuc, Espindola:2026qfo}. In this paper, we use Free Probability Theory to provide a second proof of our random projection result. The use of free probability makes the proof quite elegant, although only partial: we can prove that the density of states becomes the semicircle law, and we can show that the two-point correlation function of the density of states, expanded moment by moment, also agree with the GUE distribution. Along the way, we give a general introduction to Free Probability and mention several important theorems, which we hope can be applied in other contexts.

This paper is organized as follows: in Section \ref{sec:2}, we set up the problem and define the projected operator we will study in the paper. We then discuss the probability measure on the eigenvalues of the projected operator, inherited from the Haar measure. We establish that in the large $N$ limit, it becomes that of the GUE. In Section \ref{sec: Numerics}, we numerically check our results, for different types of original operator spectra, and different size projections. In Section \ref{sec:freeprob}, we discuss the Free Probability approach to the problem. We first review the salient features of Free Probability, summarize many relevant theorems and apply them to our problem. We conclude with some open questions in Section \ref{sec:discussion}. The appendices discuss other examples of projected operators not covered in the main text, as well as some proofs of theorems.

\section{Operators and Microcanonical Windows} \label{sec:2}

\subsection{Projecting an operator onto a random subspace}
Consider a quantum chaotic system with Hilbert space $\mathscr{H}$ of dimension $\dim \mathscr{H}=N$. We imagine the system has a Hamiltonian $H$, with energy levels $\{  \ket{E_k} \}_{k=1...N}$. We now consider a Hermitian operator $O$, which acts linearly on $ \mathscr{H}$. $O$ can always be written in diagonal form, once represented in its eigenbasis
\begin{equation}
    O=\mathrm{diag}(O_1, O_2, \dots, O_N)\,.
\end{equation}
For now, we make no assumptions on the eigenvalues $O_i$. Physically, we will have in mind that $O$ is a simple operator which, just like in the ETH context, should be thought of as a few body operator.\footnote{This does constrain the eigenvalues $O_i$, but only mildly. We will come back to this point shortly.}

We now make the main assumption that we will use throughout this paper: we assume that the system is chaotic. This usually has several consequences on the Hamiltonian: the two main consequences are related to the eigenvalues and eigenvectors of the Hamiltonian. In this paper, we will not need to say anything about the eigenvalues. We will however assume that the eigenvectors of the Hamiltonian are randomly oriented. Of course, any physical Hamiltonian has a fixed eigenbasis, but here we will take these vectors to be Haar-randomly oriented with respect to the basis of the operator $O$: 
\be \label{defunit}
\braket{E_j| a } = U_{j a} \,,
\ee
where $ a=1,\cdots, N$ represents the eigenvectors of $O$. For any fixed Hamiltonian and fixed operator $O$, there exists such a unitary $U$, but here, we will model chaos by assuming this unitary is drawn randomly from the Haar ensemble, with probability $P(U)$. To compute any quantity, we will need to perform Haar averages
\be \label{Haarmeasure}
\int dU \mu_{\text{Haar}}(U) \,.
\ee
By a slight abuse of notation, and because this is a standard notation for Haar random unitaries, every time we write an integral over a unitary matrix, we will keep the Haar measure implicit:
\be
\int dU \equiv \int dU \mu_{\text{Haar}}(U)\,.
\ee

To summarize, we will imagine fixing the basis that is singled out by the observable, and we will characterize the operator entirely through its eigenvalue distribution. We will consider cases where the eigenspectrum of $O$ is fixed, but also cases where it is itself drawn from a probability distribution.
 This allows us to describe the spectrum of $O$ by a distribution function $f(O)$. Deterministic spectra can be included as a special case by taking $f(O)$ to be a sum of Dirac delta functions
 \begin{equation}
     f(O)=\frac{1}{N}\sum^N_{i=1} \delta(O-O_i) \,.
 \end{equation}

We now want to project $O$ onto a subspace spanned by $K$ energy eigenstates. Because $O$ is originally in its eigenbasis, before we project we must first rotate it to the eigenbasis of the Hamiltonian, using the unitary operator \eqref{defunit}. The projected operator thus reads
\begin{align}\label{trunc_op}
    O_K=P_K U OU^\dagger P_K \,.
\end{align}
Here $P_K$ is an orthogonal projector of rank $K$, which in its diagonal form reads
\begin{align}
    P_K = \mathrm{diag}(\underbrace{1,\ 1,\ \dots,\ 1}_{K \text{ times}},\underbrace{0,\ 0,\ \dots,\ 0}_{N-K \text{ times}}) \,.
\end{align}
Notice that in $P_K$ we put all the ones in the first $K$ places of the main diagonal without loss of generality, since we can reshuffle the energy eigenstates as we desire. It is useful to define the parameter
\begin{align}
    \alpha:=\frac{K}{N}\,,
\end{align}
which takes the value from $0$ to $1$ and parametrizes the ratio of the sizes of the projected versus full Hilbert space.

We would now like to study the eigenvalues of $O_K$ (the remaining $K$ non-trivial eigenvalues $\{\lambda_i\}_{i=1,\dots,K}$). These eigenvalues are never deterministic. They are probabilistic, simply because they inherit a probability distribution from the Haar measure \eqref{Haarmeasure}. Our goal will be to study the statistics of the $\lambda_i$, and understand their dependence on $\alpha$ and $f(O)$. We will mostly work in the limits $N\to\infty$ and $K=\alpha N$ with $\alpha$ finite. This can be carried out numerically given the prescription specified above, and we will report the results for various $f(O)$ in Section \ref{sec: Numerics}. Analytically, we will employ random matrix techniques to compute the joint probability density of the eigenvalues of $O_K$ in the following section. We will derive exact finite $N$ expressions and proceed with the simplifications that arise in the large $N$ limit, as we now explain.

\subsection{Deriving the projected eigenvalue probability distribution\label{sec: jpdf from RMT}}

The goal of this section is to establish a universality result, which one can view as a type of central limit theorem for the projected spectrum. In the limit of small $\alpha$, the probability distribution gives the GUE ensemble, centered at the mean
\be
\bar{O}=\frac{1}{N}\Tr O \,,
\ee
and whose variance is fixed by 
\be
\sigma^2_O=\frac{\Tr O^2}{N}-\bar{O}^2 \,.
\ee

To show this, we will study the joint eigenvalue distributions of the projected matrix, which controls all features of the spectrum. From it, we can extract the density of states, level-spacing distribution and spectral correlation functions.
Given that $O_K$ is a $K\times K$ matrix, we can assign its eigenvalues to be 
\begin{equation}
    \{\lambda_1,\lambda_2\dots \lambda_k\} \,.
\end{equation}
To evaluate the joint eigenvalue distribution of this randomly projected matrix, we consider the following expression\footnote{We will use the notation $d\lambda=\prod_{i=1}^K d\lambda_i$.}
\begin{equation}
    \mu(\lambda)d\lambda=\int \delta_K(PUOU^\dagger P-V\Lambda V^\dagger)d(V\Lambda V^\dagger)dU \,,
    \label{eq-jpdf delta}
\end{equation}
where we have the $K$-dimensional $\delta$ function $\delta_K$. Here, we have integrated in the matrix $V\Lambda V^{\dagger}$ using the delta function, and have directly written the integral in matrix form in diagonal form: $V$ is a unitary matrix and $\Lambda$ is diagonal. Such a $K$-dimensional matrix Dirac delta function can be rewritten in the Fourier integral form \cite{Zhang:2016mdq}:
\begin{equation}
    \delta_K(PUOU^\dagger P-V\Lambda V^\dagger)=c_M\int e^{i\Tr[M(PUOU^\dagger P-V\Lambda V^\dagger)]} dM\,,
    \label{eq-delta int form}
\end{equation}
where $c_M=\frac{1}{2^K\pi^{K^2}}$ and $dM=\prod^K_{i}M_{ii}\prod_{1\leq i <j \leq K} d\mathrm{Re}(M_{ij})d\mathrm{Im}(M_{ij}) $. We also have \cite{Zhang2019HarmonicAF}
\begin{equation}
    d(V\Lambda V^\dagger)=c_{\Delta}\Delta^2_K(\lambda)d\lambda dV \,,
\end{equation}
where $c_\Delta=\frac{\pi^{K(K-1)/2}}{\prod^K_{j=0} j!}$ and $\Delta$ is the Vandermonde determinant. We remind the reader one final time that $V$ is a unitary, so its integral is taken with respect to the Haar measure.

In what follows, it will be convenient to work in the full Hilbert space (which is $N\times N$) as well, so we introduce:
\begin{equation}
    \tilde{V}=\begin{bmatrix}V&0\\0&0 \end{bmatrix}_{N\times N},~~~\tilde{V}^\dagger=\begin{bmatrix}V^\dagger&0\\0&0 \end{bmatrix}_{N\times N},~~~\tilde{\Lambda}=\begin{bmatrix}\Lambda&0\\0&0 \end{bmatrix}_{N\times N} \,.
\end{equation}

One can do this since such an embedding in $N\times N$ does not change the trace in (\ref{eq-delta int form}).
Here, $dU_N$ and $dV_K$ are the same as in (\ref{eq-jpdf delta}), where the subscripts are to point out the dimensionality of the integral.
The $\delta$ function can be expressed using its integral representation by introducing an auxiliary Hermitian matrix $\tilde{M}$. Since all non-trivial information is only in the upper-left $K\times K$ corner, we can therefore consider the auxiliary matrix in the embedding form
\begin{equation}
    \tilde{M}=\begin{bmatrix}M&0\\0&0 \end{bmatrix}_{N\times N}\,.
\end{equation}
Therefore, we have 
\begin{equation}
    \mu(\lambda)=c_\Delta c_M\int \Delta^2_K(\lambda) e^{i\Tr(\tilde{M}(PUOU^\dagger P-\tilde{V}\tilde{\Lambda}\tilde{V}^\dagger))}dM_KdU_NdV_K
    \label{eq-IntRep of delta} \,.
\end{equation}

This is the master equation that we will use to derive the probability distribution in nicer form.

\subsubsection{Exact evaluation}
We first evaluate the integrals in (\ref{eq-IntRep of delta}) exactly. We will only take the limit $N\gg K$ at later stage, and we will mention it explicitly when we do. We will soon see that this gives the form of the GUE joint probability distribution. However, details, such as the mean and variance of the distribution, are missing. They can be recovered once we consider large-$N$ techniques later. 

Our main aims are to evaluate the following two integrals coming from (\ref{eq-IntRep of delta}):
\begin{eqnarray}
    I_1&=&\int  e^{i\Tr(\tilde{M}PUOU^\dagger P)}dU_N \\
    I_2&=&\int  e^{-i\Tr(\tilde{M}\tilde{V}\tilde{\Lambda}\tilde{V}^\dagger)}dV_K \,.
\end{eqnarray}
In the meantime, we also consider the diagonalization of the matrix $\tilde{M}$, which yields
\begin{equation}
    dM_K=d(WmW^\dagger)=c_\Delta \Delta^2_K(m)dmdW \,,
\end{equation}
where $W$ is a $K$-dimensional Haar random unitary and $m$ is a diagonal matrix of eigenvalues of $M$. Its embedding in $N$-dimensional form is
\begin{equation}
    \tilde{M}=\begin{bmatrix}W&0\\0&0 \end{bmatrix}\begin{bmatrix}m&0\\0&0 \end{bmatrix}\begin{bmatrix}W^\dagger&0\\0&0 \end{bmatrix} \,.
\end{equation}
To evaluate the integral $I_1$ exactly, we will consider the Harish-Chandra-Itzykson-Zuber (HCIZ) integral \cite{Harish-Chandra:1957dhy, Itzykson:1979fi}.  We can thus write the integral result as
\begin{equation}
    I_1=c_N\frac{\det[\exp(i \tilde{m}_i O_j)]_{1\leq i,j \leq N}}{i^{(N^2-N)/2}\Delta_N(\tilde{m})\Delta_N(O)} \,,
    \label{eq-I1}
\end{equation}
where $\tilde{m}_i=m_i$ for $1\leq i\leq K$, $\tilde{m}_i=0$ for $K+1\leq i\leq N$, and $c_N=\prod^{N-1}_{i=1}i!$. This is only a formal result, since the Vandermonde determinant in the denominator vanishes. We will keep it as is for now, but we will deal with the integral $I_1$ properly below. Similar to the first integral, we use the HCIZ integral to evaluate $I_2$, which gives
\begin{equation}
    I_2=c_K\frac{\det[\exp(-i m_i \lambda_j)]_{1\leq i,j \leq K}}{(-i)^{(K^2-K)/2}\Delta_K(m)\Delta_K(\lambda)} \,,
    \label{eq-I2}
\end{equation}
where $c_K=\prod^{K-1}_{i=1}i!$.
Unlike for $I_1$, this integral is perfectly well-behaved. At this stage, nothing depends on $W$ anymore and its integral gives unity (considering the normalized Haar measure). 

After combining (\ref{eq-I1}) and (\ref{eq-I2}), the joint probability density is then
\begin{equation}
    \mu(\lambda)=\int  \frac{\Delta_K(\lambda)\Delta_K(m)c_N c_Kc_\Delta}{i^{(N^2-N)/2}(-i)^{(K^2-K)/2} } \frac{\det[\exp(i \tilde{m}_i O_j)]_{1\leq i,j \leq N}}{\Delta_N(\tilde{m})} \frac{\det[\exp(-i m_i \lambda_j)]_{1\leq i,j \leq K}}{\Delta_N(O)}d_K m\,.
\end{equation}
As mentioned previously, the term $\frac{\det[\exp(i \tilde{m}_i O_j)]_{1\leq i,j \leq N}}{\Delta_N(\tilde{m})}$ is problematic, and is actually of the form $\frac{0}{0}$ \footnote{In the following, we will consider for simplicity eigenvalues of $O$ being non-degenerate. However, a straightforward generalization can be made for operators with a degenerate spectrum (e.g. the spin operators). We find divergences of the type $\frac{0^2}{0^2}$ and we can still apply the confluent Vandermonde limit that we use in our discussion, by also taking derivatives with respect to the elements for which the degeneracy of eigenvalues of $O$ occurs.}, . Following \cite{Faraut2015} (essentially the confluent Vandermonde limit), this term can be further simplified to
\begin{eqnarray}
    &&\lim_{\tilde{m}_{i>K}\rightarrow 0} \frac{\det[\exp(i \tilde{m}_i O_j)]_{1\leq i,j \leq N}}{\Delta_N(\tilde{m})}=\lim_{\tilde{m}_{i>K}\rightarrow 0} \frac{\det[\exp(i O_i\tilde{m}_j)]_{1\leq i,j \leq N}}{\Delta_N(\tilde{m})}\notag \\
    &=&\frac{i^{(K-N)(K-N+1)/2}}{c_{N-K}\Delta_K(m)\prod^K_{i=1} m_i^{N-K} } \det 
    \begin{pmatrix}
        1&\cdots&1\\
        O_1&\cdots&O_N\\
        \vdots&\cdots&\vdots\\
        O_1^{N-K-1}&\cdots&O_N^{N-K-1}\\
        e^{iO_1m_1}&\cdots&e^{i O_N m_1}\\
        \vdots&\cdots&\vdots\\
        e^{iO_1m_K}&\cdots&e^{i O_N m_K}\\
    \end{pmatrix}_{N\times N}\,,
\end{eqnarray}
where in the first equal sign we applied the transpose operation, which does not affect the determinant, and $c_{N-K}=\prod^{N-K-1}_{i=1}i!$. The determinant above is known to be related to the determinant made of divided differences:\footnote{In what follows, we assume the eigenvalues have been ordered and $O_i$ increases as $i$ increases.}
\begin{equation}
    \det 
    \begin{pmatrix}
        1&\cdots&1\\
        O_1&\cdots&O_N\\
        \vdots&\cdots&\vdots\\
        O_1^{N-K-1}&\cdots&O_N^{N-K-1}\\
        e^{iO_1m_1}&\cdots&e^{i O_N m_1}\\
        \vdots&\cdots&\vdots\\
        e^{iO_1m_K}&\cdots&e^{i O_N m_K}\\
    \end{pmatrix}=\prod_{0<j-i\leq N-K}(O_j-O_i) \det(f_i[O_j,\dots,O_{j+N-K}])_{1\leq i,j\leq K}
\end{equation}
where $f_i[O_j,\dots,O_{j+N-K}]$ is the exponential of the divided difference with $f_i[\cdot]:=e^{im_i [\cdot]}$. 
Therefore, the joint probability density can be written as
\begin{eqnarray}
\mu(\lambda)&=& \frac{\Delta_K(\lambda)c_N c_K c_\Delta^2c_M}{\Delta_N(O)c_{N-K}} i^{(K^2-KN)}\prod_{0<j-i\leq N-K}(O_j-O_i)\times\\ \nonumber
&&\int
\frac{\det(f_i[O_j,\dots,O_{j+N-K}])_{1\leq i,j\leq K}\det[\exp(-i m_i \lambda_j)]_{1\leq i,j \leq K}}{\prod^K_{i=1} m_i^{N-K} }  d_K m \,.
\end{eqnarray}
Next, we use the Andr\'eief identity \cite{andreief1883note} to evaluate the $d_K m$ integral:
\begin{eqnarray}
    &&\int
\frac{\det(f_i[O_j,\dots,O_{j+N-K}])_{1\leq i,j\leq K}\det[\exp(-i m_i \lambda_j)]_{1\leq i,j \leq K}}{\prod^K_{i=1} m_i^{N-K} }  d_K m\notag \\
&=&K!\det\left( \int \frac{1}{m^{N-K}} \exp(im[O_j,\dots,O_{j+N-K}])\exp(-im\lambda_l)dm \right)_{1\leq j,l \leq K}\,.
\end{eqnarray}
Consider an integral representation of the divided difference \cite{de1978practical}
\begin{equation}
    f[x_1,\dots,x_n]=\frac{1}{(n-1)!}\int f^{(n-1)}(t) M_n(x_1,\dots,x_n;t)dt \,,
\end{equation}
where $M_n(x_1,\dots,x_n;t)$ is called the cardinal (fundamental) $B$-spline function, which is the normalized $B$-spline function. Therefore, the above integral has a nice expression:
\begin{eqnarray}
    &&K!\det\left( \int \frac{1}{m^{N-K}} \exp(im[O_j,\dots,O_{j+N-K}])\exp(-im\lambda_l)dm \right)_{1\leq j,l \leq K}\\
    &=& K!\det\left( \int \frac{(im)^{N-K}}{m^{N-K}(N-K)!} \int e^{imt} M_{N-K+1}(O_j,\dots,O_{j+N-K};t)e^{-im\lambda_l}dtdm\right)_{1\leq j,l\leq K}\,.
\end{eqnarray}
The $dm$ integral above simply gives us a Dirac delta function $\delta(t-\lambda_l)$. Therefore, the integral results in a nice form
\begin{equation}
    K!\frac{i^{(N-K)K}(2\pi)^K}{((N-K)!)^K}\det \left( M_{N-K+1}(O_j,\dots,O_{j+N-K};\lambda_l) \right)_{1\leq j,l\leq K} \,.
\end{equation}
The joint probability density reaches a simple form in terms of the cardinal $B$-spline function:
\begin{eqnarray}
    \mu(\lambda)&=&\frac{(2\pi)^K\Delta_K(\lambda)c_N c_K c_\Delta^2 c_M}{\Delta_N(O)c_{N-K}} \frac{K!\prod_{0<j-i\leq N-K}(O_j-O_i)}{((N-K)!)^K}\nonumber\\
    &&\times\det \left( M_{N-K+1}(O_j,\dots,O_{j+N-K};\lambda_l) \right)_{1\leq j,l\leq K}\,,
\end{eqnarray}
where factors involving $i$ are canceled and left with a non-negative probability density.
From its definition, the cardinal $B$-spline function is only non-vanishing if
\begin{equation}
    O_j\leq\lambda_l<O_{j+N-K}\,.
\end{equation}
For the matrix $M_{N-K+1}(O_j,\dots,O_{j+N-K};\lambda_l)_{1\leq j,l\leq K}$, such a constraint for the diagonal elements when $j=l$ shall familiarize us:
\begin{equation}
    O_j\leq\lambda_j<O_{j+N-K}\,,
\end{equation}
which not only assembles Cauchy's interlacing theorem (also known as the Poincaré separation theorem), but also provides a stronger condition.
The cardinal $B$-spline function admits a recursion relation through convolution \cite{de1978practical}:
\begin{equation}
    M_{p+1}(x_1,\dots,x_p;x)=(\underbrace{M^{(1)}_0\star \cdots \star M^{(p)}_0}_p)(y)\,,
\end{equation}
where we use the upper indices to denote each section of the knots:
\begin{equation}
    M^{(i)}_0(y)\equiv M^{(i)}_0(x_i,x_{i+1};y)\,.
\end{equation}
Each $M^{(i)}_0(y)$ can be understood from the probability perspective as a uniform distribution:
\begin{equation}
    M^{(i)}_0(x_i,x_{i+1};y)=
    \begin{cases} \frac{1}{x_{i+1}-x_i},~~~\mathrm{if}~x_i\leq y< x_{i+1}\\
    0,~~~\mathrm{otherwise}
    \end{cases}\,,
\end{equation}
with mean and variance as
\begin{equation}
    \mu_i=\frac{O_{i+1}+O_i}{2},~~~\sigma^2_i=\frac{1}{12}(O_{i+1}-O_i)^2\,.
\end{equation}
Using such a convolution definition, we have
\begin{equation}
    M_{N-K+1}(O_j,\dots,O_{j+N-K};\lambda_l)=\underbrace{M_0(O_j,O_{j+1};\lambda_l)\star \dots \star M_0(O_{j+N-K-1,O_{j+N-K}};\lambda_l)}_{N-K}\,.
\end{equation}
Such a convolution expression can be regarded as a probability distribution of a sum of $N-K$ randomly independent variables with identical uniform distributions. This is the farthest we can get by making no approximations.

If we start taking limits, the expression can be further simplified. In the limit $N\gg K$, we can use the central limit theorem \cite{Lindeberg1922EineNH}, and the probability distribution converges to a Gaussian distribution \cite{Unser1992OnTA}:
\begin{equation}
    M_{N-K+1}(O_j,\dots,O_{j+N-K};\lambda_l) \xrightarrow[]{N\gg K} \frac{1}{\sqrt{2\pi \sigma_j^2}}\exp\left( -\frac{(\lambda_l-\bar{O}_j)^2}{2\sigma_j^2} \right)\,,
\end{equation}
where $\bar{O}_j=\sum_{m}^{N-K-1}\frac{1}{2}(O_{j+m}+O_{j+m+1})$, $\sigma^2_j=\frac{1}{12} \sum_{m}^{N-K-1}(O_{j+m+1}-O_{j+m})^2$. To proceed, let us assume that $\sigma_i^2 \approx \sigma^2$ is constant for all $1\leq j \leq K$.\footnote{This will not always be true for arbitrary choices of $O$. We will in any case provide a more formal derivation of the GUE distribution in the next section, here we make this assumption merely to illustrate how a GUE distribution can arise.} This would lead to 
\begin{equation}
    \mu(\lambda)\propto (2\pi)^K \frac{c_N c_\Delta^2 c_M }{c_{N-K}}\frac{\Delta_K(\lambda)^2 \Delta_K(\bar{O}/\sigma^2)}{\Delta_N(O)} e^{-\sum_l \lambda_l^2/2\sigma^2}e^{-\sum_j \bar{O}_j^2/2\sigma^2}\,,
    \label{eq-exactPlambda}
\end{equation}
where we used that \cite{tao2023topics}
\begin{equation}
    \det(e^{\lambda_l \bar{O}_j/\sigma^2})=\frac{1}{\prod_{x=1}^K x!}\Delta_K(\lambda)\Delta_K(\bar{O}/\sigma^2)+O(\Delta_K(\bar{O}/\sigma^2)) \,.
\end{equation}
We find that the result (\ref{eq-exactPlambda}) indeed resembles the joint probability distribution for the GUE ensemble. But we will now provide a more precise derivation in the large $N$ limit.

\subsubsection{Large-N expansion}
\label{sec: large-N}
In the above discussions, we were able to simplify the joint probability distribution, and in the limit $N\gg K$, found the first indication that a GUE probability distribution arises. To study the joint probability distribution in the large-$N$ limit, we will start from (\ref{eq-IntRep of delta}), and consider a large-$N$ expansion of the Haar integrals instead of the exact HCIZ integral. To achieve this, we introduce a scaled auxiliary matrix as 
\begin{equation}
    M'=N\tilde{M} \,,
\end{equation}
so that
\begin{equation}
    \mu(\lambda)=c_\Delta c_M\int \Delta^2_K(\lambda) e^{\frac{i}{N}\Tr(M'(PUOU^\dagger P-\tilde{V}\tilde{\Lambda}\tilde{V}^\dagger))}dM_KdU_NdV_K \,.
\end{equation}
Since $P\tilde{M}P=\tilde{M}$, we first focus on the $dU_N$ integral:
\begin{equation}
    I_3=\int e^{\frac{i}{N}\Tr(M'UOU^\dagger )}dU_N\,,
\end{equation}

We will consider the logarithm of $I_3$, which is essentially the cumulant generating function. The reason is that cumulants are ordered in the $1/N$ expansion, while, instead, moments would mix terms of different orders of $N$. We, therefore, focus on
\begin{equation}
    \log(I_3)=\frac{i}{N}\kappa_1-\frac{1}{2N^2}\kappa_2+\dots\,.
    \label{eq-cumulant generating}
\end{equation}
By computing each moment, we can find the cumulant through the cumulant-moment relation. By using the Weigarten calculus \cite{collins2003moments}, 
\begin{eqnarray}
    \kappa_1=m_1=\int \Tr(M'UOU^\dagger)dU=N\Tr(M)\bar{O}\,,
\end{eqnarray}
where $\bar{O}=\Tr(O)/N$. The second moment is
\begin{equation}
    m_2=\int \Tr(M'UOU^\dagger) \Tr(M'UOU^\dagger)dU\approx N^2 \Tr(M)^2 \bar{O}^2+N \Tr(M^2)\sigma_O^2\,,
\end{equation}
where we neglected the sub-leading terms of $\mathcal{O}(N^0)$, and $\sigma_O^2=\Tr(O^2)/N-\bar{O}^2$. The second cumulant is
\begin{equation}
    \kappa_2=m_2-m_1^2=N \Tr(M^2)\sigma_O^2\,.
\end{equation}
One can of course continue to check the third cumulant, which yields:
\begin{eqnarray}
    \kappa_3&=&\frac{1}{(N^2-1)(N^2-4)}2\left[N^2\Tr(O^3)-3N\Tr(O^2)\Tr(O)+2\Tr(O)^3\right]\\
    &&\times \left[N^2\Tr(M^3)-3N\Tr(M^2)\Tr(M)+2\Tr(M)^3\right]\,.
\end{eqnarray}
Counting the appropriate scalings, we find
\begin{equation}
    \mathcal{O}(\kappa_3)=\mathcal{O}(NK)+\mathcal{O}(K^2)+\mathcal{O}(K^3N^{-1})\,.
\end{equation}
Here, we assumed that the entries of $m$ are order-1 numbers, so $\Tr(M)\sim \Tr(M^2)\sim \mathcal{O}(K)$. Now, consider the next term in (\ref{eq-cumulant generating}), it scales as
\begin{equation}
    \frac{i}{3!N^3}\kappa_3 \sim \mathcal{O}(KN^{-2})+\mathcal{O}(K^2N^{-3})+\mathcal{O}(K^3N^{-4})\,,
\end{equation}
which can be neglected when $\alpha=K/N\ll 1$. Similar observations were made in \cite{Wang:2023qon}.

We can thus conclude that
\begin{equation}
    \log(I_3) \approx i \Tr(M)\bar{O}-\frac{1}{2N}\Tr(M^2)\sigma_O^2\,.
\end{equation}
Therefore, the scaling of $\log(I_3)$ is $\sim K+\alpha+O(\alpha^2)$. All the higher-order cumulants are subleading in the large-$N$ limit as long as $\alpha=K/N\ll1$.
The joint probability density can thus be written as
\begin{equation}
    \mu(\lambda)=c_\Delta c_M \Delta^2_K(\lambda)\int e^{i \Tr(M)\bar{O} -i\Tr(MV\Lambda V^\dagger)-\frac{1}{2N}\Tr(M^2)\sigma_O^2} dV_K dM_K\,.
\end{equation}

We then proceed by evaluating the $dM_K$ integral:

\begin{eqnarray}
    &&\int e^{i\Tr[M(\bar{O}\mathbb{1}_K-V\Lambda V^\dagger)]-\frac{1}{2N}\Tr(M^2)\sigma_O^2} dM_K\\
    &=& \int e^{ i\sum\limits_{i,j}M_{ij}(\bar{O}\mathbb{1}_K-V\Lambda V^\dagger)_{ji} -\frac{\sigma_O^2}{2N}\sum\limits_{i,j} M_{ij}M_{ji}}\prod^K_{i,j=1}dM_{ij}\\
    &=& \int e^{-\frac{\sigma_O^2}{2N} \sum\limits_{i,j} \left( M_{ij}M_{ji}-\frac{2iN}{\sigma_O^2} M_{ij}(\bar{O}\mathbb{1}_K-V\Lambda V^\dagger)\right)}\prod^K_{i,j=1}dM_{ij}\\
    &=& \int e^{-\frac{\sigma_O^2}{2N} \sum\limits_{i,j} 
    \left[\left(  M_{ij}-\frac{iN}{\sigma_O^2}(\bar{O}\mathbb{1}_K-V\Lambda V^\dagger)_{ij} \right) \left(  M_{ji}-\frac{iN}{\sigma_O^2}(\bar{O}\mathbb{1}_K-V\Lambda V^\dagger)_{ji} \right)+\frac{N^2}{\sigma_O^4} (V\Lambda V^\dagger-\bar{a}\mathbb{1})_{ij} (V\Lambda V^\dagger-\bar{O}\mathbb{1})_{ji}\right]
    }\prod^K_{i,j=1}dM_{ij}\nonumber\\
    &&\\
    &=& e^{-\frac{N}{2\sigma^2_O} \sum\limits_{i,j} (V\Lambda V^\dagger-\bar{O}\mathbb{1})_{ij} (V\Lambda V^\dagger-\bar{O}\mathbb{1})_{ji}}\int e^{-\frac{\sigma_O^2}{2N} \sum\limits_{i,j}\mathfrak{M}_{ij}\mathfrak{M}_{ji}}d \mathfrak{M}\\
    &=& 2^{K/2} \pi^{K^2/2} \left( \frac{N}{\sigma_O^2} \right)^{K^2/2} e^{-\frac{N}{2\sigma^2_O} \Tr\left[ (V\Lambda V^\dagger-\bar{O}\mathbb{1})^2\right]}\,,
\end{eqnarray}
where in the fifth line, we define $\mathfrak{M}=M-\frac{iN}{\sigma_O^2}(\bar{O}\mathbb{1}_K-V\Lambda V^\dagger)$ and $d\mathfrak{M}=dM$. Notice that
\begin{equation}
    \Tr\left[ (V\Lambda V^\dagger-\bar{O}\mathbb{1})^2\right]=\Tr\left[ (V(\Lambda -\bar{O}\mathbb{1})V^\dagger)^2\right]=\Tr\left[ (\Lambda -\bar{O}\mathbb{1})^2\right]\,.
\end{equation}
Therefore, the $dM_K$ integral yields
\begin{equation}
    2^{K/2} \pi^{K^2/2} \left( \frac{N}{\sigma_O^2} \right)^{K^2/2} e^{-\frac{N}{2\sigma^2_O} \sum\limits_i (\lambda_i-\bar{O})^2}\,.
\end{equation}
The probability density is independent of the Haar integral $dV_K$. Setting $N=K/\alpha$ and writing constants $c_\Delta$ and $c_M$ explicitly, the joint probability density is
\begin{equation}
    \mu(\lambda)=\frac{1}{(2\pi)^{K/2} \prod^K_{j=0} j!}\Delta^2_K(\lambda)e^{-\frac{K}{2\alpha\sigma^2_O} \sum\limits_i (\lambda_i-\bar{O})^2}= \mu(\text{GUE})_{\sigma_O, \bar{O}}\,.
    \label{eq-jpdf in LargeN}
\end{equation}

This is one of our main results. This exactly reproduces the normalized joint probability distribution of GUE (see e.g. \cite{mehta2004random}), with a shifted center. It follows from the above expression that, in the large-$N$ limit with $\alpha =K/N$ fixed and in the small $\alpha$ limit, the joint probability density of eigenvalues of $PUAU^\dagger P$ reduces to that of the Gaussian unitary ensemble upon a truncation at quadratic order in $\alpha$. Equipped with these, one can easily derive the density of states following standard procedures (i.e., from \cite{mehta2004random}):
\begin{equation}
    \boxed{\rho(\lambda) \xrightarrow[\alpha\ll 1]{N\gg1,~K\gg 1} \frac{1}{2\pi\sigma_O^2\alpha} \sqrt{4\sigma_O^2\alpha-(\lambda-\bar{O})^2}}\,.
\end{equation}
The emergence of the semicircle law should be understood as a convergent result iff all three limits are met.  

\subsubsection{Universality in the two-point eigenvalue correlation}
Knowing from the discussions above that the semicircle law would only show up if all three conditions are satisfied, a natural question to ask is whether these conditions are also necessary for universal correlations in the spectrum. We will see in the numerics in Section \ref{sec: Numerics} that the answer seems to be no: we seem to have universality in the spectral form factor even at large values of $\alpha$ (even surprisingly close to 1).

We will now try to heuristically explain this behavior. First, recall the joint eigenvalue distribution (\ref{eq-jpdf in LargeN}) in the large-$N$ limit. During the derivation, we did not assume any particular property on the spectrum of  $O$. Therefore, such a probability density can be regarded as a universal result as long as we are in the large-$N$ and small-$\alpha$ limit. For such a Gaussian potential $V(\lambda)\propto \sum \lambda^2$, there exist orthogonal polynomials $P_n(\lambda)$ such that
\begin{equation}
    \int P_n(\lambda)P_m(\lambda)e^{-K V(\lambda)}d\lambda=\delta_{nm}\,.
\end{equation}
One can define the function
\begin{equation}
    \psi_n(\lambda)=P_n(\lambda) e^{-K V(\lambda)/2}\,.
\end{equation}
The kernel, which is the square root of the connected part of the two-point spectral correlation function, is given as
\begin{equation}
    \mathrm{Ker}(\lambda_1,\lambda_2)=\frac{1}{K} \sum^{K}_{i=1} \psi_i(\lambda_1)\psi_i(\lambda_2)\,.
\end{equation}
This can be further simplified using the Christoffel-Darboux identity:
\begin{equation}
    \mathrm{Ker}(\lambda_1,\lambda_2)\propto \frac{1}{K} \frac{\psi_K(\lambda_1)\psi_{K-1}(\lambda_2)-\psi_{K-1}(\lambda_1)\psi_{K}(\lambda_2)}{\lambda_1-\lambda_2}\,.
\end{equation}
In the case of a Gaussian potential, the orthogonal polynomials are usually taken to be the Hermite polynomials. Generally, this kernel can be captured by the solution $Y_n$ that solves the Riemann-Hilbert problem \cite{bleher2011lectures}:
\begin{equation}
    \mathrm{Ker}(\lambda_1,\lambda_2)= \frac{e^{-K (V(\lambda_1)+V(\lambda_2)/2)}}{2\pi i (\lambda_1-\lambda_2)} 
    \begin{pmatrix}
        0&&1
    \end{pmatrix}
    Y^{-1}_{K+}(\lambda_2) Y_{K+}(\lambda_1)
    \begin{pmatrix}
        0 \\1
    \end{pmatrix}\,.
\end{equation}
The kernel is shown to be only a function of $\lambda_1,\lambda_2$ and the density of state $\rho(\lambda)$:
\begin{equation}
    \mathrm{Ker}(\lambda_1,\lambda_2)=\frac{1}{2\pi i (\lambda_1-\lambda_2)} \left( e^{\pi i K \int^{\lambda_1}_{\lambda_2} \rho(x)dx} -e^{-\pi i K \int^{\lambda_1}_{\lambda_2} \rho(x)dx}\right)\,.
    \label{eq-kernel}
\end{equation}
Let us consider that the density of states for the projected matrix is smooth. The distance between eigenvalues is $|\lambda_1-\lambda_2|\sim1/K$. Therefore, the density of states within this range can be assumed as a constant $\rho(\bar{\lambda})$ with $\bar{\lambda}=(\lambda_1+\lambda_2)/2$. The integral on the exponential of (\ref{eq-kernel}) can be approximated by
\begin{equation}
    \int^{\lambda_1}_{\lambda_2} \rho(x)dx\approx \rho(\bar{\lambda})(\lambda_1-\lambda_2)\,.
    \label{eq-dof approx}
\end{equation}
To consider the rescaled eigenvalues, we can zoom into the neighborhood of any bulk point (i.e. away from the edge of the spectrum) $\lambda_0$ and redefine:
\begin{equation}
    \lambda_1 \rightarrow\lambda_0+\frac{\lambda_1}{K\rho(\lambda_0)},~~~\lambda_2 \rightarrow\lambda_0+\frac{\lambda_2}{K\rho(\lambda_0)}\,,
\end{equation}
where we assume $\rho(\bar{\lambda})\approx \rho(\lambda_0)$, which is valid as long as the density of states is approximately constant in a small range of eigenvalues.
The kernel can therefore be derived as
\begin{equation}
    \lim_{K,N\rightarrow \infty} \frac{1}{K\rho(\lambda_0)}\mathrm{Ker}(\lambda_0+\frac{\lambda_1}{K\rho(\lambda_0)},\lambda_0+\frac{\lambda_2}{K\rho(\lambda_0)} )=\frac{\sin(\pi(\lambda_1-\lambda_2))}{\pi(\lambda_1-\lambda_2)}\,,
\end{equation}
which leads to the linear ramp and plateau in the connected spectral form factor. Although this sine kernel is well-known for the Gaussian potential \cite{Dyson:1970tza}, it is also valid for a general potential $V(\lambda)$ \cite{bleher2011lectures, Brezin:1993qg, bleher1999semiclassical}. 

Let us quickly recall the derivation process of the sine kernel. The necessary conditions are a well-defined potential $V(\lambda)$ in the polynomial form that admits orthogonal polynomials, and an approximation (\ref{eq-dof approx}) of the density of states. Now, consider relaxing the constraints for the projection matrix so that $\alpha$ can be any finite value between $(0,1)$ while keeping $K$ and $N$ large. The large-$N$ expansion in Section \ref{sec: large-N} would lead to a potential with higher power terms, beyond the quadratic order that we kept at small $\alpha$. However, this should not change the sine kernel, as we just discussed in this section (assuming the projected matrix has a smooth $\rho$), as well as the level repulsion, which is fully determined by the Vandermonde determinant and is independent of the potential. We can therefore conclude the universal properties of the projected matrix
\begin{equation}
   \boxed{ \mathrm{Two~point~ correlation~ function ~of ~}PUOU^\dagger P \xrightarrow[K<N, \mathrm{smooth}~ \rho]{N\gg1,~K\gg 1} \mathrm{sine ~kernel}}\,,
\end{equation}

\begin{equation}
    \boxed{\mathrm{Nearest~ level ~spacing ~of ~}PUOU^\dagger P \xrightarrow[K<N]{N\gg1,~K\gg 1} \mathrm{level~ repulsion~} P(s=0)=0}\,.
\end{equation}

Finally, let us discuss what type of spectrum $f(O)$ would violate these universal properties. The only case we imagine for it to happen is that the input matrix $O$ has a non-smooth density of states and the projected matrix inherits such a non-smoothness \footnote{For a non-smooth function, we mean the non-continuity in its derivatives, especially, in our consideration where there exists a gap or jump in its first derivative.}. This would not happen in the small $\alpha$ limit due to the central limit theorem we derived. But it is possible that the projected matrix will show such an inheritance of non-smoothness as $\alpha$ increases, depending on the properties of the eigenvalues of $O$. For example, in the case of a spin-$1/2$ operator, it is clear that we have non-smoothness when $\alpha>1/2$, and at $\alpha>2/3$ for a spin-$1$ operator. The violation of the smoothness condition clearly spoils our approximation (\ref{eq-dof approx}). The non-continuity in the first derivative of $\rho$ indicates a sudden change of the eigenvalues, which fails to satisfy the assumption that the density of states is a constant in that range of eigenvalues. This leads to a complicated form of the connected piece, different from the sine kernel, resulting in non-typical behavior for the SFF. This explains the SFF plots in Figure \ref{fig:spin_stats2}, when $\alpha>1/2$. For $\alpha<1/2$, but still not parametrically small, there is also clearly a deviation. It would be interesting to see exactly how this happens, given the framework of this section. It is likely that the coefficients in the potential start becoming fine-tuned in such a way that there is a deviation from the sine kernel.

\section{Numerical examples of the projected operators}
\label{sec: Numerics}

In this section, we will study the eigenvalue distribution for the projected operators numerically. More specifically, we will focus on the following quantities:
\begin{itemize}
    \item \textbf{The normalized level density of states}, which is the one-point distribution of the eigenvalues.
    \item \textbf{The normalized level spacing distribution}, which is obtained by ordering the eigenvalues $\lambda_1<\lambda_2<\ldots<\lambda_K$ and looking at the distribution of the jumps $S_i=\lambda_{i+1}-\lambda_i$ between neighboring eigenvalues, normalized by the mean spacing: $s_i=\frac{S_i}{\langle S \rangle}$.
    \item \textbf{The connected spectral form factor}
    \begin{align}
        \text{SFF}(t)=\frac{1}{K^2}\left\{\mathbb{E}\left[\Tr\,e^{iO_Kt}\, \Tr\,e^{-iO_Kt} \right] - \mathbb{E}\left[\Tr\,e^{iO_Kt}\right] \mathbb{E}\left[ \Tr\,e^{-iO_Kt} \right] \right\}\,,
    \end{align}
    where $\mathbb{E}[\cdot]$ stands for the ensemble averaging. Here, the ensemble averaging will be over the Haar ensemble of unitaries.
\end{itemize}

In the following numerical analysis, we will always restrict $O_K$ to the projected subspace in order to get rid of the trivial $N-K$ zero eigenvalues. We choose the operator $O$ of size $1000$ before projection and average over $1000$ extractions of random Haar unitaries. We have considered two typical examples of operator spectra: one with a degenerate spectrum and one with a randomly drawn spectrum. The former case ceases to show RMT behaviors after the transition point where degenerate eigenvalues are recovered.\footnote{There are still some remnants of RMT in this case, but only for a smaller subset of eigenvalues.}. The latter case surprisingly shows RMT behavior for the two-point correlation, at essentially any finite $\alpha<1$.

\subsection{The projection of a spin-$1/2$ operator}
First, we deal with the simplest example, which has been analyzed by Srednicki and Iniguez \cite{SredIni}, but only at the level of the density of states (i.e. for the one-point function). Here we will both summarize their findings and also add our results about spectral correlations. Consider $O$ to be a single spin-$1/2$ operator (this is a standard probe in lattice spin systems).\footnote{To avoid factors of $1/2$, we actually take $O=2\sigma_z$.} As we mentioned above, we will work with $\dim \mathscr{H}=1000$. The operator, written in its eigenbasis, then reads
\begin{align}
    O=
    \begin{pmatrix}
        \mathbb{I}_{\frac{N}{2}} & 0_{\frac{N}{2}}\\
        0_{\frac{N}{2}} & -\mathbb{I}_{\frac{N}{2}}
    \end{pmatrix} \,.
\end{align}
Such an operator has two eigenvalues $\{+1,-1\}$, each with multiplicity $\mu\left( \pm1\right)=N/2$.

We now rotate and then project $O$ as in (\ref{trunc_op}), and examine the eigenvalues for different values of $K$. For this case, it is useful to split the problem into two different ranges of $\alpha$.  

\subsubsection*{Case $\alpha\leq \frac{1}{2}$}

\begin{figure}[htbp]
    \centering
    \begin{subfigure}[b]{0.3\textwidth}
        \includegraphics[width=0.85\textwidth]{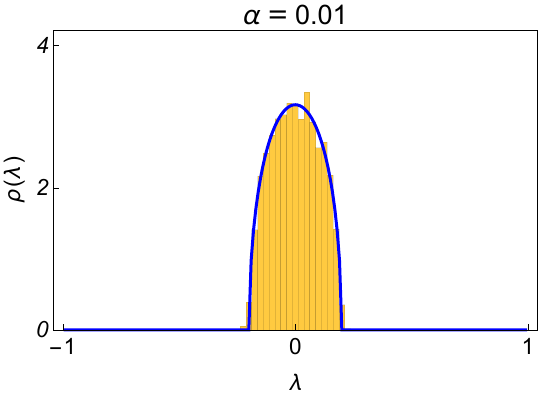}
        \caption{}
    \end{subfigure}
    \hfill
    \begin{subfigure}[b]{0.3\textwidth}
        \includegraphics[width=\textwidth]{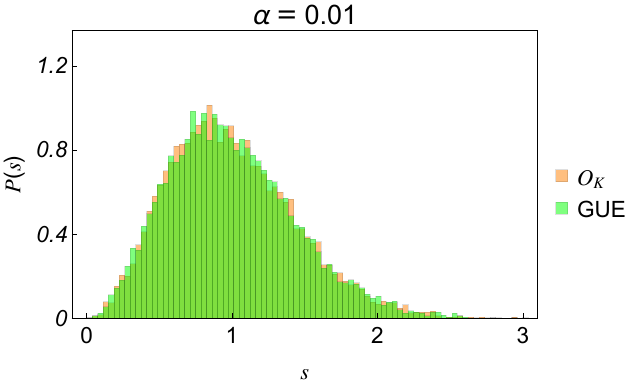}
        \caption{}
    \end{subfigure}
    \hfill
    \begin{subfigure}[b]{0.3\textwidth}
        \includegraphics[width=1.1\textwidth]{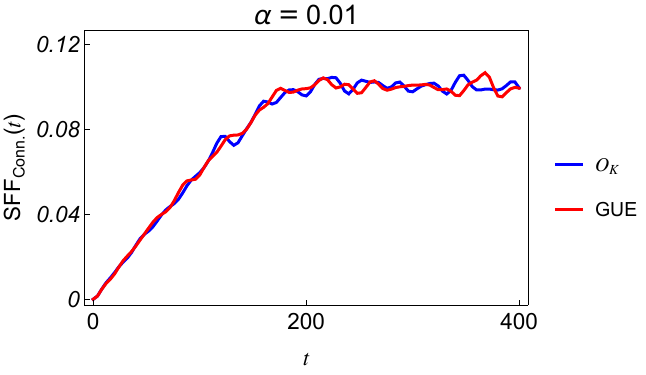}
        \caption{}
    \end{subfigure}
    
    \medskip

    \begin{subfigure}[b]{0.3\textwidth}
        \includegraphics[width=0.85\textwidth]{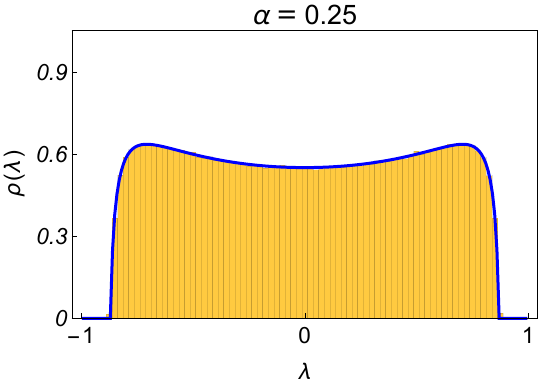}
        \caption{}
    \end{subfigure}
    \hfill
    \begin{subfigure}[b]{0.3\textwidth}
        \includegraphics[width=\textwidth]{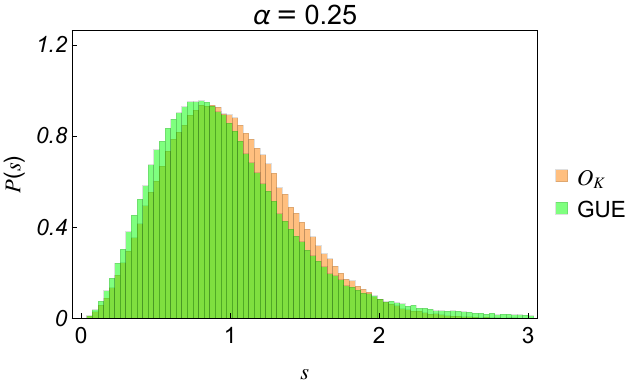}
        \caption{}
    \end{subfigure}
    \hfill
    \begin{subfigure}[b]{0.3\textwidth}
        \includegraphics[width=1.1\textwidth]{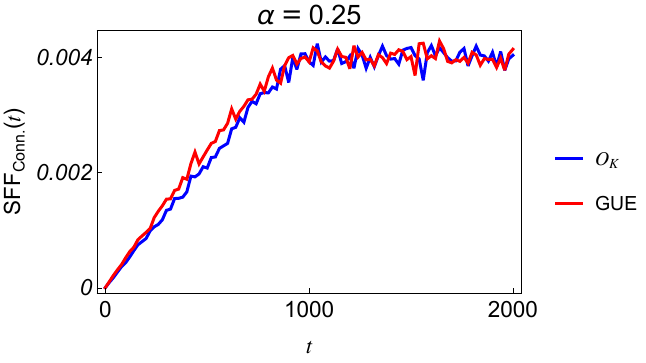}
        \caption{}
    \end{subfigure}
    
    \medskip
    
    \begin{subfigure}[b]{0.3\textwidth}
        \includegraphics[width=0.85\textwidth]{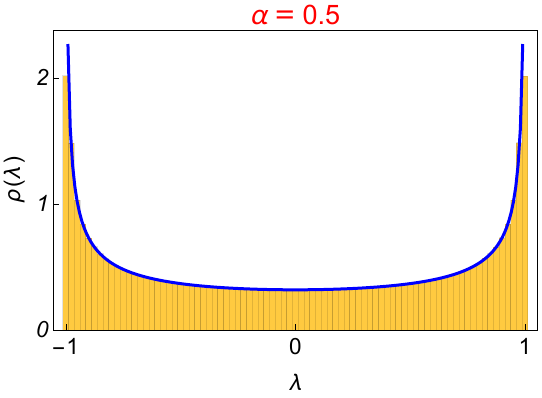}
        \caption{}
    \end{subfigure}
    \hfill
    \begin{subfigure}[b]{0.3\textwidth}
        \includegraphics[width=\textwidth]{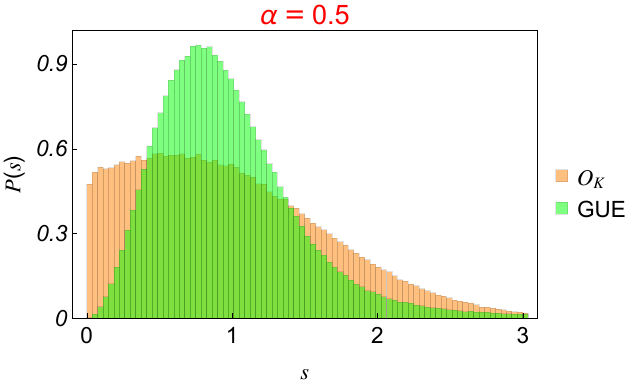}
        \caption{}
    \end{subfigure}
    \hfill
    \begin{subfigure}[b]{0.3\textwidth}
        \includegraphics[width=1.1\textwidth]{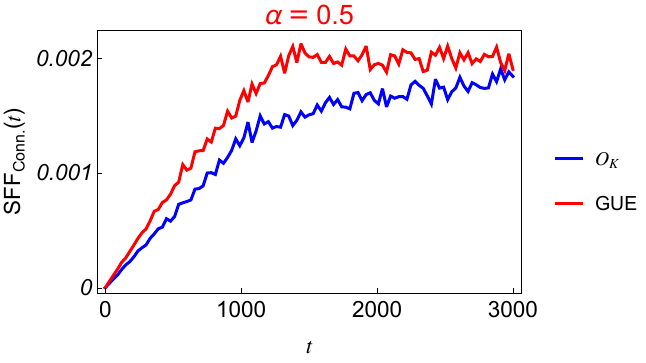}
        \caption{}
    \end{subfigure}
    
    \caption{Different statistics for the projected spin-$1/2$ operator for different values of $\alpha\leq\frac{1}{2}$. Left column: the numerically normalized level density (yellow), the theoretical prediction (blue) of (\ref{Spin_trunc}). Central column: the normalized spacing distribution of $O_K$ (orange), and of the GUE (green). Right column: the connected SFF for $O_K$ (blue), and for the GUE (red).}
    \label{fig:spin_stats}
\end{figure}
In Figure \ref{fig:spin_stats} we show the numerical results for $\alpha \leq\frac{1}{2}$. Let us first focus on the level density. In Sections \ref{sec: DOS from FPT} and \ref{sec: FreeCompression}, we use two different methods from free probability to prove that, in the limit $N\to\infty$ and $K\to\infty$ with $\alpha$ finite,
\begin{align}\label{Spin_trunc}
    \rho(\lambda)=\frac{\sqrt{4\alpha(1-\alpha)-\lambda^2}}{2\pi\alpha(1-\la^2)},\qquad \alpha\leq\frac{1}{2} \,.
\end{align}
These results were already obtained in \cite{collins2005product, SredIni}.
\eqref{Spin_trunc} is plotted as a solid blue line in the left column of Figure \ref{fig:spin_stats}, and we can see that it closely matches the numerically computed distribution.

When $\alpha\ll 1$, the distribution (\ref{Spin_trunc}) can be approximated as the Wigner semicircle
\begin{align}\label{Small_alpha}
    \rho_{\alpha\ll 1}\left( \lambda\right) \simeq \frac{1}{2\pi\alpha}\sqrt{4\alpha-\lambda^2} \,.
\end{align}
This is the regime where the analytic results of the previous section apply, and it is no surprise that we find the semicircle, since it is the density of states of the GUE ensemble. For comparison, we consider a Wigner semicircle distribution for a reference $K\times K$ GUE Hermitian random matrix $\mathbf{H}$ with $\E[\mathbf{H}_{ij}]=0$ and $\E\left[| \mathbf{H}_{ij}|^2\right]=\sigma^2$, which reads
\begin{align}\label{Wigner}
     \rho_W\left( \lambda\right) = \frac{1}{2\pi K\sigma^2}\sqrt{4K\sigma^2-\lambda^2}\,.
\end{align}
This implies a substitution of $\sigma^2=1/N$ for the variance of the GUE matrix elements by directly matching (\ref{Small_alpha}) and  (\ref{Wigner}).

The GUE behavior for $O_K$ at small $\alpha$ is also evident at the level of spectral correlations. To see this, we compared the level spacing distributions of $O_K$ and of $\mathbf{H}$. As can be seen in the central column of Figure \ref{fig:spin_stats}, for $\alpha\ll 1$ the two distributions appear to be the same. The connected SFFs plotted in the right column also perfectly match the prediction from the GUE. 

When $\alpha$ gets bigger, the various statistics begin to differ from the GUE case. We can see analytically that the distribution (\ref{Spin_trunc}) deviates from the Wigner semicircle by studying its second derivative
\begin{align}
    \rho''\left(0 \right)=\frac{2\sqrt{\alpha\left(1-\alpha\right)}}{\pi\alpha}-\frac{1}{4\pi\alpha\sqrt{\alpha\left(1-\alpha \right)}}\,,
\end{align}
which is negative for $\alpha<\alpha_c=\left(2-\sqrt{2} \right)/4$ but turns positive for $\alpha>\alpha_c$. When increasing $\alpha$, something peculiar happens at $\alpha=1/2$: the eigenvalue distribution becomes the famous arcsine distribution \footnote{As we will show in Section \ref{sec: freecompression for spinhalf}, this distribution is essentially the result of free probability using free compression. It is a distribution of the sum of two free spin-$1/2$ operators.}
\begin{align}
    \rho_{\alpha=1/2}(\lambda)=\frac{1}{\pi \sqrt{1-\lambda^2}}
\end{align}
with integrable singularities at $\lambda=\pm1$ (see (g) in Figure \ref{fig:spin_stats}). Even more interesting is the distribution of the level spacings at $\alpha=1/2$. As we can see in subfigure (h), the distribution differs significantly from the Wigner-Dyson distribution for the GUE, and starts to show level clustering (exactly at $\alpha=1/2$, $P(s=0)$ still vanishes so level repulsion is maintained but at threshold). The SFF, on the other hand, is more interesting. It appears to have a ramp with a smaller slope, followed by a second ramp with a yet smaller slope, before reaching the plateau.

\subsubsection*{Case $\alpha > \frac{1}{2}$}

When the truncation goes beyond half of the full operator, there is an interesting behavior of the spectrum: as we can see in the left column of Figure \ref{fig:spin_stats2}, the truncated matrix $O_K$ starts to recover the eigenvalues $\lambda=\pm1$ of the original $N\times N$ operator $O$. In particular, for every value of $K>N$ the degeneracy of such eigenvalues is exactly $K-N$, for each of them. The general proof of eigenvalue recovery is given in Appendix \ref{App: Eigenvalue Recovery}. In addition to this deterministic discrete spectrum, we have a continuous spectrum that resembles that of the range $\alpha<1/2$: its distribution is indeed given by (\ref{Spin_trunc}) modulo a rescaling $\alpha\rightarrow1-\alpha$:
\begin{align}\label{Spin_trunc2nd}
    \rho\left( \lambda\right)=\frac{\sqrt{4\alpha\left(1-\alpha\right)-\lambda^2}}{2\pi(1-\alpha)\left(1-\lambda^2\right)} \,.
\end{align}

\begin{figure}[htbp]
    \centering

    \begin{subfigure}[b]{0.3\textwidth}
        \includegraphics[width=0.85\textwidth]{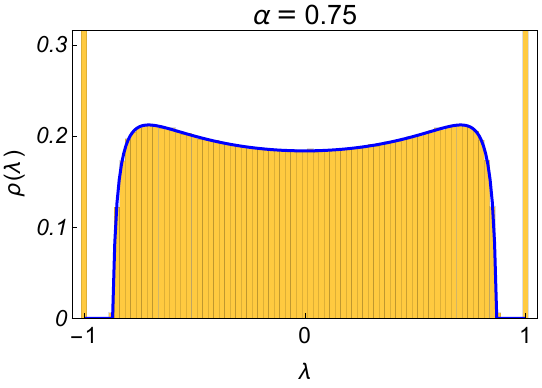}
        \caption{}
    \end{subfigure}
    \hfill
    \begin{subfigure}[b]{0.3\textwidth}
        \includegraphics[width=\textwidth]{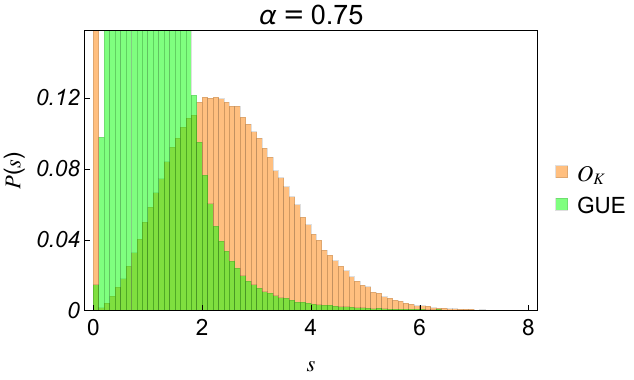}
        \caption{}
    \end{subfigure}
    \hfill
    \begin{subfigure}[b]{0.3\textwidth}
        \includegraphics[width=1.1\textwidth]{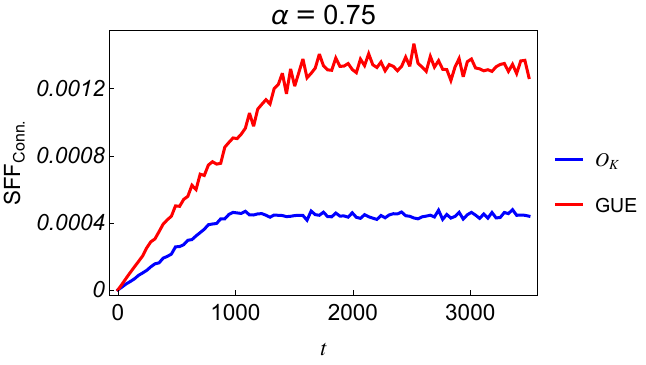}
        \caption{}
    \end{subfigure}
    
    \medskip

    \begin{subfigure}[b]{0.3\textwidth}
        \includegraphics[width=0.87\textwidth]{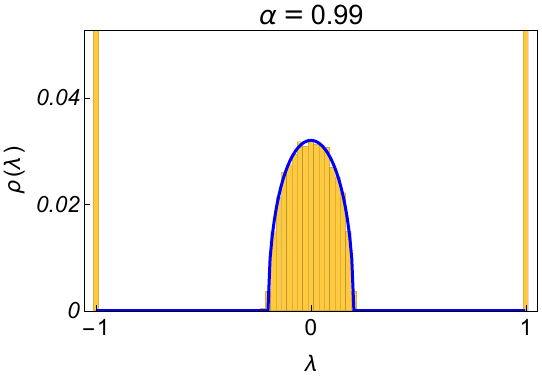}
        \caption{}
    \end{subfigure}
    \hfill
    \begin{subfigure}[b]{0.3\textwidth}
        \includegraphics[width=1.05\textwidth]{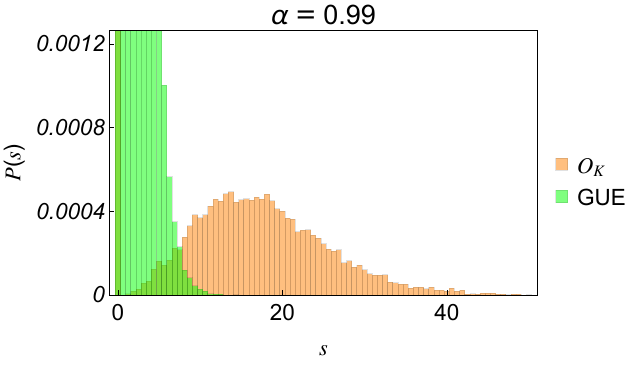}
        \caption{}
    \end{subfigure}
    \hfill
    \begin{subfigure}[b]{0.3\textwidth}
        \includegraphics[width=1.1\textwidth]{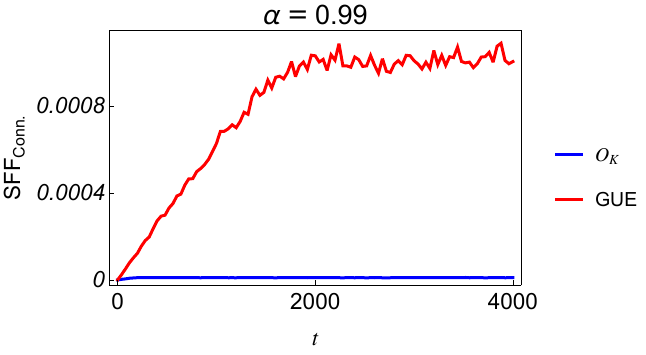}
        \caption{}
    \end{subfigure}
    
    \medskip
    
    \begin{subfigure}[b]{0.3\textwidth}
        \includegraphics[width=0.85\textwidth]{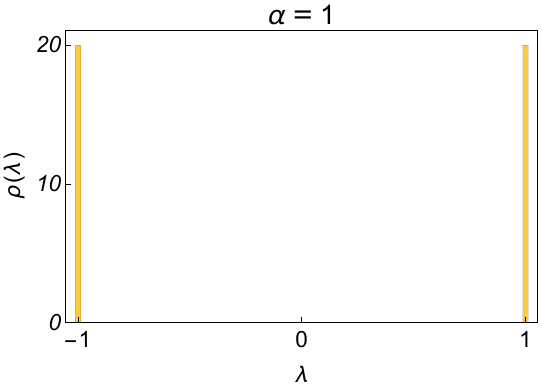}
        \caption{}
    \end{subfigure}
    \hfill
    \begin{subfigure}[b]{0.3\textwidth}
        \includegraphics[width=\textwidth]{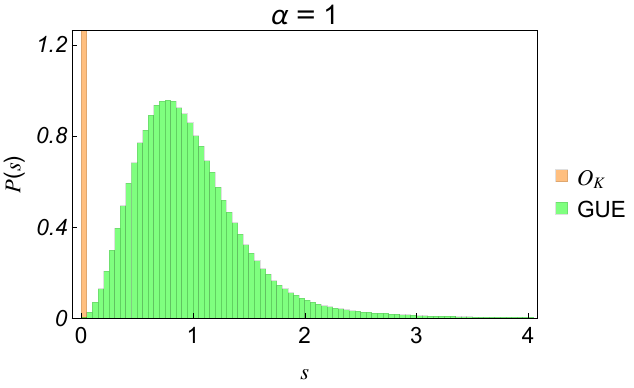}
        \caption{}
    \end{subfigure}
    \hfill
    \begin{subfigure}[b]{0.3\textwidth}
        \includegraphics[width=1.1\textwidth]{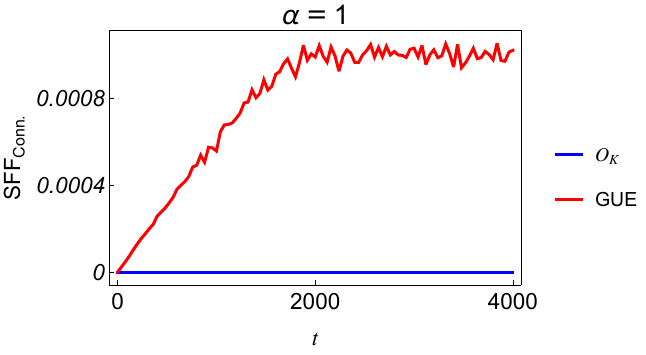}
        \caption{}
    \end{subfigure}
    \caption{Different statistics for the projected spin-$1/2$ operator for different values of $\alpha>\frac{1}{2}$. Left column: the numerically normalized level density (yellow), the theoretical prediction from (\ref{Spin_trunc2nd}) (also from free compression in Section \ref{sec: FreeCompression} ). Central column: the normalized spacing distribution of $O_K$ (orange), and of the GUE (green). Right column: the connected SFF for $O_K$ (blue), and for the GUE (red)}
    \label{fig:spin_stats2}
\end{figure}

In the left column of Figure \ref{fig:spin_stats2}, the density of states is again shown as a solid blue line.
There are algebraic reasons for the behavior of the eigenvalues in the $\alpha>1/2$ range, which we will come back to shortly. Authors in \cite{SredIni} suggest that the value $\alpha=1/2$ might have something special, because at that point there is a sort of ``phase transition" from a continuous spectrum to a superposition of a continuous spectrum and a discrete one. What we will find in the next paragraphs is that $\alpha=1/2$ is indeed special, but its value is not universal: a different operator $O$ may have such a transition at a different value of $\alpha$; it also may have more than one value of $\alpha$ for which similar transitions occur.

If we look at the spacing statistics in the central column in Figure \ref{fig:spin_stats2}, we see a severe deviation from the GUE. This time we have trivial spacings which are exactly zero, due to large degeneracies at $\pm1$. In addition to that, we have a continuous spacing distribution that recovers the property of eigenvalue repulsion. This means that the eigenvalues of the continuous part of the spectrum still have chaotic properties. Notice that the SFFs also exhibit large deviations from the GUE, despite still showing the ramp-plateau behavior. We expect that throwing out the degenerate eigenvalues and rescaling would lead to a standard ramp-plateau compatible with the GUE.

The last thing to mention is that, for $\alpha=1$, the distributions are trivial, since $O_K$ coincides with $O$.

\subsection{The projection of an operator with non-degenerate eigenvalues}
We will now consider an operator $O$ with a non-degenerate spectrum. To build such an operator, we can extract the $N$ eigenvalues independently from a probability distribution function $\mu(\la)$, so that the probability of having equal eigenvalues is zero, even though there is no eigenvalue repulsion since the $N$ extractions are uncorrelated. Take for example $\mu$ to be the flat distribution in the interval $[0,1]$:
\begin{align}
    \mu(\lambda)=
    \begin{cases}
        1,\qquad &\lambda\in[0,1]\\
        0, &\text{otherwise}
    \end{cases}\,.
\end{align}

\begin{figure}[htbp]
    \centering
    
    \begin{subfigure}[b]{0.3\textwidth}
        \includegraphics[width=0.85\textwidth]{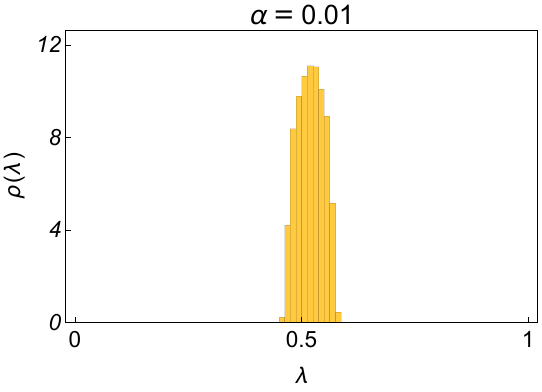}
        \caption{}
    \end{subfigure}
    \hfill
    \begin{subfigure}[b]{0.3\textwidth}
        \includegraphics[width=\textwidth]{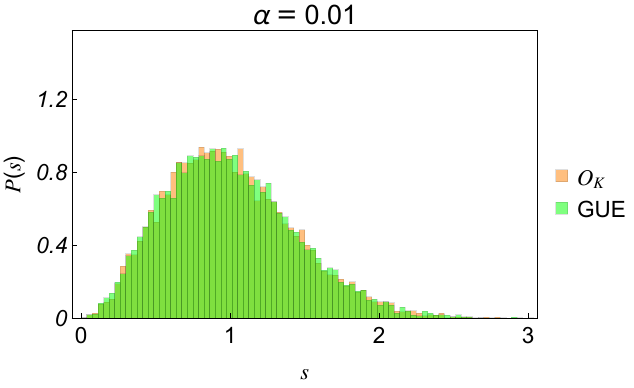}
        \caption{}
    \end{subfigure}
    \hfill
    \begin{subfigure}[b]{0.3\textwidth}
        \includegraphics[width=1.1\textwidth]{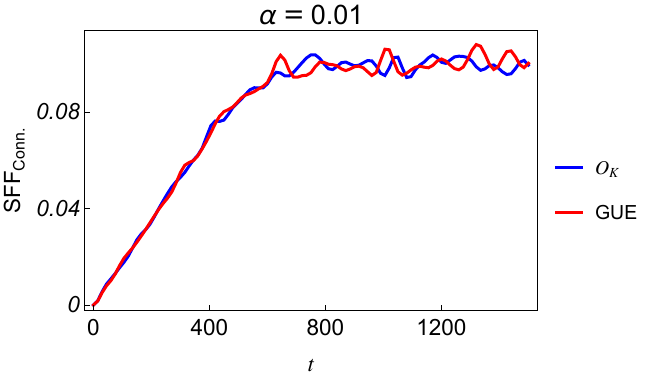}
        \caption{}
    \end{subfigure}
    
    \medskip
    
    \begin{subfigure}[b]{0.3\textwidth}
        \includegraphics[width=0.85\textwidth]{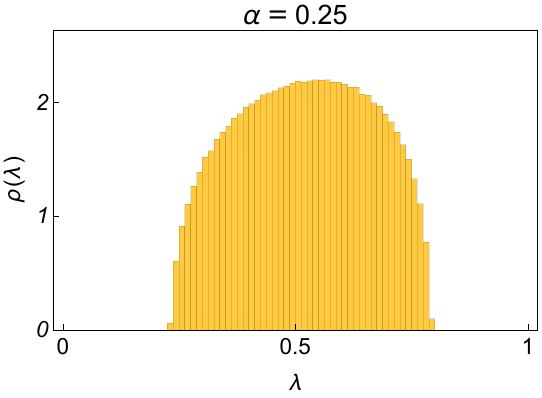}
        \caption{}
    \end{subfigure}
    \hfill
    \begin{subfigure}[b]{0.3\textwidth}
        \includegraphics[width=\textwidth]{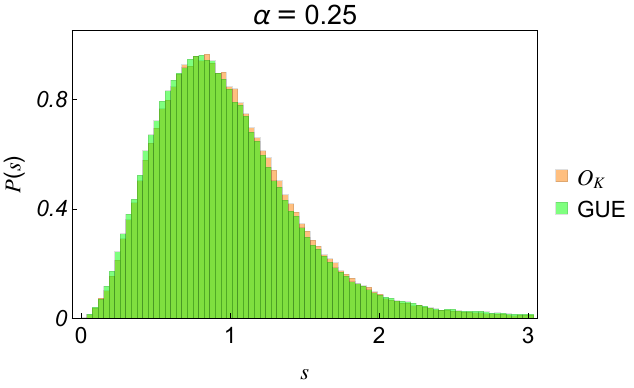}
        \caption{}
    \end{subfigure}
    \hfill
    \begin{subfigure}[b]{0.3\textwidth}
        \includegraphics[width=1.1\textwidth]{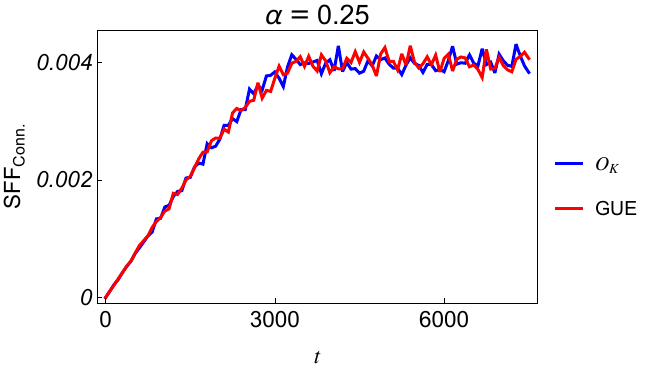}
        \caption{}
    \end{subfigure}
    
    \medskip
    
    \begin{subfigure}[b]{0.3\textwidth}
        \includegraphics[width=0.85\textwidth]{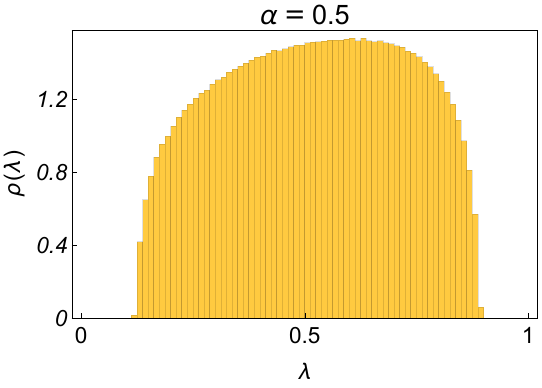}
        \caption{}
    \end{subfigure}
    \hfill
    \begin{subfigure}[b]{0.3\textwidth}
        \includegraphics[width=\textwidth]{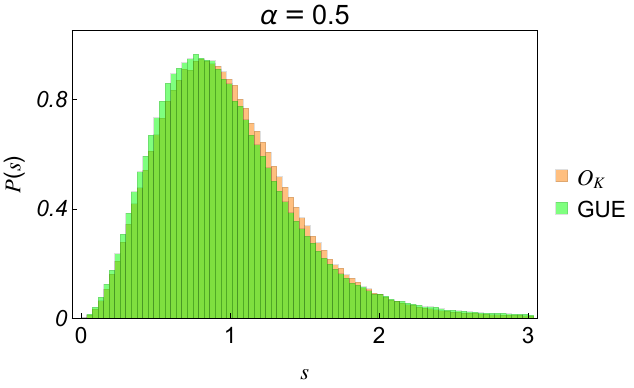}
        \caption{}
    \end{subfigure}
    \hfill
    \begin{subfigure}[b]{0.3\textwidth}
        \includegraphics[width=1.1\textwidth]{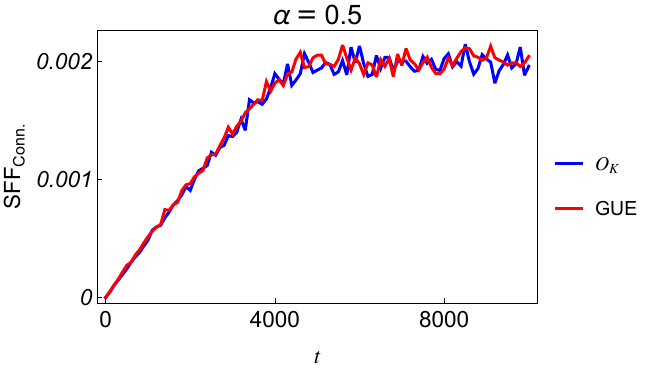}
        \caption{}
    \end{subfigure}
    
    \caption{Different statistics in the case of a non-degenerate spectrum, for various values of $\alpha\leq\frac{1}{2}$. Left column: the normalized level density (yellow). Central column: the normalized spacing distribution of $O_K$ (orange), and of the GUE (green). Right column: the connected SFF for $O_K$ (blue), and for the GUE (red). There is no eigenvalue recovery in this case.}
    \label{fig:nondeg}
\end{figure}

\begin{figure}[htbp]
    \centering
    
    \begin{subfigure}[b]{0.3\textwidth}
        \includegraphics[width=0.85\textwidth]{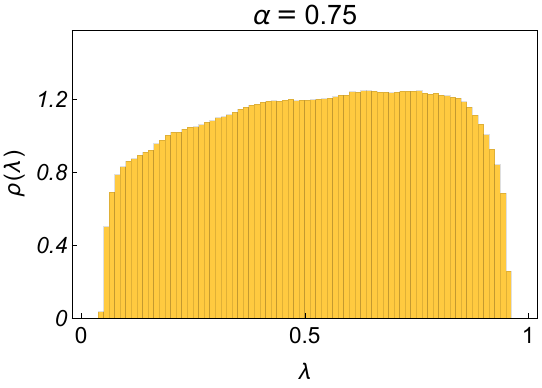}
        \caption{}
    \end{subfigure}
    \hfill
    \begin{subfigure}[b]{0.3\textwidth}
        \includegraphics[width=\textwidth]{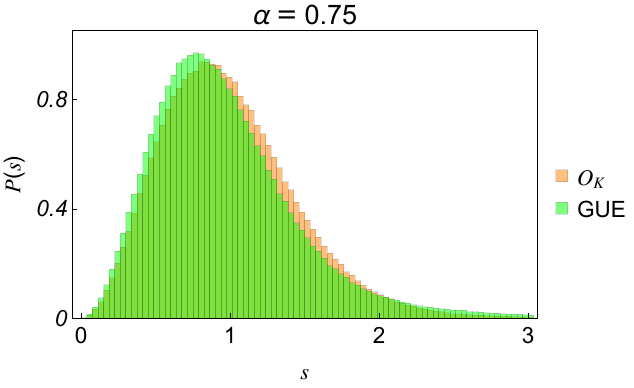}
        \caption{}
    \end{subfigure}
    \hfill
    \begin{subfigure}[b]{0.3\textwidth}
        \includegraphics[width=1.1\textwidth]{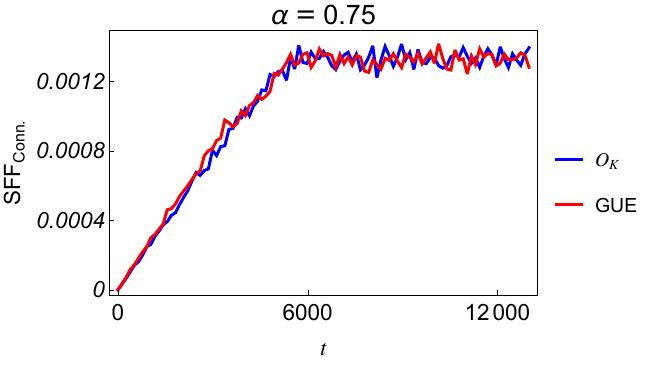}
        \caption{}
    \end{subfigure}
    
    \medskip
    
    \begin{subfigure}[b]{0.3\textwidth}
        \includegraphics[width=0.85\textwidth]{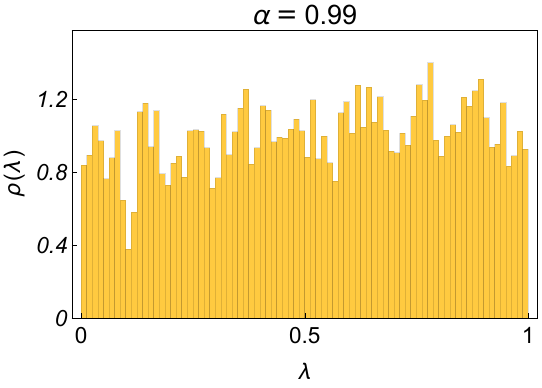}
        \caption{}
    \end{subfigure}
    \hfill
    \begin{subfigure}[b]{0.3\textwidth}
        \includegraphics[width=\textwidth]{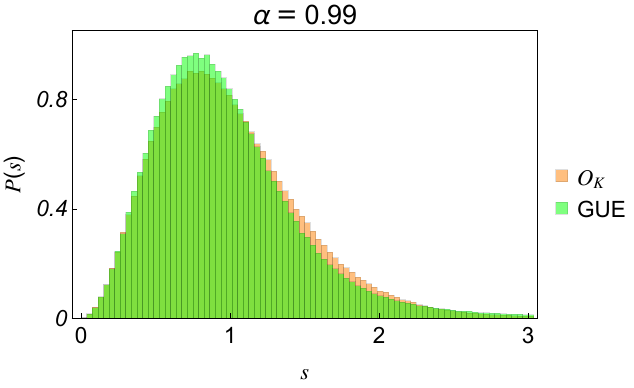}
        \caption{}
    \end{subfigure}
    \hfill
    \begin{subfigure}[b]{0.3\textwidth}
        \includegraphics[width=1.1\textwidth]{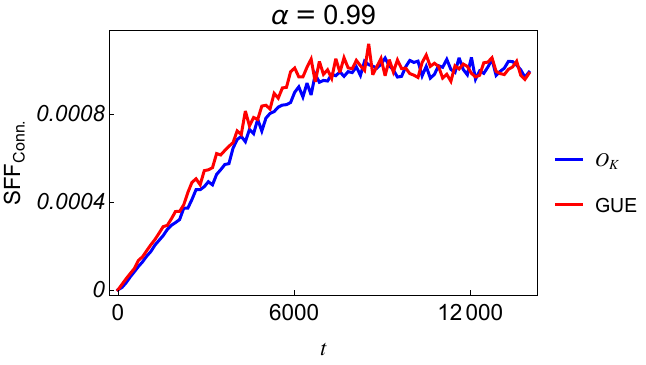}
        \caption{}
    \end{subfigure}
    
    \medskip
    
    \begin{subfigure}[b]{0.3\textwidth}
        \includegraphics[width=0.85\textwidth]{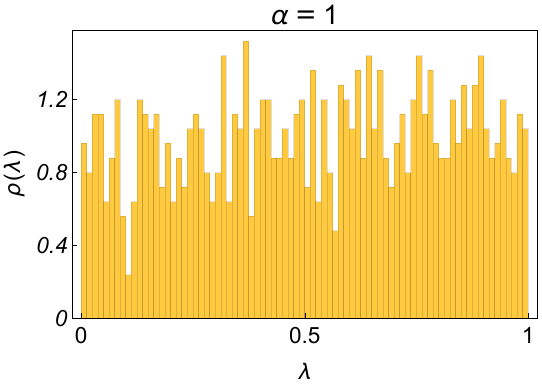}
        \caption{}
    \end{subfigure}
    \hfill
    \begin{subfigure}[b]{0.3\textwidth}
        \includegraphics[width=\textwidth]{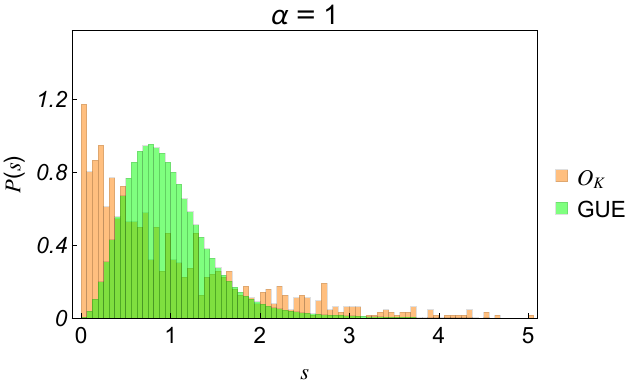}
        \caption{}
    \end{subfigure}
    \hfill
    \begin{subfigure}[b]{0.3\textwidth}
        \includegraphics[width=1.1\textwidth]{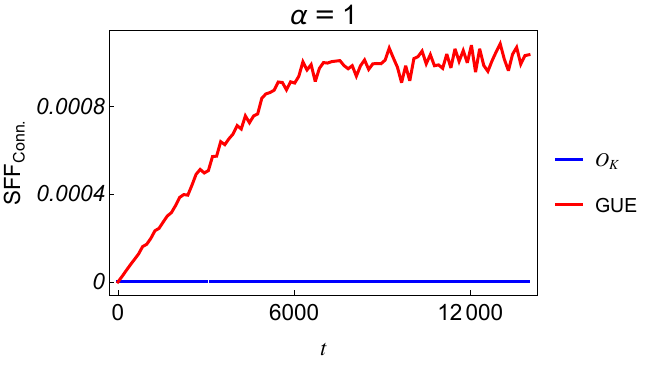}
        \caption{}
    \end{subfigure}
    \caption{Different statistics in the case of a non-degenerate spectrum, for various values of $\alpha>\frac{1}{2}$. Left column: the normalized level density (yellow). Central column: the normalized spacing distribution of $O_K$ (orange), and of the GUE (green). Right column: the connected SFF for $O_K$ (blue), and for the GUE (red). There is no eigenvalue recovery in this case.}
    \label{fig:nondeg2}
\end{figure}
As in the previous cases, we computed the normalized level density, the normalized spacing distribution, and the connected spectral form factors for different values of $\alpha\in(0,1)$, shown in Figures \ref{fig:nondeg} and \ref{fig:nondeg2}. As expected, we see no eigenvalue recovery in this case. The level density (left column) resembles a Wigner semicircle only for $\alpha\ll 1$, which is universal for any $O$ with or without degeneracy.  As for level spacing and SFFs, they behave very close to the GUE even for values of $\alpha$ close to one, which indicates the persistence of random matrix universality even when only a small portion of eigenvalues is removed. In this case, the reference GUE $\mathbf{H}$ satisfies
\begin{align}
    \E[\mathbf{H}_{ij}]=\mu\delta_{ij},\qquad \E[|\mathbf{H}_{ij}|^2]=\frac{1}{N}
\end{align}
where $\mu=\frac{1}{N}\Tr O$ as in the previous cases.

In all the numerical examples we saw, when $\alpha\to 0$ the statistics of $O_K$ are well described by the GUE, as explained by our analytic results of the previous section. The behavior is universal, meaning that it is true for any operator $O$, which we see clearly in the two extreme scenarios studied here.

\section{Free Probability Theory}
\label{sec:freeprob}

In the previous section, we have successfully shown that in the limit $\alpha \ll 1$, the probability distribution of a projected operator follows GUE statistics. In this section, we will provide another perspective on understanding the emergence of the GUE density of states from Free Probability Theory.

Because Free Probability Theory has only made rare appearances in high-energy physics, we will first give a short review of the topic. We will then use powerful theorems to show that the first two moments of the projected operator agree with those of the GUE, up to the second order in $\alpha$. To do so, we will use both the so-called $S$-transform and \textit{second-order free probability theory}. Moreover, beyond the small $\alpha$ limit, we will also obtain the density of states for all finite values of $\alpha$ through \textit{free compression} using the so-called $R$-transform.

\subsection{Practical review of free probability theory}\label{sec: review of FPT}
Before introducing Free Probability Theory governing non-commutative variables, let us first review the basics of a classical probability distribution, as they will be of importance later.

\subsubsection*{Moments in classical probability theory}

   Given a classical probability distribution $f$ for a real continuous random variable $X$, its $k$-th moment is defined as
\begin{align}
    m_k=\text{E}[X^k]=\int_{-\infty}^{+\infty}x^kf(x)\,dx \,,
\end{align} 
where $x$ stands for the variable of the density function of $X$.
The collection of all moments essentially shapes (defines) the probability distribution. In other words, there is a unique distribution whose moments are given as $\{m_n\}_{n\geq1}$ if Carleman's condition \cite{Durrett_2019}: $\sum_{n=1}^\infty m_{2n}^{-\frac{1}{2n}}=+\infty$ is satisfied.

An example of such a probability distribution is the Wigner semicircle with variance $\sigma^2$:
\begin{align}
    \rho(\lambda)=\frac{1}{2\pi \sigma^2}\sqrt{4\sigma^2-\lambda^2} \,.
\end{align}
Its moments are
\begin{align}
    m^{(W)}_k=
\begin{cases}
  (\sigma^2)^{k/2}C_{k/2} & \text{for } k\text{ even} \\
  0 & \text{for } k\text{ odd} \,,
\end{cases}
\end{align}
where $C_n=\frac{1}{n+1}\binom{2n}{n}$ is the $n$-th Catalan number \cite{MinSpei}. These moments indeed satisfy Carleman's condition. This means that whenever a random variable has moments of the Wigner semicircle, its probability distribution must be the Wigner semicircle. 

\subsubsection*{Moments in Free Probability Theory}

Free Probability Theory deals with non-commutative random variables. Defining a probability distribution for such variables in a way similar to classical probability theory is not straightforward. One way is to describe probability not with distributions, but with moments. 

A simple realization of non-commuting variables is to consider random matrices. For $N\times N$ random matrices, the linear functional could be defined as
\begin{align}
    \varphi(X):=\text{E}[\text{tr}X] \,,
\end{align}
where we used the normalized trace $\text{tr}=\frac{1}{N}\text{Tr}$. With this definition, the $k$-th moment of the random matrix $X$ is
\begin{align}
    m_k=\text{E}\left[\text{tr}\left(X^k\right)\right] \,.
\end{align}
Using the trace in the definition of the functional is also very useful because it allows us to know the moments of the one-point distribution of the eigenvalues:
\begin{align}
   m_k= \text{E}\left[\text{tr}\left( X^k \right) \right]=\text{E}\left[\frac{1}{N}\sum_{i=1}^N (\lambda_i)^k \right]=\text{E}\left[ \lambda^k \right] \,,
\end{align}
where we used the linearity of the expectation value operator.

For example, we can consider an $N\times N$ GUE random matrix $X$ with 
\begin{align}\label{GUEcond}
    \text{E}[X_{ij}]=0 \,,\qquad \text{E}[|X_{ij}|^2]=\frac{1}{N} \,,
\end{align}
whose moments can be computed as
\begin{align}\label{finalmomGUE}
    \lim_{N\rightarrow\infty}\E\left[\tr\left(X^k\right)\right]=\frac{1}{2\pi}\int_{-2}^2t^k\sqrt{4-t^2}\,dt=
    \begin{cases}
        C_{k/2} & k\text{ even}\\
        0 & k\text{ odd}
    \end{cases}
\end{align}
which means that the moments of $X_N$ are, in the limit of large $N$, equal to the moments of the Wigner semicircle. 

\subsubsection{Freeness and asymptotic freeness}

Consider a non-commutative probability space $(\mathcal{A},\varphi)$, where $\mathcal{A}$ is a unital and associative algebra defined over $\mathbb{C}$ equipped with a unital linear functional $\varphi:\mathcal{A}\rightarrow \mathbb{C}$.
Suppose $\mathcal{A}_1,\ldots,\mathcal{A}_s$ are unital subalgebras of $\mathcal{A}$. We say that $\mathcal{A}_1,\ldots,\mathcal{A}_s$ are \emph{freely independent} or \emph{free} with respect to $\varphi$ if, 
\begin{align}
        \varphi(a_1\ldots a_r)=0\,,
    \end{align}
as long as the neighboring elements are not from the same subalgebra and the following conditions are satisfied for $r\geq2$ and $a_1,\ldots,a_r\in\mathcal{A}$ such that $\varphi(a_i)=0$ for $i=1\ldots r$, and $a_i\in\mathcal{A}_{j_i}$ with $j_i\in[s]\equiv\{1,2,\dots,s\}$ and $j_1\neq j_2,\ldots,j_{r-1}\neq j_r$. Elements $a_1,\ldots,a_s\in\mathcal{A}$ are said to be \emph{free} or \emph{freely independent} if the generated unital subalgebras $\mathcal{A}_i=\text{alg}(\mathbb{I},a_i)$ for $i=1,\ldots,s$ are free in $\mathcal{A}$ with respect to $\varphi$.

To see that freeness generalizes independent random variables, consider the freeness for two elements $a_1,a_2$. This time, we \textit{do not} demand that they are centered, namely $\varphi(a_{1,2})\neq 0$. We can consider
\begin{align}
    \varphi\left[ \left(a_1-\varphi(a_1)\mathbb{I} \right)\left(a_2-\varphi(a_2)\mathbb{I} \right) \right]=0\,,
\end{align}
which leads to the factorization property:
\begin{align}\label{phiAB}
    \varphi\left(a_1a_2 \right)=\varphi\left[\varphi(a_2)a_1+\varphi(a_1)a_2-\varphi(a_1)\varphi(a_2)\mathbb{I} \right]=\varphi\left(a_1 \right)\varphi\left(a_2 \right)\,.
\end{align}
This is the sense in which freeness generalizes independence for classical random variables, and extends it to non-commuting variables: the probability distribution of independent variables factorizes. 

But things become more complicated in higher-point expectation values. For example, if we consider $\{a_1,a_2\}$ free from $\{b_1,b_2\}$, a similar calculation shows that
\begin{equation}\label{phiABAB}
\varphi(a_1b_1a_2b_2)=\varphi(a_1a_2)\varphi(b_1)\varphi(b_2)+\varphi(a_1)\varphi(a_2)\varphi(b_1b_2)-\varphi(a_1)\varphi(a_2)\varphi(b_1)\varphi(b_2)\,,
\end{equation}
which is manifestly different from the classical commuting case.

Now, we can consider asymptotic freeness. Let us build an $N\times N$ random matrix
    \begin{align}
        Y_N=\left( X_{i_1}^{m_1}-c_{m_1}\mathbb{I} \right)\left( X_{i_2}^{m_2}-c_{m_2}\mathbb{I} \right)\ldots\left( X_{i_r}^{m_r}-c_{m_r}\mathbb{I} \right)\,,
    \end{align}
    where $c_{m_n}=\lim_{N\rightarrow \infty} E[\Tr(X_n^{m_n})]$ with $m_n$ being a positive integer.
$X_1,\ldots,X_s$ are called \emph{asymptotically free} if
\begin{align}
        \lim_{N\rightarrow\infty}E\left[\tr\left(Y_N \right) \right]=0\,.
    \end{align}

Intuitively, asymptotic freeness means that random variables become free, but only in the limit $N\to \infty$. This will be of importance to us, because of the application of Free Probability Theory to matrices (like operators in quantum mechanics). In this case, non-trivial freeness can be reached only asymptotically.

\subsubsection{Sum of free random variables}

We now discuss the sum of free random variables, which will be important for us later. Let us thus introduce the free additive convolution as a first implication of freeness. For two self-adjoint random variables $a,b$ that are free from each other. The distribution of the sum $a+b$ is given by the free additive convolution of the individual distributions:
\begin{equation}
    \rho_{a+b}=\rho_a \boxplus \rho_b\,.
\end{equation}
The machinery for the free additive convolution is to first consider the Cauchy transforms:
\begin{equation}
    G(z)=\int d\lambda \frac{\rho(\lambda)}{z-\lambda}\,,
    \label{eq-CauchyTransform}
\end{equation}
whose inverse is related to the $R$-transform:
\begin{equation}
    G^{-1}(z)=R(z)+\frac{1}{z}\,.
\end{equation}
The $R$-transform can also be expanded using free cumulants as
\begin{equation}
    R(z)=\sum_{n=1}^{\infty} \kappa_n z^{n-1}\,.
\end{equation}
As a main condition for freeness, the mixed free cumulants vanish, which implies:
\begin{equation}
    \kappa_n(a+b)=\kappa_n(a)+\kappa_n(b)\,,
\end{equation}
indicating the factorization of the $R$-transform:
\begin{equation}
    R_{\rho_{a+b}}(z)=R_{\rho_a\boxplus \rho_b}(z)=R_{\rho_a}(z)+R_{\rho_b}(z)\,.
\end{equation}
This provides a powerful tool for finding the free convoluted distribution by using the Stieltjes inversion formula:
\begin{equation}
    \rho_{a+b}=-\frac{1}{\pi} \lim_{\epsilon \rightarrow 0} \mathrm{Im}(G_{\rho_a\boxplus \rho_b}(\lambda+i\epsilon))\,.
    \label{eq-StieltjiesInv}
\end{equation}

\subsubsection{Product of free random variables}

Lastly, we review the machinery for computing the distribution of the product of two random variables. 

Consider two non-commutative random variables $a$ and $b$ which are free. If the probability distribution of $a$ is $\mu_a$ and the distribution of $b$ is $\mu_b$, then the probability distribution of the product $ab$ is given by an operation called \textit{free multiplicative convolution} \cite{Voi2}:
\begin{align}
    \mu_{ab}=\mu_a \boxtimes \mu_b\,.
\end{align}
However, even though $a,b$ are selfadjoint, the product $ab$ is not selfadjoint in a non-commutative probability space, which thus cannot lead to a guaranteed probability measure on $\mathbb{R}$. Instead, $\sqrt{a}b\sqrt{a}$ (or $\sqrt{b}a\sqrt{b}$) shares the same moments as $ab$ while maintaining the selfadjointness. We can therefore identify the distributions:
\begin{equation}
\mu_{ab}=\mu_{\sqrt{a}b\sqrt{a}}=\mu_{\sqrt{b}a\sqrt{b}}=\mu_{ba}\,,
\end{equation}
which implies that  $\boxtimes$ is commutative \cite{Voi}:
\begin{equation}
    \mu_a \boxtimes \mu_b=\mu_b \boxtimes \mu_a\,.
\end{equation}

To compute the free multiplicative convolution, it is useful to define the moment function $M_\mu$:
\begin{equation}
    M_\mu(z):=z G_\mu(z)-1=\sum_{n=1}^\infty m_n(\mu)z^n\,,
    \label{eq-FreeMulMomentFunc}
\end{equation}
where $G_\mu(z)$ is the Cauchy transform of $\mu$ defined in (\ref{eq-CauchyTransform}), and $m_n(\mu)$ is the $n$-th moment of $\mu$. The $S$-transform is defined through the inverse of the moment function:
\begin{equation}
    S_\mu(z):=\frac{1+z}{z}M^{-1}_\mu(z)\,,
    \label{eq-Stranform}
\end{equation}
which is also the inverse of the $R$-transform by \cite{Nica_Speicher_2006} 
\begin{equation}
    S_\mu(z)=\frac{1}{z}R^{-1}_{\mu}(z)\,.
\end{equation}
The $S$-transform provides a powerful tool for computing the distribution of the product of two random variables
\begin{equation}
    S_{\mu_a\boxtimes\mu_b}=S_{\mu_a}\cdot S_{\mu_b}\,.
\end{equation}
Given this, one can find $M_{\mu_{ab}}$ and $G_{\mu_{ab}}$ using (\ref{eq-FreeMulMomentFunc}) and (\ref{eq-Stranform}), which hence leads to the distribution $\mu_{ab}$ from the Stieltjes inversion formula (\ref{eq-StieltjiesInv}).

\subsection{The one-point function of projected operators}\label{sec: DOS from FPT}

We will now use theorems of free probability theory to show that the probability distribution agrees with that of the GUE in the large $N$, small $\alpha$ limit. In this section, we first apply the formalism of the S-transform to study the one-point function of the spectral density for the projected operator $O_K$. First, notice that the projected operator

\begin{align}
    O_K=P_KU O U^\dagger P_K
\end{align}
shares the same eigenvalues as
\begin{align}
    O_K'=P_K(U O U^\dagger)\,.
\end{align}

In order to use the S-transform, we need to prove that $P_K$ and $U O U^\dagger$ are free. This is proved in the following theorem:
\begin{theorem}
    Let $A$ and $B$ be two deterministic $N\times N$ matrices with $A\xrightarrow{\text{distr}}a$ and $B\xrightarrow{\text{distr}}b$ as $N\to\infty$. Let $U$ be an $N\times N$ Haar unitary random matrix. Then $A,\,U B U^\dagger \xrightarrow{\text{distr}}a,b$ as $N\to\infty$, where $a$ and $b$ are free. The convergence ``$\xrightarrow{\text{distr}}$" refers to a convergence in distribution, hence a convergence in moments \cite{MinSpei}.
\end{theorem}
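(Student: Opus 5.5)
We will prove the statement via Weingarten calculus. The claim is that every mixed moment of $A$ and $UBU^\dagger$ converges to the corresponding mixed moment of free variables $a,b$.

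\textbf{Reduction to alternating centered words.} By multilinearity and the definition of freeness, it suffices to treat alternating products of centered polynomials. Concretely, we must show
\[
\lim_{N\to\infty}\E\left[\tr\left(A_1\, UB_1U^\dagger A_2\, UB_2U^\dagger\cdots A_r\, UB_rU^\dagger\right)\right]=0 .
\]
Here each $A_i=p_i(A)$ and $B_i=q_i(B)$ are polynomials with $\tr A_i\to 0$ and $\tr B_i\to0$. Words beginning or ending with the same type reduce to this form by cyclicity of the trace. Since $A$ and $B$ are deterministic and converge in distribution, the normalized traces of products of the $A_i$ among themselves converge, and likewise for the $B_i$. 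Moreover $\sup_N \|A_i\|$ and $\sup_N\|B_i\|$ may be assumed bounded. If they are not, we only need that all normalized moments are bounded, which follows from convergence in distribution.

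\textbf{Weingarten expansion.} We expand the expectation with the Weingarten formula:
\[
\E[\tr(\cdots)]=\frac1N\sum_{\sigma,\tau\in S_r}\mathrm{Wg}_N(\sigma\tau^{-1})\,\Tr_{\sigma}(A_1,\dots,A_r)\,\Tr_{\tau^{-1}\gamma}(B_1,\dots,B_r),
\]
where $\gamma=(1\,2\cdots r)$ is the long cycle and $\Tr_\pi$ denotes the product of traces along the cycles of $\pi$. We then use the standard asymptotics:
\[
\mathrm{Wg}_N(\pi)=N^{-r-|\pi|}\big(\mathrm{Moeb}(\pi)+O(N^{-2})\big),
\]
and $\Tr_\pi=N^{\#\pi}\big(\prod \tr+o(1)\big)$. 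Power counting then gives an exponent of $N$ equal to
\[
-1-r-|\sigma\tau^{-1}|+\#\sigma+\#(\tau^{-1}\gamma).
\]
By the triangle inequality $|\sigma|+|\sigma\tau^{-1}|+|\tau^{-1}\gamma|\ge|\gamma|=r-1$, this exponent is $\le0$. Equality holds exactly when $\sigma\le\tau\le\gamma$ are geodesic, that is, when they correspond to noncrossing partitions.

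\textbf{Vanishing of the leading term.} The leading contribution is therefore a sum over noncrossing $\sigma\le\tau$ of Möbius factors times products of limiting traces. This is precisely the free moment–cumulant formula for $a,b$ free. To see that it vanishes for centered alternating words, we use the following combinatorial fact. For a noncrossing $\sigma$ with Kreweras complement $K(\sigma)=\sigma^{-1}\gamma$ (noncrossing), either $\sigma$ or $K(\sigma)$ has a singleton block, since $\#\sigma+\#K(\sigma)=r+1$. A singleton yields a factor $\tr A_i$ or $\tr B_i$, which tends to $0$. Summing over $\tau$ with the Möbius function, the leading term is $\sum_\sigma \tr_\sigma(A)\,\kappa$-type terms, and it vanishes once we organize it as $\sum_{\sigma\in NC(r)}\tr_{\sigma}(A)\,\kappa_{K(\sigma)}(B)$, using that the free cumulant of a singleton equals the mean. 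The subleading terms are $O(N^{-1})$ or $O(N^{-2})$ times bounded quantities, so they vanish in the limit.

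\textbf{Main obstacle.} The hard step is the genus/power-counting bound together with uniform control of the Weingarten function at fixed $r$ as $N\to\infty$. Concretely, we must establish $|\mathrm{Wg}_N(\pi)|\le C_r N^{-r-|\pi|}$. We would quote this from Collins' work \cite{collins2003moments} rather than rederive it. Once that bound is in hand, the rest is the combinatorics of noncrossing partitions.
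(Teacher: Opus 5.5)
Your proposal is correct and follows the route the paper indicates: Weingarten calculus, with the leading order localized on geodesic pairs $\sigma\le\tau\le\gamma$ of non-crossing permutations, which reproduces the free moment--cumulant formula. Your version is far more detailed than the paper's one-line sketch, which phrases the conclusion as vanishing of mixed cumulants, an equivalent formulation to your vanishing of centered alternating moments.

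Two minor remarks. First, the bound $|\mathrm{Wg}_N(\pi)|\le C_r N^{-r-|\pi|}$ is not a genuine obstacle: $r$ is fixed and $S_r$ is finite, so it follows from the pointwise asymptotics you already quote. Second, words that begin and end with the same type are not reduced to your form by cyclicity alone, because the merged factor $A_rA_1$ is not centered; either re-center it and induct on length, or simply use your general leading-order formula, which covers all mixed moments directly.
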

This can be straightforwardly proved using the Weingarten calculus to compute the mixed cumulants from moments of $P_K$ and $U^\dagger OU$, which yields the non-crossing partitions as a characteristic of freeness in the large-$N$ limit.
This result implies that, asymptotically, $P_K$ and $U^\dagger OU$ are free, hence we can use the S-transform formalism to compute the eigenvalue distribution of $O_K$.

\subsubsection{Universality of the one-point distribution at small $\alpha$}
Knowing that $P_K$ and $U O U^\dagger$ are free from each other, this allows us to use the free multiplicative convolution to derive the expression of the one-point eigenvalue distribution of $O'_K$, which agrees with that of $O_K$. A simple case with two eigenvalues can be straightforwardly computed using the $S$-transform. We show the limiting density as the Wigner semicircle for small $\alpha$ in Appendix \ref{App-Proj2eig}. Here, we consider a more general $N\times N$ diagonal matrix $O$ with eigenvalues distributed according to a generic well-behaved p.d.f. $f(\lambda)$, in the limit $N\to\infty$. Suppose also that these eigenvalues are independent and that we have no degeneracy. This means that
\begin{align}
    U O U^\dagger\sim f\,,
\end{align}
where $\sim$ indicates its eigenvalue distribution.
The projector $P_K$ has the usual asymptotic distribution
\begin{align}
    P_K\sim\mu=(1-\alpha)\delta_0+\alpha\delta_1\,.
\end{align}
As we saw in the previous sections,
\begin{align}
    O_K=P_KU O U^\dagger P_K\sim\rho=\mu\boxtimes f\,.
\end{align}
Then, the following theorem holds:
\begin{theorem}
\label{Theorem-1pt from S trans}
    In the limit $N,K\to\infty$ and $\alpha=\frac{K}{N}\to0$, the probability distribution $\rho$ of eigenvalues of $O_K$ converges to a Wigner semicircle of variance $\alpha\sigma_f^2$ centered at $f_1$, superimposed to a trivial deterministic spectrum centered in zero:
    \begin{equation}
    \begin{aligned}
        \rho(x)=(\mu\boxtimes f)(x) \xrightarrow{\text{distr}}\rho_{\text{lim}}(x)
    \end{aligned}
    \end{equation}
    with
    \begin{align}
        \rho_\text{lim}(x)=\alpha\,\frac{1}{2\pi\alpha\sigma_f^2}\sqrt{4\alpha\sigma_f^2-(x-f_1)^2}\;\chi_{\left[-\Delta+f_1,\Delta+f_1\right]} +(1-\alpha)\delta(x)
    \end{align}
    where $f_1=f[O]$ and $\sigma_f^2=f[O^2]-f[O]^2$ are respectively the mean and the variance of the eigenvalues of $O$, $\chi[a,b]$ is the characteristic (indicator) function of the interval $[a,b]$, and $\Delta=\sqrt{4\alpha\sigma_f^2}$.

    More precisely, the moments of $\rho$ agree with the moments of $\rho_\text{lim}$ up to $O(\alpha^2)$.
\end{theorem}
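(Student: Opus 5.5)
The plan is to compute the moments of $\rho=\mu\boxtimes f$ with the $S$-transform, expand them in $\alpha$, and compare them with the moments of $\rho_{\rm lim}$ order by order. For the projector $\mu=(1-\alpha)\delta_0+\alpha\delta_1$, every moment equals $\alpha$, so $M_\mu(z)=\alpha z/(1-z)$. Inverting gives $M_\mu^{-1}(z)=z/(\alpha+z)$, and therefore
\begin{equation}
S_\mu(z)=\frac{1+z}{\alpha+z}\,.
\end{equation}
By the first theorem of this section, $P_K$ and $UOU^\dagger$ are asymptotically free. Hence $S_\rho=S_\mu S_f$, and $M_\rho$ is recovered by inverting
\begin{equation}
M_\rho^{-1}(z)=\frac{z}{1+z}S_\mu(z)S_f(z)=\frac{z}{\alpha+z}\,S_f(z)\,.
\end{equation}

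Direct inversion is awkward, so I would first translate this relation into free cumulants. The relation $S_\rho=S_\mu S_f$ with this $S_\mu$ is equivalent to the free compression statement of Nica and Speicher. Concretely, define the compressed distribution $\tilde\rho$ by $m_k(\rho)=\alpha\, m_k(\tilde\rho)$; this is the law of $O_K$ on the $K$-dimensional range, normalized there. Its free cumulants are then
\begin{equation}
\tilde\kappa_n=\alpha^{n-1}\kappa_n(f)\,.
\end{equation}
I would verify this equivalence by checking that the substitution $\tilde R(z)=R_f(\alpha z)$ reproduces the $S$-transform relation above, using $S(z)=R^{-1}(z)/z$. In particular $\tilde\kappa_1=f_1$ and $\tilde\kappa_2=\alpha\sigma_f^2$, while every $\tilde\kappa_n$ with $n\ge3$ is $O(\alpha^2)$.

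Next, I would use the moment–cumulant formula, $m_k(\tilde\rho)=\sum_{\pi\in NC(k)}\prod_{V\in\pi}\tilde\kappa_{|V|}$. Any partition containing a block of size $\ge3$ contributes $O(\alpha^2)$. So, up to $O(\alpha^2)$, $\tilde\rho$ has the same moments as the distribution whose only free cumulants are $\kappa_1=f_1$ and $\kappa_2=\alpha\sigma_f^2$. That distribution is exactly the semicircle of variance $\alpha\sigma_f^2$ centred at $f_1$. The explicit expansion is
\begin{equation}
m_k(\tilde\rho)=f_1^k+\binom{k}{2}\alpha\sigma_f^2 f_1^{k-2}+O(\alpha^2)\,,
\end{equation}
and the semicircle moments agree with this to the same order. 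Multiplying by $\alpha$ gives the moments of $\rho$, which match those of $\rho_{\rm lim}$. The $(1-\alpha)\delta(x)$ term contributes nothing for $k\ge1$ and fixes the normalization at $k=0$. So the agreement holds with an error $O(\alpha^3)$ in $m_k(\rho)$, which is in particular $O(\alpha^2)$. To upgrade this to convergence in distribution as $\alpha\to0$, after centring at $f_1$ and rescaling by $\sqrt\alpha$, I would note that the rescaled cumulants $\alpha^{n-1}\kappa_n(f)/\alpha^{n/2}\to0$ for $n\ge3$. Assuming $f$ has compact support (the simple-operator assumption), the limit is then determined by its moments via Carleman's condition.

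The main obstacle I expect is the bookkeeping between the $S$-transform and the free-cumulant picture. In particular, the atom at zero must be cleanly separated from the compressed distribution, and the inversion must be justified as formal power series near $z=0$, which requires $f_1\neq0$ for $S_f$ to exist. If $f_1=0$, I would first shift $O\to O+c\mathbb{1}$ and track that $\rho$ is not simply shifted, because the shift acts only on the compressed part. Working directly with $\tilde\rho$ avoids this issue, so in the write-up I would present the compressed cumulant formula as the operative step and the $S$-transform only as its derivation.
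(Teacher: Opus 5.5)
Your argument is correct, but it takes a different route from the paper's appendix proof.

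\textbf{The paper's route.} It stays entirely inside the $S$-transform formalism. It expands $M_\rho^{-1}(z)=\frac{1+z}{\alpha+z}M_f^{-1}(z)$ and applies Lagrange inversion twice: once to express the coefficients $\mathcal{F}_k$ of $M_f^{-1}$ through the $f_k$, and once to recover the $\rho_n$. Everything is written in Bell polynomials. The orders $\alpha$ and $\alpha^2$ are isolated by a ``maximal energy'' configuration count and then resummed with Lah-number identities.

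\textbf{Your route.} You convert $S_\rho=S_\mu S_f$ into the free-compression scaling $\tilde\kappa_n=\alpha^{n-1}\kappa_n(f)$. The paper quotes this scaling in its free-compression section but does not use it here. After that step, a partition $\pi\in NC(k)$ simply carries the power $\alpha^{k-\#\pi}$. The two relevant orders then come from the all-singleton partition and from the $\binom{k}{2}$ one-pair partitions.

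\textbf{What your route buys.}
\begin{itemize}
\item The bookkeeping collapses to a line, and the full structure of the $\alpha$-expansion becomes visible. For example, the paper's order-$\alpha^3$ discrepancy at $n=3$, namely $f_3-3\sigma_f^2 f_1-f_1^3$, is exactly $\kappa_3(f)$.
\item If you cite the compression theorem directly, you need neither the assumption $f_1\neq 0$ nor the shift trick.
\item You are right that shifting $O$ moves only the compressed part and not the atom at zero; the paper glosses over this point.
\item Rescaling by $\sqrt{\alpha}$ gives rescaled cumulants $\alpha^{n/2-1}\kappa_n(f)\to 0$. This turns the statement into a genuine free central limit theorem, rather than a moment comparison with an $\alpha$-dependent target.
\end{itemize}
The cost is that you rely on the compression identity, or on your own check of it. The paper's route uses only the multiplicativity of $S$.

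\textbf{One convention point for that check.} The paper's $R(z)=\sum_n\kappa_n z^{n-1}$ has a constant term. The relation $S(z)=\frac{1}{z}\mathcal{R}^{\langle -1\rangle}(z)$ must therefore be applied to $\mathcal{R}(z)=zR(z)=\sum_n\kappa_n z^n$. With that convention, $\tilde R(z)=R_f(\alpha z)$ gives $S_{\tilde\rho}(z)=S_f(\alpha z)$. This is exactly what $M_\rho^{-1}(z)=M_{\tilde\rho}^{-1}(z/\alpha)$ yields on the $S$-transform side, so the equivalence you planned to verify does hold.
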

To prove it, in the spirit of Carleman’s condition, we compare the moments of $\rho$ and $\rho_\text{lim}$ up to order $\alpha^2$. The moments of $\rho$ are essentially encoded in the moments of $f$. They can be compared using the moment-generating function (\ref{eq-FreeMulMomentFunc}) and the $S$-transform (\ref{eq-Stranform}). With many detailed calculations shown in the Appendix \ref{App-TheoremProof}, one finds that the $n$-th order moments for $\rho$ up to order $\alpha^2$ are given by
\begin{align}\label{final1stord}
    \begin{dcases}
        \rho_1=\alpha f_1\\
        \rho_n=\alpha(f_1)^n+\alpha^2\frac{n(n-1)}{2}\sigma_f^2\,(f_1)^{n-2}+O(\alpha^3),\qquad n\geq2\,.
    \end{dcases}
\end{align}

In order to prove the theorem, we have to compare this result with the moments of
\begin{align}
    \rho_\text{lim}(x)=\alpha\mathscr{W}_{\alpha\sigma_f^2}(x-f_1)+(1-\alpha)\delta(x)\,.
\end{align}
The previous expression contains a shifted Wigner semicircle of variance $\alpha\sigma_f^2$:
\begin{align}
    \mathscr{W}_{\alpha\sigma_f^2}(x)=\frac{1}{2\pi\alpha\sigma_f^2}\sqrt{4\alpha\sigma_f^2-x^2}\chi_{\left[-\sqrt{4\alpha\sigma_f^2},\sqrt{4\alpha\sigma_f^2}\right]}\,;
\end{align}
we can see, by a trivial generalization of equation (\ref{finalmomGUE}), that its moments are
\begin{align}
    m_\mathscr{W}^{(n)}=
    \begin{cases}
        0,\qquad &n\text{ odd}\\
        (\alpha\sigma_f^2)^{\frac{n}{2}} C_{\frac{n}{2}},\qquad &n\text{ even}\,.
    \end{cases}
\end{align}
Now we can compute the moments of $\rho_\text{lim}$:
\begin{equation}
    \begin{aligned}
        \rho_\text{lim}[X^n]=\int dx\,x^n\rho_\text{lim}(x)=\alpha\int dx\,(x+f_1)^n\mathscr{W}_{\alpha\sigma^2}(x)\,,
    \end{aligned}
\end{equation}
which can easily be shown to equal
\begin{align}
    \begin{dcases}
        \rho_\text{lim}[X]=\alpha f_1\\
        \rho_\text{lim}[X^n]=\alpha(f_1)^n+\alpha^2\frac{n(n-1)}{2}\sigma_f^2\,(f_1)^{n-2}+O(\alpha^3),\qquad n\geq2\,.
    \end{dcases}
\end{align}
This is exactly the same result obtained in (\ref{final1stord}), up to order $\alpha^2$.

The terms of order $\alpha^3$, on the contrary, differ. To see this, we do not need to compute the term of order $\alpha^3$ for the general $n$, it is enough to find a counterexample for a specific value of $n$. For example, in the case $n=3$, such a term is equal to $f_3-3\sigma_f^2f_1-f_1^3$ for the projected operator, while it is zero for the GUE.

We now turn to the study of the correlation between eigenvalues.

\subsection{Second-order free probability theory}
The formalism of (first-order) free probability theory that we saw so far is based on the definition of a non-commutative probability space $(\mathcal{A},\varphi)$. The linear functional $\varphi:\mathcal{A}\to\mathbb{C}$ is used to define the moments, which in the case of random matrices is $\varphi(x)=E[\tr X]$. 

However, one of the main features of RMT is the fact that the eigenvalues are correlated, supported by the appearance of eigenvalue repulsion. We thus expect the eigenvalues of the projected operator to obey RMT statistics, and in particular to repel. This has already been proven in Section \ref{sec: jpdf from RMT} explicitly from the joint eigenvalue distributions using random matrix techniques. However, we would now like to see what free probability can say about these correlations.

Importantly, these correlations cannot be seen from first-order free probability theory, which has no access to the two-point correlation function of the eigenvalues. For this reason, \textit{Second-order Free Probability Theory} \cite{MinSpei} was constructed. It begins with an addition to the tracial non-commutative probability space $(\mathcal{A},\varphi)$:
\begin{equation}
    \mathrm{a~~ bilinear~~ functional:~~}\varphi_2:\mathcal{A}\times\mathcal{A}\to\mathbb{C}\,,
\end{equation}
which is tracial and symmetric in its two variables $\varphi_2(a,b)=\varphi_2(b,a)$ for all $a,b\in\mathcal{A}$ and vanishes in case of either $a$ or $b$ being an identity. These define a \textit{second-order non-commutative probability space} $(\mathcal{A},\varphi,\varphi_2)$.

The bilinear functional $\varphi_2$ can then be used to define the \textit{second-order moments} of an $N \times N$ random matrix $X$, i.e., the moments of the two-point distribution of its eigenvalues:
\begin{align}\label{2ndordmom}
    m_{p,q}=\varphi_2(X^p,X^q)=k_2\left(\Tr (X^p),\Tr (X^q) \right)
\end{align}
where $k_2$ is the second classical cumulant, hence the covariance:
\begin{align}
    k_2(x,y)=\text{cov}(x,y)=\E[xy]-\E[x]\E[y]\,.
\end{align}
The \textit{asymptotic freeness} of the second-order is defined in a similar manner as the first-order through the factorization of second-order cumulants:
\begin{equation}
    \lim_{N\rightarrow \infty}\kappa_{p,q}^{X+Y}= \kappa_{p,q}^{X}+\kappa_{p,q}^{Y}\,,
\end{equation}
where $\kappa_{p,q}$ is the connected cumulants appearing in $m_{p,q}=\varphi_2(X^p,X^q)=\kappa_2(\Tr(X^p),\Tr(X^q))$ through moment-cumulant relation. This implies that all mixed free second-order cumulants vanish.

\subsubsection{Non-crossing partitions for GUE}

Notice that the connected SFF, one of the main tools for diagnosing quantum chaos, depends only on the second-order moments (\ref{2ndordmom}). This can be seen by expanding the exponentials in the definition of the SFF:
\begin{equation}
    \begin{aligned}
        \text{SFF}(t)&=\frac{1}{N^2}\left(\mathbb{E}\left[\Tr\left(e^{iXt} \right)\Tr\left(e^{-iXt} \right)\right]-\mathbb{E}\left[\Tr\left(e^{iXt} \right)\right]\mathbb{E}\left[\Tr\left(e^{-iXt} \right)\right] \right)\\
        &=\sum_{l,p=0}^\infty \frac{(-1)^p(it)^{l+p}}{l!p!N^2}\varphi_2\left(X^l,X^k \right)\,.
    \end{aligned}
\end{equation}
We will first study this expansion for GUE random matrices, as some of the technology based on second-order freeness can be generalized for the case of our projected operator.

Using definition (\ref{2ndordmom}), the second-order moments of the GUE are obtained to be related to combinatorial properties as in the first-order case  \cite{MinSpei}. Unlike the first-order case, where we consider the non-crossing partitions of the cumulants with points lying on a circle, the combinatorics for the second-order case include $p$ points on an inner and $q$ points on the outer circle forming an annulus called $(p,q)$-annulus. This is the annulus with the integers 1 to $p$ are arranged clockwise on the outer circle and $p+1$ to $p+q$ arranged counterclockwise on the inner circle.

Different ways of contracting the points form sets of partitions of the permutation $S_{p+q}$. If there is an element of the permutation $\pi\in S_{p+q}$ with cycles that do not cross each other, and at least one of them connects the two circles, and the region enclosed in each cycle is homeomorphic to the disc with boundary oriented clockwise, it is called a \textit{non-crossing permutation} on the $(p,q)$-annulus (or just a \textit{non-crossing annular permutation}). We denote by $S_{NC}(p,q)$ the set of non-crossing permutations on the $(p,q)$-annulus. The subset of variables consisting of non-crossing pairings on the $(p,q)$-annulus is denoted by $NC_2(p,q)$.

\begin{figure}[h!]
    \centering
    \includegraphics[width=0.3\linewidth]{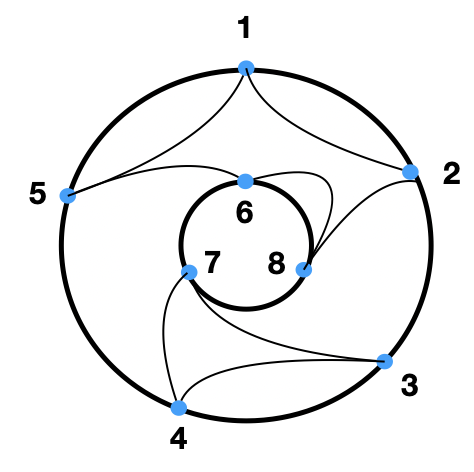}
    \caption{Graphic representation of the non-crossing permutation $(1,2,8,6,5)(3,4,7)$ on a $(5,3)$-annulus as a partition.}
    \label{fig:annulus}
\end{figure}
In Figure \ref{fig:annulus}, we show an example diagram of a non-crossing permutation on the annulus as a partition \footnote{Notice that there is an obvious ordering ambiguity in the connection between partitions and permutations on the annulus (see e.g. \cite{MinSpei}). However, this does not affect non-crossing pairings $NC_2(p,q)$, which can be regarded as partitions or as permutations without ambiguity, since the ordering of the elements within 2-cycles is not important.}. The non-crossing pairs dominate in the second-order cumulant, i.e., in the limit $N\to\infty$, the second-order moment $m_{p,q}$ of an $N\times N$ GUE random matrix $X$ is given by the number of non-crossing pairings on a $(p,q)$-annulus \cite{MinSpei}:
    \begin{align}
        m_{p,q}=|NC_2(p,q)|\,.
    \end{align}
This discussion can be generalized to define $n$-th order moments, building an $n$-th order free probability theory \cite{MinSpeiCol}. Given the $N\times N$ random matrices $B_1,\ldots,B_n$, we define the $n$-th order moments as
    \begin{align}
        \varphi_n(B_1,\ldots,B_n)=k_n\left(\Tr(B_1),\ldots,\Tr(B_n)\right)\,.
    \end{align}
In the previous definition, $k_n$ is the $n$-th classical cumulant, which is defined as
\begin{align}\label{classiccum}
    k_n(x_1,\ldots,x_n)=\sum_{\V\in\PP(n)}(\#\V-1)!(-1)^{\#\V-1}\prod_{V\in\V}\E\left[\prod_{i\in V}x_i \right]\,,
\end{align}
where $x_1,\ldots,x_n$ are classical random variables, $\PP(n)$ is the set of partitions of $n$ elements and $\#\V$ is the number of blocks in the partition $\V$. Notice that in this definition we are using the traditional trace instead of the normalized one.

Our final goal is to study the second-order moments of the projected operator $O_K$, and see whether they agree with the ones of the GUE for small $\alpha$. 
Given the increasing computational complexity of high-order moments, we will therefore eventually check a certain number of moments of $O_K$ and compare with the ones from the GUE. 

To initiate the computation, we notice that for two random and independent unitary invariant ensembles $\mathcal{A}$ and $\mathcal{B}$, the moments for $N\times N$ random matrices $A_1,\ldots,A_n\in\mathcal{A}$ and $B_1,\ldots,B_n\in\mathcal{B}$ follow \cite{MinSpeiCol}:
\begin{equation}\label{prodmom}
            \varphi(\U,\gamma)[A_1B_1,\ldots,A_nB_n]=\sum_{(\V,\pi),(\W,\sigma)}\kappa(\V,\pi)[A_1,\ldots,A_n]\cdot\varphi(\W,\sigma)[B_1,\ldots,B_n]
    \end{equation}
   where the sum is all over $(\V,\pi),(\W,\sigma)\in\PS(n)$ such that $\V\vee\W=\U$ and $\pi\sigma=\gamma$.
Notice that (\ref{prodmom}) is valid for any value of the Hilbert space dimension $N$, not just for the large $N$ case.

The cumulants in (\ref{prodmom}) can be written as
    \begin{align}
        \kappa(\V,\pi)[A_1,\ldots,A_n]=\sum_{\substack{(\M,\tau)\in\PS(n)\\ \M\leq\V}}\varphi(\M,\tau)[A_1,\ldots,A_n]\cdot C_{\pi\vee\M,\V}(\tau\pi^{-1})
    \end{align}
    where
    \begin{align}\label{Cfactor}
        C_{\pi\vee\M,\V}(\tau\pi^{-1})=\sum_{\N\in\PP(n)}(-1)^{\#\N-\#\V}\prod_{V\in\V}\left(\#(\N|_V)-1 \right)!\cdot\text{Wg}(\N,\tau\pi^{-1})\,,
    \end{align}
and 
    \begin{align}
        \varphi(\V,\pi)[B_1,\ldots,B_n]\equiv\prod_{V\in\V}\varphi(\pi|_V)[B_1,\ldots,B_n|_V]\equiv\prod_{V\in\V}\varphi_r\left(B|_{c_1},\ldots,B|_{c_r} \right)\,,
        \label{mompartperm}
    \end{align}
    where $(\V,\pi)\in\PS(n)$ is a given partitioned permutation, $\pi|_V$ is the permutation obtained by joining the cycles of $\pi$ which are contained in the block $V$ of $\V$, and $B_1,\ldots,B_n|_V=B_{i_1},\ldots,B_{i_k}$ with $V=\{i_1,\ldots,i_k\}$, $c_1,\ldots,c_r$ are the cycles of $\pi|_V$ and $B|_{c_i}$ is the restriction of $B$ to the cycle $c_i$. For example,
    \begin{align}
    \varphi\left(\{1,3,4\}\{2\},(1,3)(2)(4) \right)[B_1,B_2,&B_3,B_4]=\varphi_2(B_1B_3,B_4)\varphi_1(B_2)\nonumber\\
    &=k_2(\Tr(B_1B_3),\Tr(B_4))\cdot k_1(\Tr(B_2))\,.
\end{align}

\subsubsection{Second-order moments of randomly projected operators}
We can finally apply all the machinery above to compute the second-order moments of $O_K$. We are interested in the quantity
\begin{align}
    \varphi_2\left((O_K)^p,(O_K)^q \right)=\varphi_2\left((U O U^\dagger\cdot P_K)^p,(U O U^\dagger\cdot P_K)^q \right)\,.
\end{align}
Since $U O U^\dagger$ is unitarily invariant by construction, we can use (\ref{prodmom}). In order to adapt it to our case, we have to choose:
\begin{itemize}
    \item $n=p+q$
    \item $\U=\{1,2,\ldots,p+q\}$, in order to compute a single moment and not a product of moments;
    \item $\gamma$ such that $\#\gamma=2$ so that we compute a second-order moment; for example, we can take $\gamma=(1,\ldots,p)(p+1,\ldots,p+q)$;
    \item $A_1=A_2=\ldots =A_{p+q}=U O U^\dagger$;
    \item $B_1=B_2=\ldots=B_{p+q}=P_K$
\end{itemize}
In this way, once we call $\tilde{O}\equiv U O U^\dagger$, (\ref{prodmom}) becomes
\begin{align}\label{momOK}
    \varphi_2\left((O_K)^p,(O_K)^q \right)=\sum_{(\V,\pi),(\W,\sigma)\in\PS(p+q)}\kappa(\V,\pi)[\underbrace{\tilde{O},\ldots,\tilde{O}}_{p+q\text{ times}}]\cdot\varphi(\W,\sigma)[\underbrace{P_K,\ldots,P_K}_{p+q\text{ times}}]\,.
\end{align}
where the sum is such that $\V\vee\W=\U$ and $\pi\sigma=\gamma$.

Let us begin with the computation of the $\varphi$-factor in the sum. Using definition \ref{mompartperm}, we can find
\begin{equation}
    \begin{aligned}
        \varphi(\W,\sigma)[\underbrace{P_K,\ldots,P_K}_{p+q\text{ times}}]&=\prod_{W\in\W}k_{r(W)}(\Tr[(P_K)^{l_1^W}],\ldots,\Tr[(P_K)^{l_{r(W)}^W}])=\\
        &=\prod_{W\in\W}k_{r(W)}(\underbrace{\alpha N,\ldots,\alpha N}_{r(W)\text{ times}})
    \end{aligned}
\end{equation}
where the permutation $\sigma$, once restricted to the block $W$, reads $\sigma|_W=\sigma_1^W\cdot\ldots\cdot\sigma_{r(W)}^W$ and $l_i^W$ is the length of the cycle $\sigma_i^W$. In the last equality we used the idempotence of the projectors $(P_K)^n=P_K$ and the fact that $\Tr (P_K)=\alpha N$. By applying the definition (\ref{classiccum}) of the classical cumulants, it can be easily found that
\begin{align}
    k_{r(W)}(\underbrace{\alpha N,\ldots,\alpha N}_{r\text{ times}})=\alpha N\delta_{r(W),1}\,.
\end{align}
We see that $r$ is forced to be $1$, or the whole expression becomes zero; since $r(W)$ is the number of cycles of $\sigma$ in the block $W$, this means that the expression is non-zero only when each block of $\W$ contains exactly one cycle of $\sigma$. Hence we obtain
\begin{align}
    \varphi(\W,\sigma)[\underbrace{P_K,\ldots,P_K}_{p+q\text{ times}}]=(\alpha N)^{\#\W}\delta(\W,\sigma)
\end{align}
where $\delta(\W,\sigma)$ is an operator which enforces that $\W=\sigma$ (notice that in this equality $\sigma$ is regarded as a partition, hence the order of the elements in its cycles is not important).

We can now decompose the $\kappa$-factor in the sum of (\ref{momOK}) using \ref{mompartperm}. The final result is
\begin{equation}\label{final}
    \begin{aligned}
        \kappa(\V,\pi)[\underbrace{\tilde{O},\ldots,\tilde{O}}_{p+q\text{ times}}]=\sum_{\substack{(\M,\tau)\in\PS(p+q)\\\M\leq\V}}\delta(\M,\tau)N^{\#\M}\left(\prod_{M\in\M}f_{|M|} \right)\cdot C_{\pi\vee\M,\V}(\tau\pi^{-1})
    \end{aligned}
\end{equation}
where $|M|$ is the length of the block $M\in\M$ and $f_{|M|}$ is the $|M|$-th first-order moment of $f(x)$, the distribution of the eigenvalues of $O$, taken to be independent.

The last term we need to compute is $C_{\pi\vee\M,\V}(\tau\pi^{-1})$ using its definition in (\ref{Cfactor}). Since we are interested in the limit $N\to\infty$, we can use the asymptotic behavior of the Weingarten function \cite{collins2003moments} to obtain:
\begin{equation}
    \begin{aligned}
        C_{\pi\vee\M,\V}(\tau\pi^{-1})&=\sum_{\substack{\N\in\PP(p+q)\\\pi\vee\M\leq\N\leq\V}}(-1)^{\#\N-\#\V}\prod_{V\in\V}\left(\#(\N|_V)-1 \right)!\cdot\\
        &\cdot N^{\#(\tau\pi^{-1})-2p-2q}\left[\prod_{c\in\tau\pi^{-1}}(-1)^{l_c-1}\frac{(2l_c-2)!}{(l_c-1)!l_c!}+O(N^{-2}) \right]
    \end{aligned}
\end{equation}
where $c$ is a cycle of $\tau\pi^{-1}$ and $l_c$ is its length. We can, at this point, insert this expression in (\ref{final}), and plug everything in (\ref{momOK}).     However, the resulting expression is quite complicated, and we were unable to simplify it significantly further. Nevertheless, one can proceed with symbolic computations for specific values of $p$ and $q$ \footnote{ Notice that even though we adopt the un-normalized trace here, we have the moments of $O_K$, when not restricted to the subspace onto which we are projecting, are the same as the moments of $O_K$ when it is instead restricted. The reason is that the trivial zero eigenvalues of the non-restricted operator do not contribute to the traces.}.
 
 Hence, we can compare the result directly with a $K\times K$ GUE random matrix $H$ which has $\E[H_{ij}]=f_1\delta_{ij}$ and $k_2(H_{ij})=\frac{\alpha \sigma_f^2}{K}$. It is easy to check that
\begin{align}\label{finalGUE}
    \varphi_2(H^p,H^q)=\sum_{n=1}^p\sum_{m=1}^q\binom{p}{n}\binom{q}{m}\alpha^{\frac{n+m}{2}}\sigma_f^{n+m}(f_1)^{p+q-n-m}|NC_2(n,m)|\,.
\end{align}
We can see in \cite{Dru} that the number of $(n,m)$-annular non-crossing pairings is given by
\begin{align}
    |NC_2(n,m)|=\sum_{k=1}^{\min(n,m)}|\text{Ann}_k(n,m)|
\end{align}
with
\begin{equation}
\begin{aligned}
    &|\text{Ann}_k(n,m)|=\\
    &=
    \begin{dcases}
        \frac{k}{nm}\sum_{d|\left(n,\frac{n-k}{2},\frac{m-k}{2}\right)}\phi(d)\binom{n/d}{(n-k)/d}\binom{m/d}{(m-k)/d},\qquad &\substack{(n-k)\\(m-k)}\text{ even}\\
        0 &\text{otherwise}
    \end{dcases}
\end{aligned}
\end{equation}
where $d$ runs over the common divisors of $\left(n,\frac{n-k}{2},\frac{m-k}{2}\right)$ and $\phi(d)$ is Euler's totient number.

%Eq. (\ref{finalGUE}) can be computed too with Mathematica, and compared with the results for eq. (\ref{momOK}).
Now, we are ready to compute moments of (\ref{momOK}) and compare with the ones from GUE (\ref{finalGUE}). The results are shown in the following, where we call $m_{p,q}=\varphi_2((O_K)^p,(O_K)^q)$ and $\mu_{p,q}=\varphi_2(H^p,H^q)$ for simplicity:
\begin{equation}
    \begin{aligned}
        &m_{1,1}=\mu_{1,1}+O(\alpha^2)=\alpha\sigma_f^2+O(\alpha^2)\\
        &m_{1,2}=\mu_{1,2}+O(\alpha^2)=2\alpha f_1\sigma_f^2+O(\alpha^2)\\
        &m_{1,3}=\mu_{1,3}+O(\alpha^2)=3\alpha f_1^2\sigma_f^2+O(\alpha^2)\\
        &m_{2,2}=\mu_{2,2}+O(\alpha^2)=4\alpha f_1^2\sigma_f^2+O(\alpha^2)\\
        &m_{2,3}=\mu_{2,3}+O(\alpha^2)=6\alpha f_1^3\sigma_f^2+O(\alpha^2)\,.
    \end{aligned}
\end{equation}
Recall that $m_{p,q}=m_{q,p}$, which allows us to compute only half of the moments. We conclude that, for the moments that we checked, we find agreement up to order $\alpha$. 
For $p,q\geq 3$, the computational cost becomes very high, since the number of partitioned permutations grows exponentially with $p+q$. Nevertheless, this result suggests that for small values of $\alpha$ the projected operator $O_K$ behaves as a GUE also at the level of two-point correlations, regardless of the properties of $O$. Of course, given the results of the previous section, we already know that the full probability distribution is that of the GUE. But it would be interesting to find a rigorous proof for all moments, using only Free Probability Theory.

\subsection{Free compression: revisit microcanonical truncations from free probability}\label{sec: FreeCompression}
(Asymptotic) freeness is an important feature for free operators in a non-commutative probability space. Other than Haar unitaries and random matrices, another way of generating a new non-commutative probability space from a given one is using a free projection, known as \textit{free compression} \cite{Nica_Speicher_2006}.
Consider a non-commutative probability space $(\A,\varphi)$, where $\A$ is a unital algebra and $\varphi=\Tr(\cdot)/N$ is the normalized trace, and a projection $p\in\A$ (i.e. $p^2=p$) such that $\varphi(p)\neq0$, we denote the compression by $(p\A p, \varphi^{p\A p})$, where
    \begin{equation}
        p\A p:=\{pap~|~a\in \A\}\,,
    \end{equation}
    and  
    \begin{equation}
        \varphi^{p\A p}(\cdot):=\frac{1}{\varphi(p)}\varphi(\cdot) ~~~ \mathrm{restrict~ to~ the ~algebra}~~p\A p\,.
    \end{equation}
    We denote the cumulants corresponding to $\varphi^{p\A p}$ as $\kappa^{p\A p}$, and $\kappa$ to be the cumulants for $\varphi$ for the whole algebra. Here, $(p\A p, \varphi^{p\A p})$ is indeed a new non-commutative probability space. Now, consider the case where the projection $p$ is free from the elements in $\A$. There exists a relation between the cumulants of $\{a_1,\dots, a_m\}\in\A$ and the cumulants of the compressed variables $\{pa_1p,\dots, pa_mp\}\in p\A p$:
    \begin{equation}
        \kappa_n^{p\A p}(pa_{i(1)}p,\dots, pa_{i(n)}p)=\frac{1}{\alpha}\kappa_n(\alpha a_{i(1)},\dots, \alpha a_{i(n)} )\,,
    \end{equation}
    for all $n\geq1$ and all $1\leq i(1), \dots,i(n) \leq m$. Such a relation of cumulants can be easily generalized to their $R$-transforms. In the simplest case where we start with only one non-commutative variable $a$, by considering the free additive convolution, this gives a surprising result: The renormalized distribution of $pap$ is given by
\begin{equation}
    \mu^{p \A p}_{pap}=\mu^{\boxplus 1/\alpha}_{\alpha a}\,.
    \label{eq-dof of free compression}
\end{equation}
An intuitive example is when $\alpha=1/2$, which yields
\begin{equation}
        \mu^{p \A p}_{pap}=\mu^{\boxplus 2}_{a/2}=\mu_{a/2}\boxplus\mu_{a/2}\,,
\end{equation}
which is equivalent to the eigenvalue distribution for the sum of two free variables with the same distribution as $a/2$. Another interesting observation is the limit where $\alpha \rightarrow 0$. This limit results in a distribution of $pap$ being an infinite sum of free variables with identical distributions of $\alpha a$, which is precisely captured by the free central limit theorem \cite{Voiculescu1986AdditionOC}, resulting in its convergence to the semicircle law:
\begin{equation}
    \mu^{p \A p}_{pap} \xrightarrow[]{\alpha \rightarrow0} \text{semicircle law}\,.
\end{equation}
In our case, the projection of $O$ via $PU O U^\dagger P$ is exactly a realization of free compression. To see it clearly, one can explicitly compute the moments of the projection $p=PU$ and $O$ using the Weingarten calculus, which are governed precisely by the non-crossing partitions, indicating the free independence between $PU$ and $O$. Therefore, the distribution of the freely projected matrix should follow (\ref{eq-dof of free compression}). To test its validity, we take the simple cases of spin-$1/2$ and spin-$1$ operators. 

\subsubsection*{Spin-$1/2$}\label{sec: freecompression for spinhalf}

For a spin-$1/2$ operator $O=a$ with projection parameter $\alpha$, the distribution is given by the Bernoulli distribution:
\begin{equation}
    \mu_{\alpha a}(\lambda)=\frac{\delta(\lambda+\alpha)+
\delta(\lambda-\alpha)}{2}\,.
\end{equation}
We can follow the procedures of free additive convolution we reviewed in Section \ref{sec: review of FPT} and compute the distribution of the sum of two spin-$1/2$ operators:
\begin{equation}
    \mu_{p a p}(\lambda)=\mu_{\alpha a}(\lambda)^{\boxplus 1/\alpha}\,.
\end{equation}
This example can be fully calculated analytically. For general finite $\alpha$, we have the Cauchy transform and $R$-transform as
\begin{equation}
    G_{\alpha a}(z)=\frac{z}{z^2-\alpha^2},~~~R_{\alpha a}(z)=
    G_{\alpha a}(z)=\frac{z}{z^2-\alpha^2}\,,
\end{equation}
where we pick the solution of $R$-transform that is regular in the vicinity of $z=0$. The $R$-transform for the projected variable is 
\begin{equation}
    R_{p a p}(z)=R_{\alpha a}*\frac{1}{\alpha}\,,
\end{equation}
which leads to the Cauchy transform of $pap$ as
\begin{equation}
    G_{pap}(z)=\frac{2\alpha z-z\pm \sqrt{z^2-4\alpha+4\alpha^2}}{2\alpha(z^2-1)}\,.
\end{equation}
One can therefore derive the distribution $\mu_{p a p}(\lambda)$ using the Stieltjes inversion formula (\ref{eq-StieltjiesInv}), which exactly matches with (\ref{Spin_trunc}) and (\ref{Spin_trunc2nd}) (first derived in \cite{SredIni, collins2005product}).

For the case with two different multiplicities (see Appendix \ref{sec:2eignum}), we adjust the Bernoulli distribution as:
\begin{equation}
    \mu'_{\alpha a}(\lambda)=\frac{3}{4}\delta(\lambda-\alpha)+\frac{1}{4}\delta(\lambda+\alpha)\,.
\end{equation}
Following the same steps, one can straightforwardly derive the distributions (\ref{2eigdistr}).

\subsubsection*{Spin-$1$}

For a spin-$1$ operator $O=\tilde{a}$ with projection parameter $\alpha$, the distribution is given by the generalized Bernoulli distribution:
\begin{equation}
    \mu_{\alpha \tilde{a}}(\lambda)=\frac{\delta(\lambda+\alpha)+\delta(\lambda)+
\delta(\lambda-\alpha)}{3}\,.
\end{equation}
We follow the same steps as in the spin-$1/2$ case and arrive at the $R$-transform for $p \tilde{a} p$
\begin{equation}
    R_{p \tilde{a} p}(z)=\frac{3 \alpha ^2 z^2+\left(\sqrt[3]{3 \sqrt{-\alpha ^2 z^2 \left(3 \alpha ^4 z^4+3 \alpha ^2 z^2+1\right)}+1}-1\right)^2}{3 \alpha  z \sqrt[3]{3 \sqrt{-\alpha ^2 z^2 \left(3 \alpha ^4 z^4+3 \alpha ^2 z^2+1\right)}+1}}\,.
    \label{eq-freeDec R-tranform sp1}
\end{equation}
In principle, we could continue for higher spin operators as well. However, as we increase the spin, it quickly gets harder to have analytical control, especially for the formal inverse of the Cauchy transform. However, it is still possible to perform an exact computation for certain given values of $\alpha$. We listed all the expressions of the distribution $\mu_{p \tilde{a} p}(\lambda)$ for values of $\alpha=0.01,0.25,0.5,0.75,0.99$ in the Appendix \ref{App: DOS for spin1}. Finally, we note that while it can be difficult to proceed using the $R$-transform method for operators of higher spin, it is possible to bypass the difficulties by considering the subordination method \cite{belinschi2007new} which can be very efficient, especially in numerical computations \cite{Camargo:2025zxr}.

\section{Discussion}
\label{sec:discussion}

In this paper, we studied quantum mechanical operators, of arbitrary spectra, Haar-randomly rotated, and projected down to a smaller Hilbert space. Physically, these should be viewed as operators projected down to a microcanonical window and the Haar-randomness comes from assuming the underlying Hamiltonian is chaotic. We then considered the eigenspectrum of this projected operator, and found a type of central limit theorem. In the large $N$ limit and small relative size of the projection, the probability distribution for the eigenvalues of the projected operator follows a GUE distribution, independently of any details of the original operator, other than the mean and variance. We derived this property directly from the Haar measure, once projected to a smaller Hilbert space, as well as using Free Probability Theory. We tested these results numerically for various types of operators, going from very ordered operators such as a spin operator, to operators whose spectrum is drawn randomly.\footnote{This should not be mistaken with a random matrix. Here, we mean an operator whose spectrum is drawn randomly, and hence follows a Poisson distribution for its level spacings. This should be contrasted with a true random matrix where the matrix elements are drawn randomly.} In both cases, in the universal regime, we find perfect agreement with the GUE distribution, while at larger relative size of the Hilbert space there are differences. The numerical results are summarized in Table \ref{tab:summary}. We conclude with some open questions.

\begingroup
\setlength{\tabcolsep}{5pt} 
\renewcommand{\arraystretch}{1.2} 
\begin{table}[t!]
    \centering
    \resizebox{\columnwidth}{!}{
    \begin{tabular}{|c|c|c|c|c|c|c|}
        \hline
        \multicolumn{2}{|c|}{Properties of}  &  semicircle& Wigner-Dyson & Level repulsion& \multirow{2}{*}{Sine kernel}\\ 
        \multicolumn{2}{|c|}{$PU O U^\dagger P$}& law & distribution & ($P(s=0)=0$)     &  \\ \cline{1-6}
        \multicolumn{2}{|c|}{$N\gg 1$, $K\gg 1$} & \multirow{2}{*}{\bluecheck} &\multirow{2}{*}{\bluecheck} & \multirow{2}{*}{\bluecheck} & \multirow{2}{*}{\bluecheck}\\
       \multicolumn{2}{|c|}{$\alpha \ll 1$, smooth $\rho$} &  &  & &   \\
        \hline
        $N\gg 1$, $K\gg 1$ & degeneracy & \multirow{2}{*}{\redcross} &\multirow{2}{*}{\redcross} & \redcross & \redcross  \\ 
        \cline{2-2}\cline{5-6}
       $0\ll\alpha< 1$, smooth $\rho$& non-degeneracy &  &  & \bluecheck&  \bluecheck \\
       \hline
       $N\gg 1$, $K\gg 1$ & degeneracy &\multirow{2}{*}{\redcross} &\multirow{2}{*}{\redcross} & \redcross & \multirow{2}{*}{\redcross} \\ \cline{2-2}\cline{5-5}
       $0\ll\alpha< 1$, non-smooth $\rho$ & non-degeneracy & &  &\bluecheck &   \\
        \hline
    \end{tabular}
    }
    \caption{ Summary of numerical results}
    \label{tab:summary}
   
\end{table}
\endgroup

\subsection*{Implications for BPS Chaos}

The main motivation for this work was to sharpen the physical intuition behind BPS chaos, or more generally chaos when there is a degenerate subspace of a Hamiltonian. The original proposal of \cite{Lin:2022rzw,Lin:2022zxd} was that level repulsion between eigenvalues of an operator projected to a BPS subspace would serve as a good diagnostic of the chaotic nature of that subspace.\footnote{See also \cite{Chen:2026vml} for a different proposal for a diagnostic of chaos in a BPS subspace, when there is a marginal coupling in the theory. See also \cite{Chang:2024lxt,Chryssanthacopoulos:2026mni} for a connection between fortuity and BPS chaos in the supersymmetric SYK model.} This was later refined in \cite{Chen:2024oqv} where the point of strong vs weak chaos was emphasized, meaning that a truly chaotic subspace should have random matrix correlations reaching out to the full size of the subspace. 

Our results make the connection sharper and more quantitative. If the BPS subspace is truly randomly oriented with respect to the eigenbasis of the unprojected operator, and the relative size of the subspace is small,\footnote{This seems to be the case in all known examples, with some extra subtleties in the D1D5 CFT \cite{Belin:2026vfd}.} we expect a semicircle law and a true GUE spectral form factor. Interestingly, this does not appear to be the case for the simplest operators in the supersymmetric SYK model \cite{Lin:2022zxd}. This means that few fermion operators seem to still retain some non-chaotic details of the Hamiltonian, perhaps related to the fact that it is $q$-local. It would be interesting to understand this better, and see if the chaotic nature of simple operators improves as $q$ increases.

There is a similar statement related to free probability: one would expect two different simple operators to be free, meaning that once projected, their relative eigenbases are random. For few fermion operators, this is not the case \cite{kyriakosunpublished}, again showing that some form of order remains. As the probe operator is taken to be more and more complicated (more fermions and hence heavier in scaling dimension), the freeness between any two such operators becomes better. This is consistent with the super-JT calculations of \cite{Lin:2022zxd}. It would be interesting to understand this phenomenon better.

\subsection*{Physical Systems}

The work done in this paper is a toy model: we used an actual Haar-random rotation, which means the spectrum of the projected operator becomes probabilistic, inheriting a probability distribution from the Haar measure. In a physical system with a definite choice of Hamiltonian and probe operator, there will be a definite unitary that rotates the two bases. An important question is to quantify how close to a true Haar-random unitary operator the physical rotation is. One way to quantify this, would be to study the spectrum of the unitary, and see how close it is to the circular ensemble (for example by studying its spectral form factor). While this is in principle doable, it seems to be complicated in general.

The more standard way to ask this question has been through Designs. $k$-Designs are probability distributions on unitaries that are not fully Haar-random, but agree with the Haar measure up to the $k$-th moment, see \cite{Roberts:2016hpo} for an early discussion in the high-energy context, or \cite{DiVincenzo:2001lru} for an earlier reference in quantum information theory.  At this stage, this is still probabilistic, but there also exist approximate $k$-designs (see for example \cite{Dankert:2009yux}), which could be used as true diagnostics of how close to a Haar random unitary a physical system really is. This could also help with the questions raised in the context of supersymmetric SYK models. We hope to return to this question in the future.

\subsection*{Large Projections}

One surprise of this work is that we could achieve quasi-random matrix statistics by projecting over large relative subspaces. This is \textit{not} a universal feature, and only works for specific spectra of the full operator. For example, for a spin operator, we saw that original eigenvalue recovery is guaranteed beyond a critical size of the projection. But if the probe operator had itself a non-degenerate spectrum, we found that projecting out only a small fraction of the Hilbert space sufficed to obtain random matrix statistics. It is important to emphasize that the probability distribution is \textit{not} the GUE one in that case, as for example the density of states is not a semicircle. But the level statistics look GUE-like. 

It is well known that the density of states is not a universal feature of chaotic systems, but it is the level statistics that are universal. It is also well known that small perturbations of an integrable Hamiltonian quickly make it chaotic. It is thus tempting to speculate that in such a context, we can rewrite the probability distribution as
\be
P(\lambda_j) = \int d\lambda \Delta_K^2(\lambda) e^{-V(\lambda)}
\ee
where $V(\lambda)$ is a single-trace potential. It is known that the spectral two-point function of such a random matrix model universally gives the sine kernel, even if the one-point function depends sensitively on the details of the potential $V$. It would be very interesting to try to derive such a result analytically, and to understand precisely which conditions on the initial unprojected spectrum need to be satisfied for it to be true. 

\subsection*{Random Tensors and Free Probability}

Another interesting avenue to explore is free probability for tensors. This is relevant for 3D gravity \cite{Belin:2020hea,Chandra:2022bqq,Belin:2023efa,Jafferis:2025vyp,deBoer:2023vsm,Collier:2023fwi, Collier:2024mgv,deBoer:2024mqg,deBoer:2025rct,Hartman:2025ula,Hartman:2025cyj,Chandra:2025fef,Chandra:2024vhm,Hung:2024gma,Belin:2026pko,Jafferis:2026gzn}, where heavy-heavy-heavy OPE coefficients are modeled as erratic 3-tensors, generalizing the erratic matrices of ETH. While much more recent than free probability of matrices, there also exists a free probability theory for tensors (see for example \cite{collins2025freecumulantsfreenessunitarily,nechita2025tensorfreeprobabilitytheory,bonnin2026freenesstensors,kunisky2024tensorcumulantsstatisticalinference,Lancien_2024}). It would be very interesting to see if some technology can be imported to study 3D gravity, and more generally OPE coefficients in chaotic CFTs.

\section*{Acknowledgements}
We are delighted to thank  Yiming Chen, Jan de Boer, Alice de Vito, Kyriakos Papadodimas, Julian Sonner, Rita Vadala, Alberto Zaffaroni for useful discussions. We are also thankful to Tolya Dymarsky for comments on the draft. YF was supported by the 1st Yu Ilhan scholarship from the Yuhan Foundation. YF also thanks the Department of Physics at the University of Milano-Bicocca for hospitality where part of this work was conducted. AI was used to simplify formulas in Section \ref{sec: jpdf from RMT} and to search for relevant literature.

\appendix

\section{Eigenvalue recovery: an algebraic proof\label{App: Eigenvalue Recovery}}
By looking at the cases of spin-$1/2$ in Section \ref{sec: Numerics} and spin-$1$ in Appendix \ref{App: DOS for spin1}, it seems that every eigenvalue of $O$ which has multiplicity $\gamma$ appears as an exact eigenvalue of $O_K$ when $\alpha=\frac{K}{N}>1-\frac{\gamma}{N}$. This is actually a general property, and the algebraic proof of such behavior is given in Theorem \ref{theomult} (this theorem can also be found in \cite{Thomp}). In order to prove it, we need a few definitions.
\begin{definition}
    Given an $N\times N$ matrix $A$, the $K$-th compound $C_K(A)$ is the matrix of the determinants of all the possible $K\times K$ minors of $A$, where a minor is built by selecting $K$ rows and $K$ columns of $A$; if the indices of the selected rows and column coincide, then it is a \textit{principal minor}.\\
    $C_K(A)$ is a $\binom{N}{K}\times\binom{N}{K}$ matrix and satisfy the properties
    \begin{align}
        C_K(AB)=C_K(A)\cdot C_K(B),\qquad C_K(A^{-1})=C_K(A)^{-1}
    \end{align}
\end{definition}
\begin{definition}
    For fixed integers $N$ and $K$, $1\leq K<N$, let $Q_{NK}$ denote the set of all sequences $\omega=\{i_1,i_2,\ldots,i_K\}$ of integers such that $1\leq i_1<i_2<\ldots<i_K\leq N$.
\end{definition}
\begin{definition}
    Let $A[\omega|\tau]$, with $\omega,\tau\in Q_{NK}$, be the $K\times K$ minor of $A$ obtained by selecting the rows labeled by $\omega$ and the columns labeled by $\tau$.
\end{definition}
\begin{definition}
    Let $f_{[\omega]}(\lambda)$ be the characteristic polynomial of $A[\omega|\omega].$
\end{definition}

\begin{theorem}\label{theomult}
    Consider the matrix $\tilde{A}=\text{diag}\left( \lambda_1,\lambda_2,\ldots,\lambda_N \right)$, with possibly degenerate eigenvalues. Let $\{\mu_i\}_{i=1\ldots D}$ be the distinct eigenvalues of $\tilde{A}$ and $\{\gamma_i\}_{i=1\ldots D}$ their multiplicities $(D\leq N)$. Now perform the usual change of basis and projection $A_K=P_KS\tilde{A}S^{-1}P_K$, where we let $S$ to be a generic non-singular $(\det S\neq0)$ invertible matrix. Then the multiplicity of $\mu_i$ as an eigenvalue of $A_K$ has a lower bound $\Ga_K$:
    \begin{align}
        \gamma_{A_K}(\mu_i)\geq\Gamma_K
    \end{align}
    where
    \begin{align}
        \Gamma_K=
        \begin{cases}
            K-(N-\gamma_{i})& \quad K\geq N-\gamma_{i}\\
            0&\quad K< N-\gamma_{i}\,.
        \end{cases}
    \end{align}
\end{theorem}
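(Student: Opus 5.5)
The plan is a rank argument rather than a computation with compound matrices, since the bound $\Gamma_K$ is exactly a rank--nullity count. Fix a distinct eigenvalue $\mu_i$ with multiplicity $\gamma_i$, and write $B=S\tilde{A}S^{-1}$. Because $\tilde{A}$ is diagonal, $B$ is diagonalizable, and
\begin{equation}
\mathrm{rank}(B-\mu_i\mathbb{1}_N)=\mathrm{rank}\big(S(\tilde{A}-\mu_i\mathbb{1}_N)S^{-1}\big)=\mathrm{rank}(\tilde{A}-\mu_i\mathbb{1}_N)=N-\gamma_i .
\end{equation}
The invertibility of $S$ is the only property of the change of basis that is used. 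Nothing Haar-specific is needed, which is why the theorem allows a generic non-singular $S$.

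Next I identify the object whose spectrum is being counted. In the notation of the definitions above, this is the principal submatrix $B[\omega|\omega]$ with $\omega=\{1,\dots,K\}$, i.e.\ $A_K$ restricted to the range of $P_K$, as in the numerics. The key observation is that restricting to the rows and columns in $\omega$ commutes with subtracting $\mu_i$ times the identity:
\begin{equation}
(B-\mu_i\mathbb{1}_N)[\omega|\omega]=B[\omega|\omega]-\mu_i\mathbb{1}_K .
\end{equation}
A submatrix never has larger rank than the full matrix. Hence $\mathrm{rank}(B[\omega|\omega]-\mu_i\mathbb{1}_K)\le N-\gamma_i$, and by rank--nullity
\begin{equation}
\dim\ker\big(B[\omega|\omega]-\mu_i\mathbb{1}_K\big)\ge K-(N-\gamma_i).
\end{equation}
This gives a geometric multiplicity of at least $\Gamma_K$ whenever $K\ge N-\gamma_i$. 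For $K<N-\gamma_i$ the bound is the trivial $\Gamma_K=0$. The algebraic multiplicity, i.e.\ the order of $\mu_i$ as a root of $f_{[\omega]}(\lambda)$, is at least the geometric one, so the claim holds in either sense of multiplicity.

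Two further remarks complete the argument. First, if one instead counts eigenvalues of the full $N\times N$ matrix $P_KBP_K$, its spectrum is that of $B[\omega|\omega]$ together with $N-K$ extra zeros. These can only increase the multiplicity, which matters only when $\mu_i=0$, so the lower bound persists. Second, an alternative route matching the compound-matrix definitions would expand the coefficients of $f_{[\omega]}$ in principal minors of $B-\mu_i\mathbb{1}$. One would then use $C_r(B-\mu_i\mathbb{1})=C_r(S)\,C_r(\tilde{A}-\mu_i\mathbb{1})\,C_r(S^{-1})=0$ for $r>N-\gamma_i$, which kills the lowest $\Gamma_K$ coefficients and gives $(\lambda-\mu_i)^{\Gamma_K}\mid f_{[\omega]}$.

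I expect no real obstacle. The only point needing care is being precise about which matrix's multiplicity is meant (the $K\times K$ block versus the padded $N\times N$ matrix) and about the geometric-versus-algebraic distinction; both are handled above.
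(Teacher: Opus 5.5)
Your proof is correct, and it takes a different route from the paper's. The paper works at the level of the characteristic polynomial. It takes the $K$-th compound of $\lambda\mathbb{I}-A=S(\lambda\mathbb{I}-\tilde{A})S^{-1}$ and reads off the diagonal entry as $f_{[\omega]}(\lambda)=\sum_{\tau\in Q_{NK}}\det S[\omega|\tau]\det S^{-1}[\tau|\omega]\prod_{j\in\tau}(\lambda-\lambda_j)$. Each product omits only $N-K$ of the $\lambda_j$, so every term keeps at least $\gamma_i-(N-K)$ factors of $(\lambda-\mu_i)$. You instead observe that $B[\omega|\omega]-\mu_i\mathbb{1}_K$ is a submatrix of a matrix of rank $N-\gamma_i$, and then apply rank--nullity. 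Your argument is shorter, avoids compound matrices entirely, and proves slightly more. It bounds the geometric multiplicity, which is a genuine strengthening: for non-unitary $S$ the block $A_K$ need not be diagonalizable, whereas the paper's divisibility argument directly controls only the algebraic multiplicity. What the paper's route buys is the explicit expansion of $f_{[\omega]}$ over $\tau$ with coefficients $\det S[\omega|\tau]\det S^{-1}[\tau|\omega]$. This expansion shows exactly which terms must vanish for the bound to be exceeded, and it is reused directly in Theorem~\ref{theomult2} to argue that equality holds almost surely. Your rank argument gives no handle on that equality case. Your closing alternative, via vanishing $r\times r$ principal minors of $B-\mu_i\mathbb{1}$ for $r>N-\gamma_i$, is the coefficient-level counterpart and sits between the two approaches. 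It expands the characteristic polynomial around $\lambda=\mu_i$ rather than over the eigenbasis of $\tilde{A}$.
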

\begin{proof}
    If we define $A=S\tilde{A}S^{-1}$, then
    \begin{align}
        \lambda \mathbb{I}-A=S(\lambda\mathbb{I}-\tilde{A})S^{-1}.
    \end{align}
    In terms of the $K$-th compound, this reads
    \begin{align}
        C_K\left(\lambda\mathbb{I}-A \right)=C_K(S)C_K(\lambda\mathbb{I}-\tilde{A})C_K(S)^{-1}.
    \end{align}
    By definition,
    \begin{align}
        C_K(\lambda\mathbb{I}-A)_{\omega\omega}=f_{[\omega]}(\lambda),\qquad\omega\in Q_{NK}
    \end{align}
    while
    \begin{align}
        C_K(\lambda\mathbb{I}-\tilde{A})_{\tau\sigma}=\delta_{\tau\sigma}\prod_{j\in\tau}(\lambda-\lambda_j),\qquad\tau\in Q_{NK}
    \end{align}
    (note that $ C_K(\lambda\mathbb{I}-\tilde{A})$ is diagonal). Hence,
    \begin{align}\label{charpolK}
        f_{[\omega]}(\lambda)=\sum_{\tau\in Q_{NK}}\det S[\omega|\tau]\det S^{-1}[\tau|\omega]\prod_{j\in\tau}(\la-\la_j)
    \end{align}
    but
    \begin{align}
        \prod_{j\in\tau}(\la-\la_j)=f(\la)\prod_{j\notin\tau}(\la-\la_j)^{-1}
    \end{align}
    where $f(\la)$ is the characteristic polynomial of the full $A$. Hence,
    \begin{align}
        f_{[\omega]}(\lambda)=\sum_{\tau\in Q_{NK}}\det S[\omega|\tau]\det S^{-1}[\tau|\omega]f(\la)\prod_{j\notin\tau}(\la-\la_j)^{-1}\,.
    \end{align}
    The product $\prod_{j\notin\tau}(\la-\la_j)^{-1}$ removes $N-K$ factors from $f(\la)$. If an eigenvalue $\mu_i$ has multiplicity $\gamma_i$, then $f(\la)$ contains the factor $(\la-\mu_i)^{\gamma_i}$. Hence, if $\gamma_i\geq N-K$, in every term of the sum $\sum_{\tau\in Q_{NK}}$, the exponent of $(\la-\mu_i)$ cannot go under $\gamma_i-(N-K)$. This means that, for $K\geq N-\gamma_i$, $\mu_i$ is a solution of $f_{[\omega]}(\lambda)=0$ with multiplicity at least $K-(N-\gamma_i)$. On the other hand, if $\gamma_i<N-K$, then in general there will be specific choices of $\tau$ for which the product $\prod_{j\notin\tau}(\la-\la_j)^{-1}$ cancels all the factors $(\la-\mu_i)$, and hence the lower bound for the multiplicity of $\mu_i$ is zero. This proves the theorem.

    Notice that the theorem also applies in the specific case where $S$ is unitary, which is the case we are interested in.
\end{proof}

With the previous theorem, we proved that the multiplicity of the eigenvalues of $A$ as eigenvalues of $A_K$ has a lower bound. However, if one tries to extract numerically the matrix $S$ as a Haar unitary and compute the eigenvalues of $A_K$, one finds that the multiplicities of the original eigenvalues never exceed the lower bound. The reason is that the matrix $S$ that one has to pick in order to exceed the lower bound belongs to a set of Lebesgue measure zero, hence cannot be randomly extracted. In the following theorem, we prove this property, which is also valid in the case where $S$ is just invertible and non-singular.
\begin{theorem}\label{theomult2}
    If, in Theorem \ref{theomult}, $S\in GL_N(\mathbb{C)}$ is chosen randomly according to some probability measure, then \emph{almost surely} the multiplicity of $\mu_i$ as an eigenvalue of $A_K$ equals the lower bound
    \begin{align}\label{lower_bound}
        \gamma_{A_K}(\mu_i)=\Ga_K=
        \begin{cases}
            K-(N-\gamma_{i})& \quad K\geq N-\gamma_{i}\\
            0&\quad K< N-\gamma_{i}
        \end{cases}\quad .
    \end{align}
    This means that the probability of extracting an $S$ for which the lower bound is exceeded is zero.
\end{theorem}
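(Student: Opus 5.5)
Fix the eigenvalue $\mu_i$ and write $\gamma=\gamma_i$. The plan is to recast the statement as a Zariski-genericity statement and then exhibit a single good witness. Here ``almost surely'' is understood for any probability measure on $GL_N(\mathbb{C})$ that is absolutely continuous with respect to Lebesgue measure, such as the Haar measure on $U(N)$ viewed inside its own real-analytic structure. The multiplicity in question is the algebraic multiplicity of $\mu_i$ as a root of $f_{[\omega]}(\lambda)=\det\!\big(\lambda\mathbb{I}_K-(S\tilde A S^{-1})[\omega|\omega]\big)$, with $\omega=\{1,\dots,K\}$.

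By Theorem \ref{theomult}, $f_{[\omega]}(\lambda)=(\lambda-\mu_i)^{\Gamma_K}g_S(\lambda)$. The claim is exactly that $g_S(\mu_i)\neq 0$ for almost every $S$. By \eqref{charpolK}, the coefficients of $f_{[\omega]}$ are polynomials in the entries of $S$ and $\det S^{-1}$, because $\det S^{-1}[\tau|\omega]$ is a minor of the adjugate divided by $\det S$. Hence
\[
\Phi(S):=\frac{d^{\Gamma_K}}{d\lambda^{\Gamma_K}} f_{[\omega]}(\lambda)\Big|_{\lambda=\mu_i}=\Gamma_K!\,g_S(\mu_i)
\]
is a rational function of the entries of $S$ (a polynomial in those entries times a power of $(\det S)^{-1}$), and it is regular on $GL_N$. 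The zero set of such a function on the connected variety $GL_N(\mathbb{C})$ is either everything or a proper algebraic subset, and a proper algebraic subset has Lebesgue measure zero. For the unitary case I would use that $U(N)$ is Zariski dense in $GL_N(\mathbb{C})$, since $U(N)$ is a real form of $GL_N$. Therefore $\Phi|_{U(N)}$ is a nonzero real-analytic function whenever $\Phi\not\equiv 0$, and its zero set is then Haar-null. So everything reduces to finding one $S$, which I will take unitary, with $\Phi(S)\neq0$.

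To build the witness, reorder the basis of $\tilde A$ so that its $\gamma$ copies of $\mu_i$ come first. Suppose first $K\geq N-\gamma$. Take $S$ to be a permutation matrix that places $K-(N-\gamma)$ copies of $\mu_i$ inside the block $\omega$, together with all $N-\gamma$ eigenvalues different from $\mu_i$. Then $A_K$ is diagonal with $\mu_i$ appearing exactly $\Gamma_K$ times, so $g_S(\mu_i)=\prod_{j\neq\mu_i\text{ in }\omega}(\mu_i-\lambda_j)\neq0$. Now suppose $K<N-\gamma$. Put $K$ eigenvalues different from $\mu_i$ in the block; this is possible because there are $N-\gamma>K$ of them. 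Then $f_{[\omega]}(\mu_i)\neq0$. In both cases $\Phi(S)\neq0$, so $\Phi\not\equiv0$ and the result follows.

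The main obstacle is the genericity transfer from $GL_N$ to a general ``probability measure''. The statement as written is false for singular measures, for example a point mass at a bad $S$. I would therefore state explicitly the hypothesis of absolute continuity, or of Haar measure on $U(N)$, and justify the measure-zero claim by the standard fact that a nonvanishing real-analytic function on a connected real-analytic manifold has a null zero set. The permutation witnesses are unitary, so no density argument is even needed. It is enough that $U(N)$ is connected and that $\Phi$ is real-analytic on it.
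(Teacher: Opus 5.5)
Your proof is correct, and it takes a different and tighter route than the paper. The paper argues that, to exceed $\Gamma_K$, every individual term $g_\tau(S)=\det S[\omega|\tau]\det S^{-1}[\tau|\omega]$ with $\tau\in\tilde Q_{NK}$ must vanish. It then uses a Jacobian-rank/implicit-function count to conclude that this common zero locus $\mathcal{M}$ is lower-dimensional in $GL_N(\mathbb{C})$, hence Lebesgue-null. You instead identify the exceptional event exactly as the zero set of one regular function, $\Phi(S)=\Gamma_K!\,g_S(\mu_i)=\Gamma_K!\sum_{\tau\in\tilde Q_{NK}} g_\tau(S)\prod_{j\in\tau,\,\lambda_j\neq\mu_i}(\mu_i-\lambda_j)$. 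You certify $\Phi\not\equiv 0$ with an explicit permutation-matrix witness, and then invoke the null-zero-set property of nonzero polynomial or real-analytic functions.

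Your route buys three things. First, your condition is necessary and sufficient, while the paper's is only sufficient, because the sum can vanish at $\mu_i$ without each term vanishing. Take $N=3$, $K=1$, $\tilde A=\mathrm{diag}(0,-1,1)$, $\mu_i=0$, and a unitary $S$ with first row $(0,1/\sqrt{2},1/\sqrt{2})$. The $1\times 1$ block equals $0$, exceeding $\Gamma_K=0$, yet $g_{\{2\}}=g_{\{3\}}=1/2\neq 0$. So $\mathcal{M}$ is only part of the exceptional set.

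Second, you establish nonvanishing at a concrete point rather than through a rank claim. That matters because the paper's expression for $J_{mn}$ is proportional to $g_\tau$, so it degenerates exactly on $\mathcal{M}$.

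Third, your witness is unitary, so the argument applies directly to Haar measure on $U(N)$, which is the case of physical interest. That measure is singular with respect to Lebesgue measure on $GL_N(\mathbb{C})$, so a Lebesgue-null statement on $GL_N$ does not cover it.

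You are also right that ``some probability measure'' has to be restricted, since a point mass at a bad $S$ is a counterexample. As you note yourself, the Zariski-density remark is then unnecessary.
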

\begin{proof}
    Take again (\ref{charpolK}):
    \begin{align}\label{charpolK2}
        f_{[\omega]}(\lambda)=\sum_{\tau\in Q_{NK}}\det S[\omega|\tau]\det S^{-1}[\tau|\omega]\prod_{j\in\tau}(\la-\la_j)\,,
    \end{align}
    which is a sum of $\binom{N}{K}$ terms.
    If we want $\mu_i$ to be a solution of $ f_{[\omega]}(\lambda)=0$ with multiplicity at least $\Ga_K+1$, then $(\la-\mu_i)$ must be present in each term of the sum in (\ref{charpolK2}) at least $\Ga_K+1$ times. Theorem \ref{theomult} only guarantees that such a term is present at least $\Ga_K$ times. This means that, if we want to exceed the lower bound, all the terms which contain $(\la-\mu_i)$ just $\Ga_K$ times must vanish: the matrix $S'$ that allows this to happen must satisfy the conditions 
    \begin{align}
        g_\tau(S')=\det S'[\omega|\tau]\cdot\det S'^{-1}[\tau|\omega]=0\qquad \forall\tau\in \tilde{Q}_{NK}
    \end{align}
    where

     \begin{align}
        \tilde{Q}_{NK}=\left\{\sigma\in Q_{NK} \middle| \text{ card}\left\{j\in\sigma\middle|\la_j=\mu_i \right\}=\Ga_K \right\}
    \end{align}
    The cardinality of $\tilde{Q}_{NK}$ is
    \begin{align}
        \text{card}(\tilde{Q}_{NK})=\binom{\gamma_i}{\Ga_K}\cdot\binom{N-\gamma_i}{K-\Ga_K}>0\,.
    \end{align}
    So we get $h=\text{card}(\tilde{Q}_{NK})$ conditions on $S'$, which define a submanifold $\mathcal{M}\subseteq GL_N(\mathbb{C)}$. By the implicit function theorem, the dimension of $\mathcal{M}$ is $N-r$ where $r$ is the rank of the Jacobian $J$ of the functions $g_\tau$ for $\tau\in\tilde{Q}_{NK}$.

    We have that $r$ cannot be zero: in fact, $r=0\Leftrightarrow J=0$, which means that all the $g_\tau$ should be constant on $GL_N(\mathbb{C)}$. To check this, we can use Jacobi's formula for computing the derivative of a determinant to compute the component $J_{mn}$ of the Jacobian, with $m\in\omega$ and $n\in\tau$. We get
    \begin{align}
        J_{mn}=\frac{\partial}{\partial S_{mn}}g_\tau(S)=\left[ \left(S[\omega|\tau]^{-1} \right)_{nm} - \left(S^{-1}\right)_{nm} \right]g_\tau(S)\,.
    \end{align}
    The $g_\tau$'s are zero on the submanifold, but not in general. Moreover, $S[\omega|\tau]^{-1}\neq S^{-1}[\omega|\tau]$ in general, which implies that $(S[\omega|\tau]^{-1})_{nm}\neq (S^{-1}[\omega|\tau])_{nm}=(S^{-1})_{nm}$. This is enough to deduce that $J$ is not identically zero on $GL_N(\mathbb{C)}$. 
    Hence,
    \begin{align}
       \dim\mathcal{M}= N-r<N\,,
    \end{align}
    which implies that the Lebesgue measure $\mathcal{M}$ is zero. This proves that the probability of extracting an $S'\in\mathcal{M}$ is zero.
\end{proof}

With the last two theorems, we have learned that, if the spectrum of $O$ is degenerate, the statistics of $O_K$ are different from the ones predicted by RMT, since beyond certain values of $K$ a discrete spectrum is superimposed on the continuous one. If we want to keep the RMT description, we need to stay below those transition values, or take an operator $O$ with a fully non-degenerate spectrum, in which case all the multiplicities are equal to $1$. In the limit $N\to\infty$ with $\alpha=\frac{K}{N}$ finite, we expect to observe some RMT correlations in eigenvalues and level repulsions for $\alpha\in(0,1)$, which was analytically shown in Section \ref{sec: jpdf from RMT}.

\section{The projection of the operator with two eigenvalues from the $S$-transform\label{App-Proj2eig}}

Knowing that $P_K$ and $U O U^\dagger$ are free from each other, this allows us to use the free multiplicative convolution to derive the expression of the one-point eigenvalue distribution of $O'_K$, essentially $O_K$.
Consider $ O$ has two eigenvalues in the following form:
\begin{align}
    O = \text{diag}(&\underbrace{1,\ 1,\ \dots,\ 1}_{\gamma \text{ times}},\ 
                   \underbrace{0,\ 0,\ \dots,\ 0}_{N-\gamma \text{ times}})\,.
\end{align}
Notice that in Appendix \ref{sec:2eignum} we studied numerically a slightly different case, where the eigenvalues were $\pm 1$; here we need to consider operators with non-vanishing mean, otherwise the S-transform cannot be defined. This is not too bad since we can always relate the two cases with the shift-rescaling of (\ref{shift_resc}) after performing the S-transform. This case is simpler because both $O$ and $P_K$ are projectors, hence share the same asymptotic distributions:
\begin{align}
    P_K&\sim\mu_1(\lambda)=\left((1-\alpha)\delta_0+\alpha\delta_1 \right)\\
    U O U^\dagger&\sim \mu_2(\lambda)=\left((1-\beta)\delta_0+\beta\delta_1 \right)
\end{align}
where $\alpha=\lim_{K,N\to\infty}\frac{K}{N}$ and $\beta=\lim_{\gamma,N\to\infty}\frac{\gamma}{N}$, while $\delta_x$ is the Dirac delta centered in $x$ represented as a functional. The probability distribution of $O_K$ is then
\begin{align}
    \rho_{O_K}=\mu_{P_K}\boxtimes\mu_{U O U^\dagger}\,.
\end{align}

Let us apply the S-transform procedure. The moments of each projector are
\begin{align}
    m_n(\mu_i)=\int d\lambda \lambda^n\left((1-\alpha_i)\delta_0+\alpha_i\delta_1 \right)=\alpha_i\,,
\end{align}
where $\alpha_1=\alpha$, $\alpha_2=\beta$, and $X_1=P_K$, $X_2=U O U^\dagger$. Then the moment generating formal power series is
\begin{align}
    M_{\mu_i}(z)=\sum_{n=1}^\infty m_n(\mu_i)z^n=\frac{\alpha_i z}{1-z}\,,
\end{align}
where we used the formal expression for the geometric series. The inverse of the moment series is then
\begin{align}
    M_{\mu_i}^{-1}(z)=\frac{z}{z+\alpha_i}\,,
\end{align}
which implies that the S-transforms (see (\ref{eq-Stranform})) are
\begin{align}
    S_{\mu_{P_K}}(z)=\frac{1+z}{\alpha+z},\qquad S_{\mu_{U O U^\dagger}}(z)=\frac{1+z}{\beta+z}\,.
\end{align}
Now we know, that the S-transform of $\rho_{O_K}$ reads
\begin{align}
    S_{\rho_{O_K}}(z)=S_{\mu_{P_K}}(z) \cdot S_{\mu_{U O U^\dagger}}(z)=\frac{(1+z)^2}{(z+\alpha)(z+\beta)}\,.
\end{align}
Knowing $S_{\rho_{O_K}}$, we can go back to the inverse moment series of $\rho_{O_K}$:
\begin{align}
    M_\rho^{-1}(z)=\frac{z}{1+z}S_\rho(z)=\frac{z(1+z)}{(z+\alpha)(z+\beta)}\,.
\end{align}
This leads to the moment function
\begin{equation}
    M_{\rho_{O_K}}(z)=\frac{1-z(\alpha+\beta)\pm\sqrt{(r_+z-1)(r_-z-1)}}{2(z-1)}
\end{equation}
where
\begin{equation}
    r_\pm=\alpha+\beta-2\alpha\beta\pm\sqrt{4\alpha\beta(1-\alpha)(1-\beta)}\,.
\end{equation}

However, we know from Theorem \ref{theomult} and \ref{theomult2} that $\rho$ is the superposition of a continuous and a deterministic distribution due to the eigenvalue recovery phenomenon. Moreover, since here $O_K$ is not restricted to the subspace onto which we are projecting, we have to count the trivial $N-K$ vanishing eigenvalues. Hence, the multiplicity of $\{0\}$ as an eigenvalue of $O_K$ is obtained by summing $N-K$ to (\ref{lower_bound}):
\begin{equation}
\begin{aligned}
    \frac{1}{N}\gamma_{O_K}(0)&=
    \begin{cases}
        1-\beta,\qquad &\alpha\geq \beta\\
        1-\alpha, &\alpha<\beta
    \end{cases}\\
    &=1-\min(\alpha,\beta)\equiv\mathcal{N}_0(\alpha,\beta)\,.
\end{aligned}
\end{equation}
The multiplicity of $\{1\}$, on the other hand, is simply given by (\ref{lower_bound}):
\begin{equation}
\begin{aligned}
    \frac{1}{N}\gamma_{O_K}(1)&=
    \begin{cases}
        \alpha+\beta-1,\qquad &\alpha\geq 1-\beta\\
        1-\alpha, &\alpha<1-\beta
    \end{cases}\\
    &=\max(\alpha+\beta-1,0)\equiv\mathcal{N}_1(\alpha,\beta)\,.
\end{aligned}
\end{equation}
This allows us to separate the continuous part of the distribution from the deterministic one:
\begin{align}
    \rho_{O_K}=\mathcal{N}(\alpha,\beta)\tilde{\rho}_{O_K}\,+\mathcal{N}_0(\alpha,\beta)\delta_0+\mathcal{N}_1(\alpha,\beta)\delta_1\,,
\end{align}
where $\mathcal{N}=1-\mathcal{N}_0-\mathcal{N}_1$ and $\tilde{\rho}_{O_K}$ is a continuous probability distribution. The relation between $\rho_{O_K}$ and $\tilde{\rho}_{O_K}$ imply a relation between the moment generating series:
\begin{equation}
    \begin{aligned}
        M_{\tilde{\rho}_{O_K}}(z)&=\frac{1}{\mathcal{N}}\left(M_\rho(z)-\mathcal{N}_1\frac{z}{1-z} \right)\\
        &=\frac{1+(2\mathcal{N}_1-\alpha-\beta)z\,\pm\sqrt{(r_+z-1)(r_-z-1)}}{2(z-1)\mathcal{N}}\,.
    \end{aligned}
\end{equation}
The Cauchy transform is therefore:
\begin{equation}
    \begin{aligned}
        G_{\tilde{\rho}_{O_K}}(z)&=\frac{1}{z}\left(1+M_{\tilde{\rho}_{O_K}}\left(\frac{1}{z} \right) \right)\\
        &=\frac{(1-2\mathcal{N})z+2(1-\mathcal{N}_0)-(\alpha+\beta)\pm\sqrt{(r_+-z)(r_--z)}}{2\mathcal{N}(1-z)z}\,.
    \end{aligned}
\end{equation}
Finally, we compute $\tilde{\rho}_{O_K}$ by applying the Stieltjes inversion formula:
\begin{equation}
    \begin{aligned}
        \tilde{\rho}_{O_K}(\lambda)&=-\frac{1}{\pi}\lim_{\varepsilon\to 0}\mathrm{Im}~G_{\tilde{\rho}_{O_K}}(\lambda+i\varepsilon)=\\
        &=-\frac{1}{\pi}\mathrm{Im}~\left[\frac{(1-2\mathcal{N})\lambda+2(1-\mathcal{N}_0)-(\alpha+\beta)-\sqrt{(r_+-\lambda)(r_--\lambda)}}{2\mathcal{N}(1-\lambda)\lambda}\right]
    \end{aligned}
\end{equation}
where we have chosen the ``$-$" branch of the square root in order to have a positive distribution $\tilde{\rho}_{O_K}$. This distribution vanishes except when $(r_+-\lambda)(r_--\lambda)<0$, within the range for $\lambda\in(r_-,r_+)$, which yields
\begin{align}\label{2eigdistrth}
    \tilde{\rho}(\lambda)=
    \begin{cases}
    \frac{\sqrt{(r_+-\lambda)(\lambda-r_-)}}{2\pi\lambda(1-\lambda)\mathcal{N}},\qquad &\lambda\in(r_-,r_+)\\
    0 &\lambda\notin (r_-,r_+)
    \end{cases}
\end{align}
with $\mathcal{N}=\min(\alpha,\beta)-\max(\alpha+\beta-1,0)$.

When we take $\beta=\frac{1}{2}$ in (\ref{2eigdistrth}) we get (\ref{Spin_trunc}) and (\ref{Spin_trunc2nd}) after a proper shift and rescaling. Similarly, if we set $\beta=\frac{3}{4}$ and shift-rescale,  we obtain (\ref{2eigdistr}).

With this procedure, one can, in principle, compute the distribution $\rho$ of $O_K$ for any operator $O$ distributed according to
\begin{align}
    \mu=\sum_{i=1}^D\alpha_i\delta_{\lambda_i}
\end{align}
where $\lambda_1,\ldots,\lambda_D$ are the distinct fixed eigenvalues of $O$ following the same steps. However, finding the inverse of the moment function for more eigenvalue cases can be hard to solve analytically.

In the numerical computations, we saw that, for generic values of $\alpha$, the statistics of the eigenvalues depended on the choice of the spectrum of $O$, with a discrete spectral density that might appear superimposed to a continuous one in the case of degeneracy. 

\section{Proof of Theorem \ref{Theorem-1pt from S trans}\label{App-TheoremProof}}

 From the previous section, we know that the S-transform of a projector is
    \begin{align}
        S_\mu(z)=\frac{1+z}{\alpha+z}\,.
    \end{align}
    Consider the moment generating series of $f$
    \begin{align}
        M_f(z)=\sum_{k=1}^\infty f_kz^k\,,
    \end{align}
    where $f_k=f[X^k]$ is the $k$-th moment of $f$, and denote the formal inverse of such series as
    \begin{align}
        M_f^{-1}(z)=\sum_{k=1}^\infty \F_kz^k\,.
    \end{align}
    In order for $M_f^{-1}$ to be well defined, we have to assume that $f_1\neq 0$ (the reason will become clear soon). This is not too bad as it seems since, in this specific problem (random rotation with $U$ and then projection), a shift in the distribution of $O$ results in the same shift in the distribution of $O_K$, by a reasoning similar to that of (\ref{shift_resc}). So, if we want to study the case $f_1=0$, it is enough to shift $f$ by any positive quantity $a$, compute the free convolution with the S-transform, and shift back.

    The S-transform of $\rho$ reads
    \begin{align}
        S_\rho(z)=S_\mu(z)S_f(z)=\frac{(1+z)^2}{z(\alpha+z)}M_f^{-1}(z)
    \end{align}
    from which we can compute $M_\rho^{-1}$ using the definition of the S-transform:
    \begin{align}
        M_\rho^{-1}(z)=\frac{1+z}{\alpha+z}M_f^{-1}(z)\equiv\sum_{n=1}^\infty\R_nz^n\,.
    \end{align}
If we write $M_f^{-1}$ in terms of the $\F_k$ and expand the fraction $\frac{1+z}{\alpha+z}$ as a formal power series, we can find the expression of the $\R_n$'s written in powers of $\alpha$.
\begin{align}\label{RF}
    \R_n=\sum_{l=1}^n(-1)^{l+1}(\F_{n-l+1}+\F_{n-l})\alpha^{-l},\qquad n\geq1
\end{align}
with the condition $\F_0=0$.

Since the moments $\{\rho_n\}_n$ of $\rho$ (where $\rho_n=\rho[x^n]$) and the $\{\R_n\}_n$ are the coefficients of two formal power series which are one the inverse of the other (with respect to function composition), they can be related using the Lagrange inversion theorem for formal power series \cite{charalambides2002enumerative}:
\begin{align}\label{rhoR}
    \begin{dcases}
        \rho_1=\frac{1}{\R_1}\\
        \rho_n=\frac{1}{n!}\frac{1}{\R_1^n}\sum_{k=1}^{n-1}(-1)^kn^{\bar{k}} B_{n-1,k}(\hat{\R}_1,\hat{\R}_2,\ldots,\hat{\R}_{n-k}),\qquad n\geq2
    \end{dcases}
\end{align}
where $\hat{\R}_l=l!\R_{l+1}\R_1^{-1}$, $\displaystyle B_{n-1,k}(\hat{\R}_1,\hat{\R}_2,\ldots,\hat{\R}_n-k)=n!\sum_J\prod_{i=1}^{n-k}\frac{\hat{\R}_i^{j_i}}{(i!)^{j_i}j_i!}$ are the exponential Bell polynomials, $J=(j_1,\ldots,j_{n-k})$ is a multi-index, subject to the constraints $\sum_{i=1}^{n-k}j_i=k$ and $\sum_{i=1}^{n-k}ij_i=n$, with $j_i=0\ldots k$, and $n^{\bar{k}}=n(n+1)\ldots(n+k-1)$ is the rising factorial.

Now we can substitute (\ref{RF}) in (\ref{rhoR}) to find an expansion of the general $\rho_n$ in powers of $\alpha$. After some tedious calculations, we obtain 
\begin{align}\label{rhoF}
    \begin{dcases}
        \rho_1=\frac{\alpha}{\F_1}\\
        \rho_n=\sum_{k=1}^{n-1}\frac{(-1)^kn^{\bar{k}}}{\F_1^{n+k}}\sum_J \sum_{\textbf{P}}\left(\prod_{m=1}^{n-k}\prod_{l_m=1}^{m+1}\Omega_{m,l_m,\textbf{P}}(\F) \right)\,\alpha^{n+k-\textbf{E}_{\textbf{P}}},\qquad n\geq2
    \end{dcases}
\end{align}
where $\displaystyle \sum_J$ is the sum that appears in the definition of the Bell polynomials, $\displaystyle \sum_\textbf{P}=\prod_{m=1}^{n-k}\sum_{P^{(m)}}$ where $P^{(m)}=\left(p_{1}^{(m)},\ldots,p_{m+1}^{(m)}\right)$ is a multi index subject to the constraint $\sum_{l=1}^{m+1}p_l^{(m)}=j_m$ and such that each $p_l^{(m)}$ runs from $0$ to $j_m$, $\displaystyle \textbf{E}_\textbf{P}=\sum_{m=1}^{n-k}\sum_{l=1}^{m+1}l\,p_l^{(m)} $,
$\displaystyle \Omega_{m,l,\textbf{P}}(\F)=\frac{1}{p_l^{(m)}!}(-1)^{p_l^{(m)}(l+1)}(\F_{m-l+2}+\F_{m-l+1})^{p_l^{(m)}}$.

Since we want to study the limit $\alpha\to 0$ of the moments of $\rho$, let us compute the lowest orders in $\alpha$ of expression (\ref{rhoF}). In particular, we will compute $\rho_n$ up to order $\alpha^2$ and compare it with the same expansion of $\rho_\text{lim}$. In order to do that, we need to apply the Lagrange inversion theorem to write the $\{\F_k\}_k$ in terms of the moments $\{f_k\}_k$:
\begin{align}\label{Ff}
    \begin{dcases}
        \F_1=\frac{1}{f_1}\\
        \F_n=\frac{1}{n!}\frac{1}{f_1^n}\sum_{k=1}^{n-1}(-1)^kn^{\bar{k}} B_{n-1,k}(\hat{f}_1,\hat{f}_2,\ldots,\hat{f}_{n-k}),\qquad n\geq2\,.
    \end{dcases}
\end{align}

%\begin{description}
    %\item[Order 1] 
    \textbf{Lowest order in $\alpha$}
    
    To get the lowest order in $\alpha$ of $\rho_n$ for $n\geq2$, $\textbf{E}_\textbf{P}$ must be the maximum possible. Notice that $\sum_{l=1}^{m+1}p_l^{(m)}=j_m$, we can regard $\textbf{E}_\textbf{P}$ as the total energy of a system composed of $n-k$ separated subsystems: the $m$-th subsystem contains $j_m$ particles occupying $m+1$ energy levels $E_l=l$ for $l=1,\ldots,m+1$. Since $\textbf{P}$ is the multi-index that contains all the occupation numbers of the fictitious system, we can call it a \textit{configuration}.

    There is only one configuration $\textbf{P}^{(1)}$ that maximizes energy: the one in which all particles occupy the highest energy state of each subsystem: $p_{m+1}^{(m)}=j_m$ and $p_l^{(m)}=0$ for $l<m+1$, $\forall m$ (notice that $j_m$ could also be zero). It is easy to check that for this configuration $\textbf{E}_{\textbf{P}^{(1)}}=n+k-1$, which corresponds to the terms of order $\alpha$:
    \begin{align}
        \rho_n^{(1)}=\frac{f_1^n}{n}\sum_{k=1}^{n-1}(-1)^kn^{\bar{k}}f_1^k\sum_J\left(\prod_{m=1}^{n-k}\Omega_{m,m+1,\textbf{P}^{(1)}}(\F) \right)\alpha
    \end{align}
    where we used the fact that $\Omega_{m,l,\textbf{P}^{(1)}}(\F)=1$ for $l<m+1$. By substituting (\ref{Ff}) in $\Omega_{m,m+1,\textbf{P}^{(1)}}(\F)$, we obtain
    \begin{align}
        \rho_n^{(1)}=\frac{f_1^n}{n}\sum_{k=1}^{n-1}(-1)^{n-k-1}n^{\bar{k}}\sum_J\frac{1}{j_1!\ldots j_{n-k}!}\alpha\,.
    \end{align}
    Now, using the definition of the Bell polynomials and their relation with the Lah numbers, we have that
    \begin{equation}
        \sum_J\frac{1}{j_1!\ldots j_{n-k}!}=\frac{1}{(n-1)!}B_{n-1,k}(1!,2!,\ldots,(n-k)!)=\frac{1}{(n-1)!}L(n-1,k)\,,
    \end{equation}
    where $L(n-1,k)$ is a Lah number.
    Hence we obtain
    \begin{equation}
            \rho_n^{(1)}=\frac{f_1^n}{n!}\left(\sum_{k=1}^{n-1}(-1)^{n-k-1}n^{\bar{k}}L(n-1,k) \right)\alpha=\alpha f_1^n\,,
    \end{equation}
    where we used the following identity that relates the rising and falling factorials through the Lah numbers:
    \begin{align}
        \sum_{k=1}^{n-1}(-1)^{n-1-k}L(n-1,k)x^{\bar{k}}=x^{\underline{n-1}}=\frac{x!}{(x-n+1)!}\,,
    \end{align}
   here $x^{\underline{n}}:=\prod_{k=0}^{n-1}(x-k)$ is the falling factorial.

    %\item[Order 2]
     \textbf{Next to lowest order in $\alpha$}
    
    To get the order $\alpha^2$ we need to find the configurations $\{\textbf{P}_i^{(2)}\}_i$ for which $\textbf{E}_{\textbf{P}_i^{(2)}}=\textbf{E}_{\textbf{P}^{(1)}}-1$. Such configurations are those with all the particles in the highest possible energy level in all the systems, except for just one particle in one system, which is on the second-highest level. Here, we have to be careful since the number of particles $j_m$ in the system $m$ could also be zero, in which case all the occupation numbers of that system must vanish. Hence, $\textbf{P}_i^{(2)}$ is such that for the system $i$ (whose highest state has energy $i+1$) we have $p_{i+1}^{(i)}=j_i-1$, $p_i^{(i)}=1$ and $p_l^{(i)}=0$ for $l<i$; for all the other systems labeled by $m\neq i$, $p_{m+1}^{(m)}=j_m$ and $p_l^{(m)}=0$ for $l<m+1$ as in the previous case. This configuration is labeled by the index $i$, which runs from $1$ to $n-k$ while $j_i\neq 0$. It can be easily checked that for these configurations $\textbf{E}_{\textbf{P}_i^{(2)}}=n+k-2$, hence are associated to terms of order $\alpha^2$ in (\ref{rhoF}):
    \begin{equation}
            \rho_n^{(2)}=\frac{f_1^n}{n}\sum_{k=1}^{n-1}(-1)^kn^{\bar{k}}f_1^k \sum_J\sum_{\substack{i=1\\ j_i\neq0}}^{n-k}\Omega_{i,i+1,\textbf{P}_i^{(2)}}(\F) \Omega_{i,i,\textbf{P}_i^{(2)}}(\F) \prod_{\substack{m=1\\m\neq i}}^{n-k}\Omega_{m,m+1,\textbf{P}_i^{(2)}}(\F) \alpha^2\,.
    \end{equation}
    Also in this case, we can easily compute the $\Omega$'s by inserting the expression for the $\F$'s. Even if at first sight it seems complicated to manage the sum $\sum_{i=1}^{n-k}$ with the constraint $j_i\neq0$, after doing the calculations we find that $\rho_n^{(2)}$ is proportional to $\displaystyle \sum_{\substack{i=1\\ j_i\neq0}}^{n-k}j_i$, hence the constraint can be removed. Then, by using the same relation between the Bell polynomials and the Lah numbers that we used for order 1, we obtain
    \begin{equation}
        \begin{aligned}
            \rho_n^{(2)}=\frac{f_1^{n-2}\sigma_f^2}{n!}\left(\sum_{k=1}^{n-1}(-1)^{n-1-k}n^{\bar{k}}k\,L(n-1,k) \right)\alpha^2=\frac{n(n-1)}{2}\sigma_f^2\,(f_1)^{n-2}\alpha^2\,,
        \end{aligned}
    \end{equation}
    where we used the property of the rising factorial $kn^{\bar{k}}=n(n+1)^{\bar{k}}-nn^{\bar{k}}$.
%\end{description}

We can finally put together the results to get an expression for $\rho_n$ up to order $\alpha^2$:
\begin{align}
    \begin{dcases}
        \rho_1=\alpha f_1\\
        \rho_n=\alpha(f_1)^n+\alpha^2\frac{n(n-1)}{2}\sigma_f^2\,(f_1)^{n-2}+O(\alpha^3),\qquad n\geq2\,.
    \end{dcases}
\end{align}
This proves (\ref{final1stord}).

\section{More examples of projection of spin operators}

\subsection{The projection of a spin-$1$ operator}

For completeness, we also give another example of a spin operator, which is a spin-$1$ operator whose spectrum follows the generalized Bernoulli distribution
\begin{equation}
\rho(\lambda)=\frac{\delta(\lambda+1)+\delta(\lambda)+\delta(\lambda-1)}{3}\,.
\end{equation}
This is realized by considering the spectrum of the operator $O$ as $\{1,0,-1\}$ with multiplicities $(333,333,334)$ respectively. Numerical results are shown in Figures \ref{fig:spin1_stats} and \ref{fig:spin1_stats2}, where we find the transition point to be $\alpha=2/3$, similar to the spin-$1/2$ case. Beyond this value, degeneracy of eigenvalues is recovered. In the limit $\alpha \ll1$, the Wigner semicircle is found, as well as the level repulsion and RMT-behaved SFFs, which remain until the crossing of the transition point. The density of states for $\alpha=1/2$ exhibits a ``ghost-like" distribution, which is exactly the distribution of the sum of two free spin-$1$ operators \cite{Camargo:2025zxr}. Notably, all the continuous parts of the density of states for finite $\alpha$ can be derived using the free compression method, as shown in Section \ref{sec: FreeCompression}.

\begin{figure}[htbp]
    \centering
    
    \begin{subfigure}[b]{0.3\textwidth}
        \includegraphics[width=0.85\textwidth]{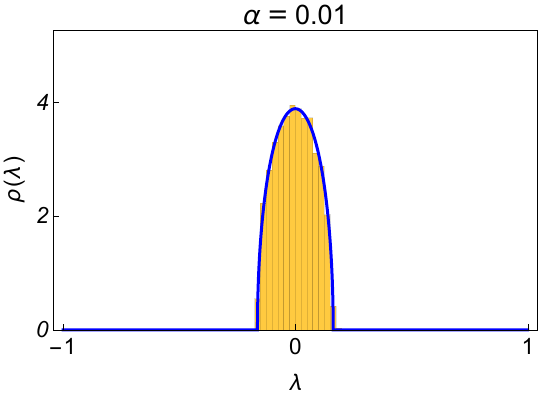}
        \caption{}
    \end{subfigure}
    \hfill
    \begin{subfigure}[b]{0.3\textwidth}
        \includegraphics[width=\textwidth]{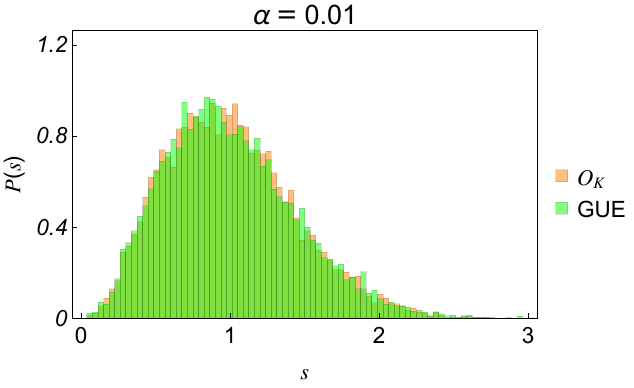}
        \caption{}
    \end{subfigure}
    \hfill
    \begin{subfigure}[b]{0.3\textwidth}
        \includegraphics[width=1.1\textwidth]{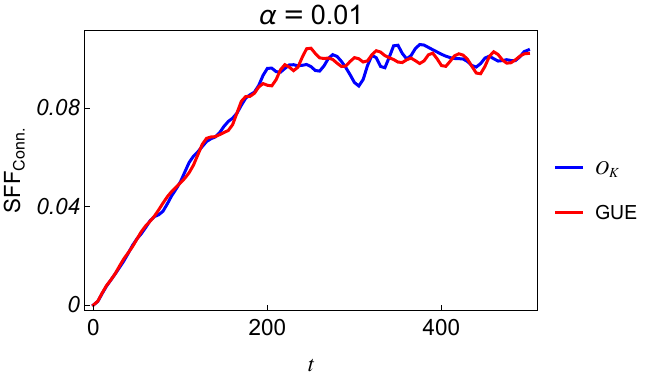}
        \caption{}
    \end{subfigure}
    
    \medskip
    
    \begin{subfigure}[b]{0.3\textwidth}
        \includegraphics[width=0.85\textwidth]{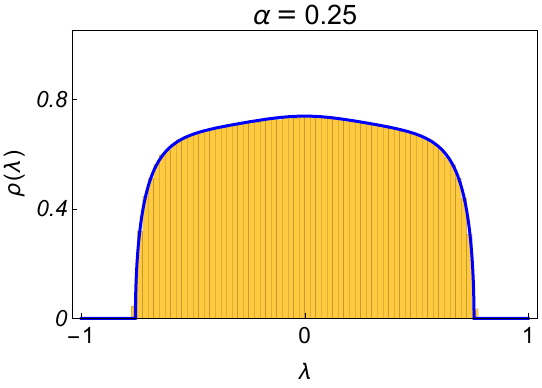}
        \caption{}
    \end{subfigure}
    \hfill
    \begin{subfigure}[b]{0.3\textwidth}
        \includegraphics[width=\textwidth]{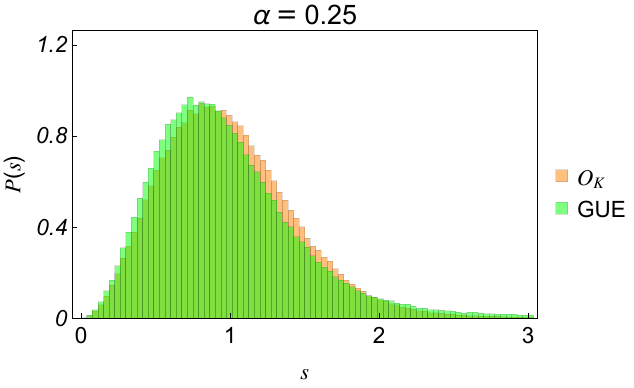}
        \caption{}
    \end{subfigure}
    \hfill
    \begin{subfigure}[b]{0.3\textwidth}
        \includegraphics[width=1.1\textwidth]{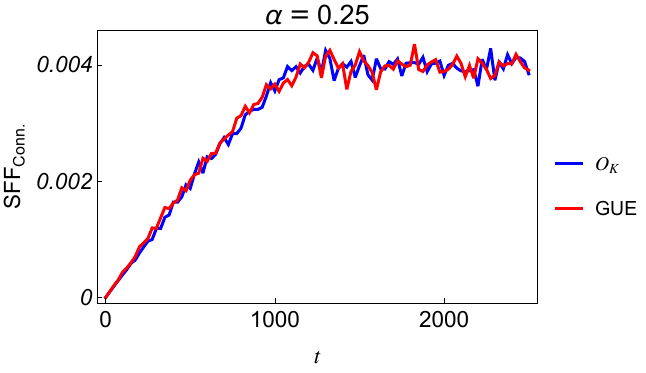}
        \caption{}
    \end{subfigure}
    
    \medskip
    
    \begin{subfigure}[b]{0.3\textwidth}
        \includegraphics[width=0.85\textwidth]{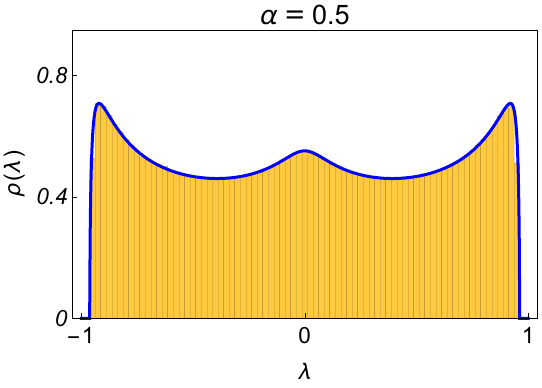}
        \caption{}
    \end{subfigure}
    \hfill
    \begin{subfigure}[b]{0.3\textwidth}
        \includegraphics[width=\textwidth]{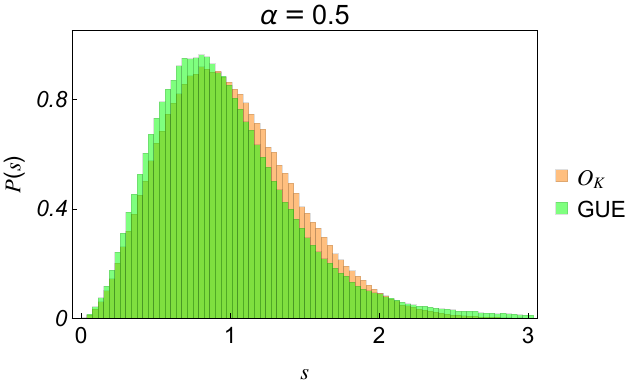}
        \caption{}
    \end{subfigure}
    \hfill
    \begin{subfigure}[b]{0.3\textwidth}
        \includegraphics[width=1.1\textwidth]{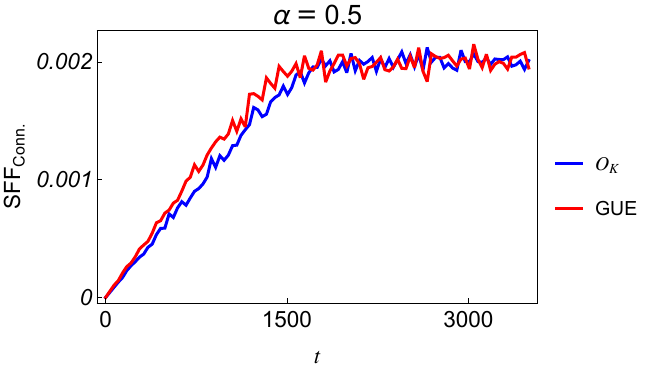}
        \caption{}
    \end{subfigure}
    
    \caption{Different statistics for the projected spin-$1$ operator for different values of $\alpha\leq\frac{1}{2}$. Left column: the numerically normalized level density (yellow), the theoretical prediction from free compression in Appendix \ref{App: DOS for spin1}. Central column: the normalized spacing distribution of $O_K$ (orange), and of the GUE (green). Right column: the connected SFF for $O_K$ (blue), and for the GUE (red)}
    \label{fig:spin1_stats}
\end{figure}

\begin{figure}[htbp]
    \centering
    
    \begin{subfigure}[b]{0.3\textwidth}
        \includegraphics[width=0.85\textwidth]{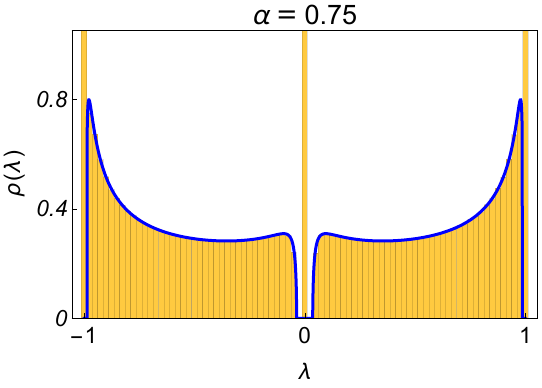}
        \caption{}
    \end{subfigure}
    \hfill
    \begin{subfigure}[b]{0.3\textwidth}
        \includegraphics[width=\textwidth]{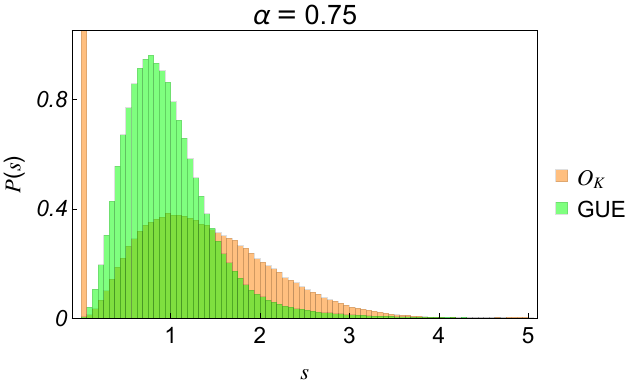}
        \caption{}
    \end{subfigure}
    \hfill
    \begin{subfigure}[b]{0.3\textwidth}
        \includegraphics[width=1.1\textwidth]{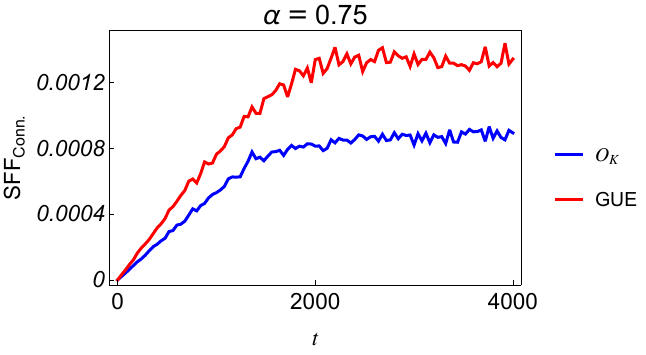}
        \caption{}
    \end{subfigure}
    
    \medskip
    
    \begin{subfigure}[b]{0.3\textwidth}
        \includegraphics[width=0.87\textwidth]{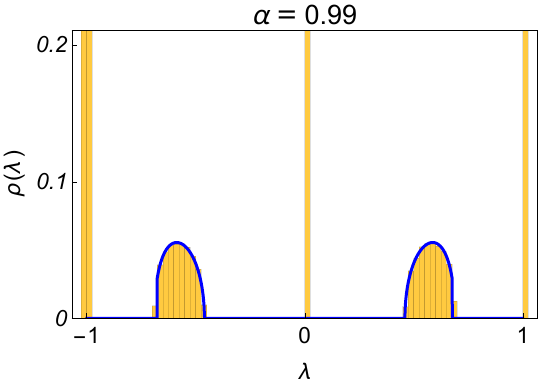}
        \caption{}
    \end{subfigure}
    \hfill
    \begin{subfigure}[b]{0.3\textwidth}
        \includegraphics[width=1.05\textwidth]{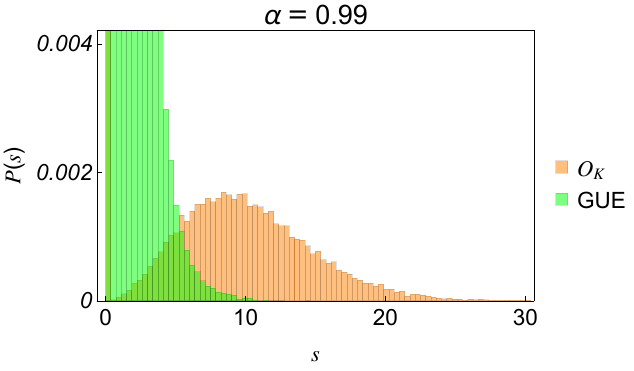}
        \caption{}
    \end{subfigure}
    \hfill
    \begin{subfigure}[b]{0.3\textwidth}
        \includegraphics[width=1.1\textwidth]{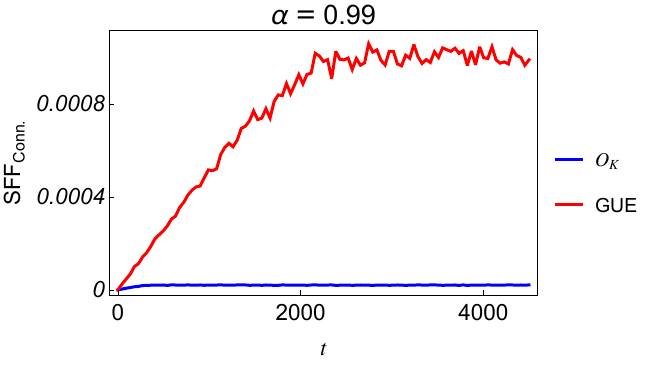}
        \caption{}
    \end{subfigure}
    
    \medskip
    
    \begin{subfigure}[b]{0.3\textwidth}
        \includegraphics[width=0.85\textwidth]{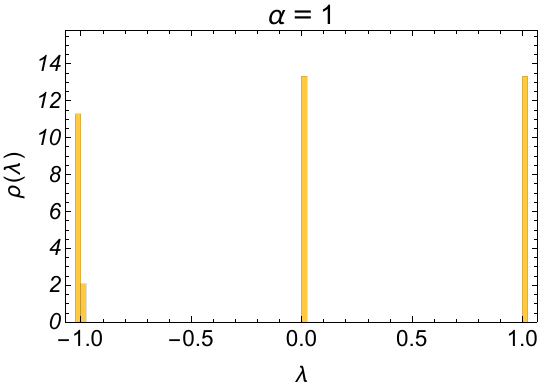}
        \caption{}
    \end{subfigure}
    \hfill
    \begin{subfigure}[b]{0.3\textwidth}
        \includegraphics[width=\textwidth]{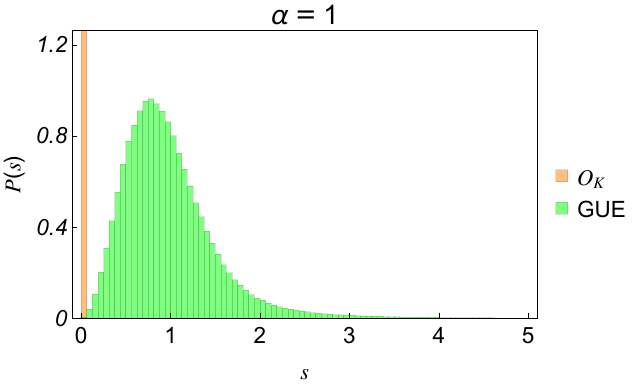}
        \caption{}
    \end{subfigure}
    \hfill
    \begin{subfigure}[b]{0.3\textwidth}
        \includegraphics[width=1.1\textwidth]{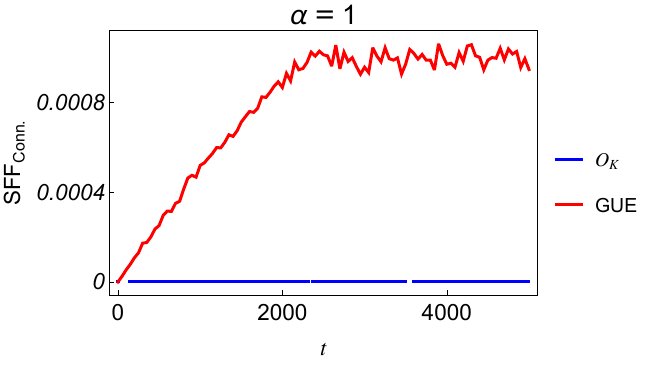}
        \caption{}
    \end{subfigure}
    \caption{Different statistics for the projected spin-$1$ operator for different values of $\alpha>\frac{1}{2}$. Left column: the numerically normalized level density (yellow), the theoretical prediction from free compression in Appendix \ref{App: DOS for spin1}. Central column: the normalized spacing distribution of $O_K$ (orange), and of the GUE (green). Right column: the connected SFF for $O_K$ (blue), and for the GUE (red)}
    \label{fig:spin1_stats2}
\end{figure}

\subsection{The projection of a 2-eigenvalue operator with different multiplicities}\label{sec:2eignum}
Here, we numerically study the random projection of a more general operator. The straightforward generalization is to consider an observable $O$ on an $N$-dimensional Hilbert space with 2 generic eigenvalues $\{a,b\}$ and respective generic multiplicities $\{\gamma,N-\gamma\}$, for $0\leq\gamma\leq N$. In its diagonal form, $O$ reads
\begin{align}
     O=
    \begin{pmatrix}
        a\,\mathbb{I}_\gamma & 0\\
        0 & b\,\mathbb{I}_{N-\gamma}
    \end{pmatrix}\,.
\end{align}
Here, the big difference is made by the unequal multiplicities, not by the fact that $a$ and $b$ are generic. The reason is that $O$ can always be related to the case $a=1$, $b=-1$ by a shift and dilation as follows:
\begin{align}
    O=\frac{a-b}{2}\tilde{S}+\frac{a+b}{2}\mathbb{I}_N
\end{align}
where
\begin{align}
    \tilde{S}=
    \begin{pmatrix}
        +\mathbb{I}_\gamma & 0\\
        0 & -\mathbb{I}_{N-\gamma}
    \end{pmatrix}\,.
\end{align}
Upon rotation to the energy eigenbasis and projection onto a subspace of dimension $K$, we get
\begin{align}
    O_K=\frac{a-b}{2}S_K+\frac{a+b}{2}\mathbb{I}_K\,,
\end{align}
where we used our usual notation $A_K\equiv P_K\,U A\,U^\dagger P_K$ for any operator $A$, which means that the relation between the eigenvalues of $O_K$ and the ones of $S_K$ is
\begin{align}
    \lambda_i=\frac{a-b}{2}s_i+\frac{a+b}{2}\,,
\end{align}
where $\{\lambda_i\}_{i=1...N}$ are the eigenvalues of $O_K$ and $\{s_i\}_{i=1...N}$ are the eigenvalues of $S_K$. This implies that the one-point eigenvalue distributions of the two operators are the same, modulo a shift and a dilation, together with a vertical rescaling to adjust for normalization:
\begin{align}\label{shift_resc}
    \rho_{O_K}(\lambda)=\frac{2}{|a-b|}\rho_{S_K}\left(\frac{2}{a-b}\left(\lambda-\frac{a+b}{2}\right)\right)\,.
\end{align}
With this in mind, we can show the results for just one choice of $\{a,b\}$, without loss of generality. What really changes the game is the fact that the multiplicities are unequal.

\begin{figure}[htbp]
    \centering
    
    \begin{subfigure}[b]{0.3\textwidth}
        \includegraphics[width=0.85\textwidth]{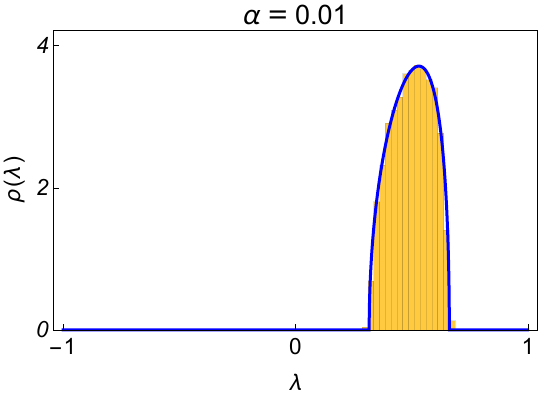}
        \caption{}
    \end{subfigure}
    \hfill
    \begin{subfigure}[b]{0.3\textwidth}
        \includegraphics[width=\textwidth]{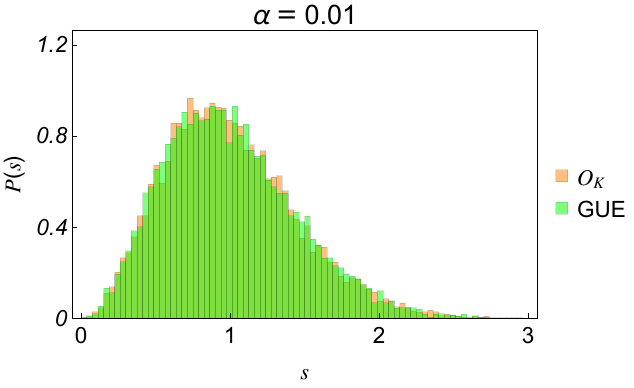}
        \caption{}
    \end{subfigure}
    \hfill
    \begin{subfigure}[b]{0.3\textwidth}
        \includegraphics[width=1.1\textwidth]{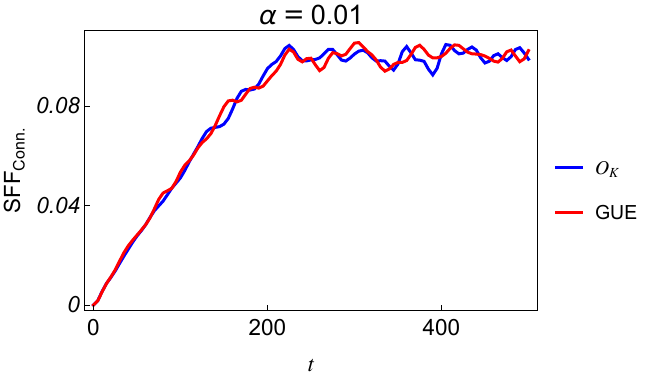}
        \caption{}
    \end{subfigure}
    
    \medskip
    
    \begin{subfigure}[b]{0.3\textwidth}
        \includegraphics[width=0.85\textwidth]{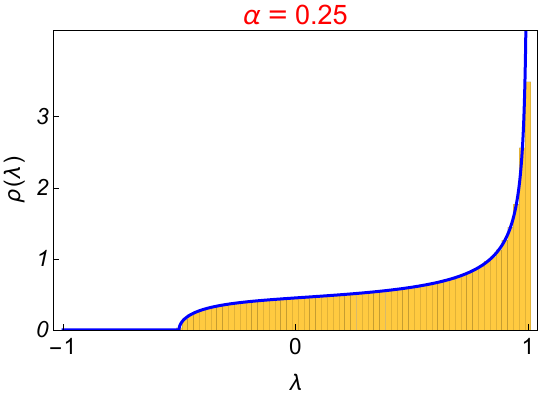}
        \caption{}
    \end{subfigure}
    \hfill
    \begin{subfigure}[b]{0.3\textwidth}
        \includegraphics[width=\textwidth]{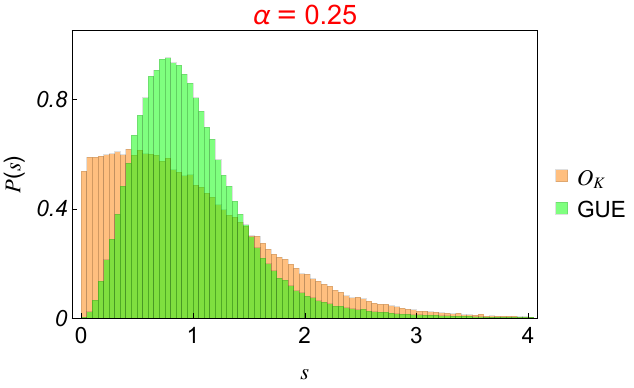}
        \caption{}
    \end{subfigure}
    \hfill
    \begin{subfigure}[b]{0.3\textwidth}
        \includegraphics[width=1.1\textwidth]{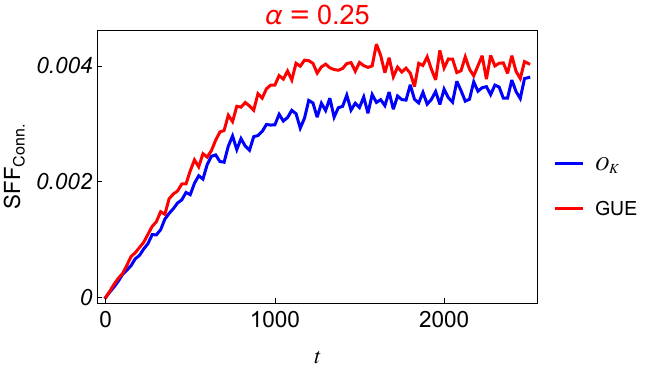}
        \caption{}
    \end{subfigure}
    
    \medskip
    
    \begin{subfigure}[b]{0.3\textwidth}
        \includegraphics[width=0.85\textwidth]{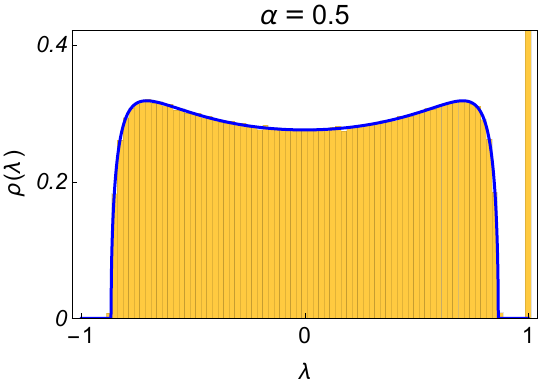}
        \caption{}
    \end{subfigure}
    \hfill
    \begin{subfigure}[b]{0.3\textwidth}
        \includegraphics[width=\textwidth]{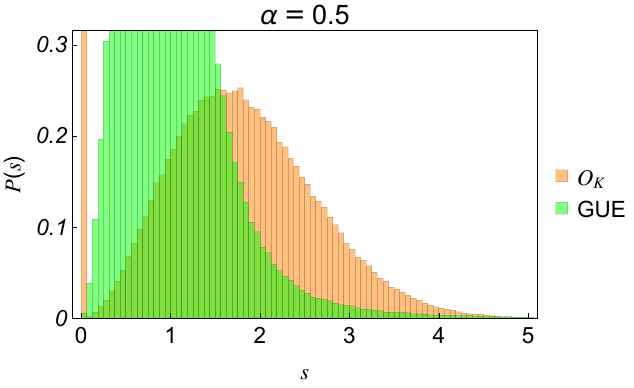}
        \caption{}
    \end{subfigure}
    \hfill
    \begin{subfigure}[b]{0.3\textwidth}
        \includegraphics[width=1.1\textwidth]{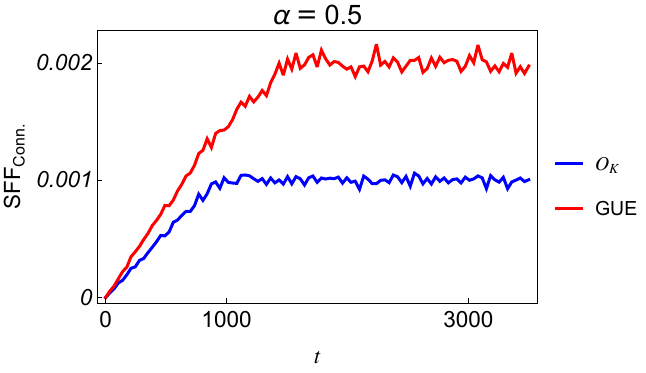}
        \caption{}
    \end{subfigure}
    
    \caption{Different statistics in the case of two different multiplicities, for various values of $\alpha\leq\frac{1}{2}$. Left column: the numerical normalized level density (yellow), the theoretical prediction of (\ref{2eigdistr}) (also from free compression in Section \ref{sec: FreeCompression}) (blue). Central column: the normalized spacing distribution of $O_K$ (orange), and of the GUE (green). Right column: the connected SFF for $O_K$ (blue), and for the GUE (red). In red, we highlight the values of $\alpha$ for which the transition occurs.}
    \label{fig:2eigop}
\end{figure}

\begin{figure}[htbp]
    \centering
    
    \begin{subfigure}[b]{0.3\textwidth}
        \includegraphics[width=0.85\textwidth]{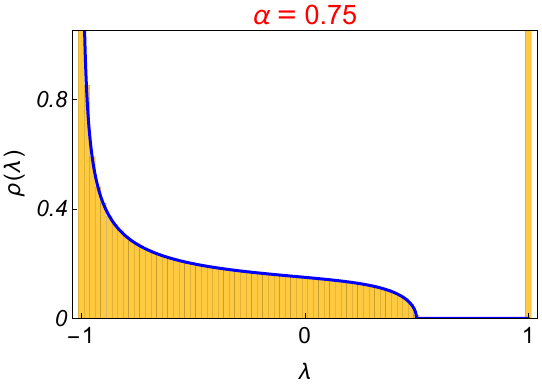}
        \caption{}
    \end{subfigure}
    \hfill
    \begin{subfigure}[b]{0.3\textwidth}
        \includegraphics[width=\textwidth]{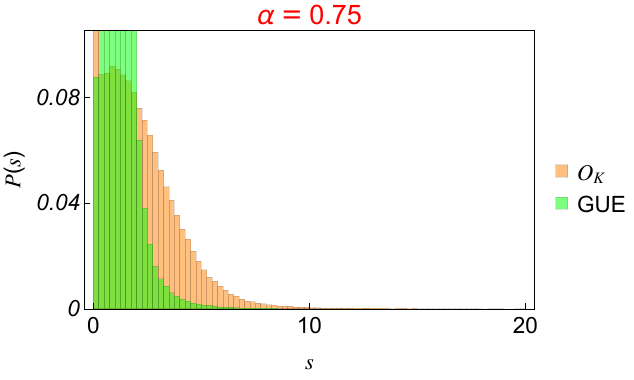}
        \caption{}
    \end{subfigure}
    \hfill
    \begin{subfigure}[b]{0.3\textwidth}
        \includegraphics[width=1.1\textwidth]{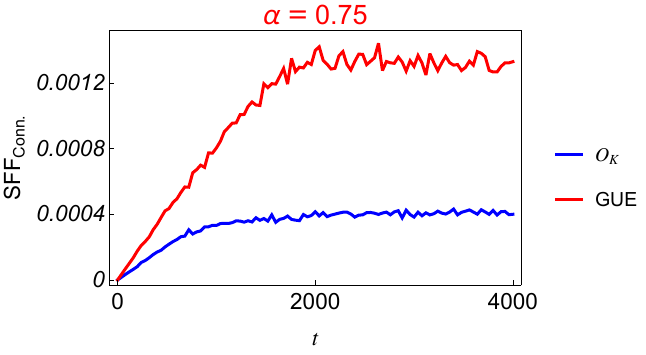}
        \caption{}
    \end{subfigure}
    
    \medskip
    
    \begin{subfigure}[b]{0.3\textwidth}
        \includegraphics[width=0.85\textwidth]{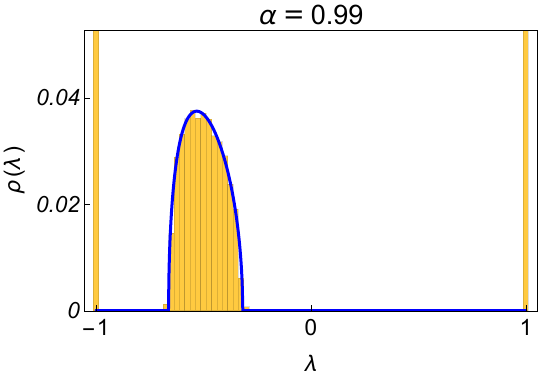}
        \caption{}
    \end{subfigure}
    \hfill
    \begin{subfigure}[b]{0.3\textwidth}
        \includegraphics[width=\textwidth]{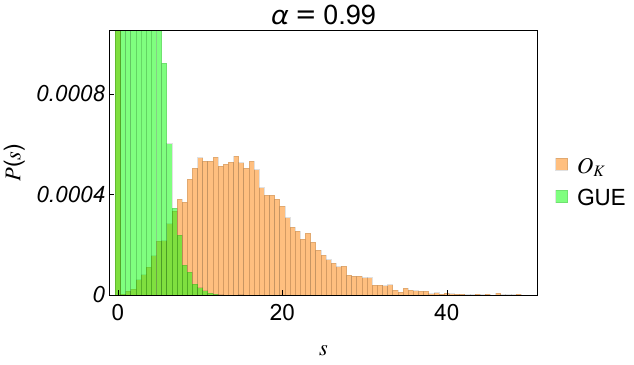}
        \caption{}
    \end{subfigure}
    \hfill
    \begin{subfigure}[b]{0.3\textwidth}
        \includegraphics[width=1.1\textwidth]{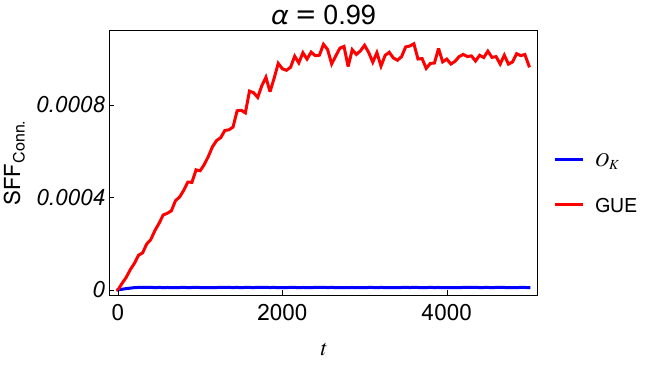}
        \caption{}
    \end{subfigure}
    
    \medskip
    
    \begin{subfigure}[b]{0.3\textwidth}
        \includegraphics[width=0.85\textwidth]{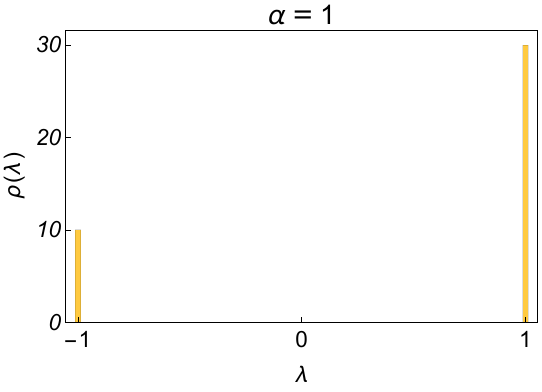}
        \caption{}
    \end{subfigure}
    \hfill
    \begin{subfigure}[b]{0.3\textwidth}
        \includegraphics[width=\textwidth]{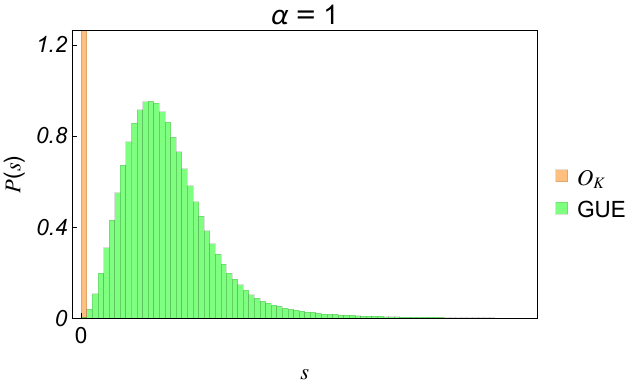}
        \caption{}
    \end{subfigure}
    \hfill
    \begin{subfigure}[b]{0.3\textwidth}
        \includegraphics[width=1.1\textwidth]{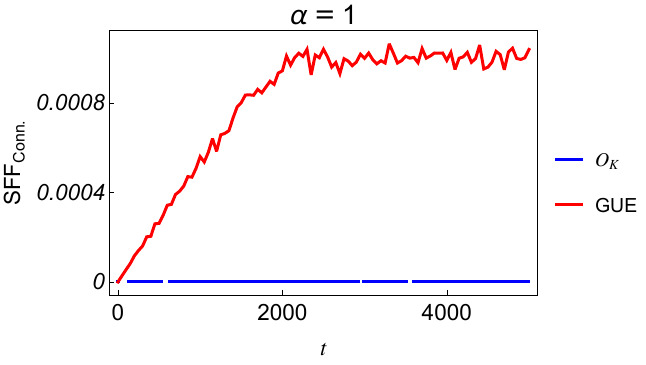}
        \caption{}
    \end{subfigure}
    \caption{Different statistics in the case of two different multiplicities, for various values of $\alpha>\frac{1}{2}$. Left column: the numerical normalized level density (yellow), the theoretical prediction of (\ref{2eigdistr}) (also from free compression in Section \ref{sec: FreeCompression}) (blue). Central column: the normalized spacing distribution of $O_K$ (orange), and of the GUE (green). Right column: the connected SFF for $O_K$ (blue), and for the GUE (red). In red, we highlight the values of $\alpha$ for which the transition occurs.}
    \label{fig:2eigop2}
\end{figure}

Let us take a look at what the distribution of the eigenvalues of $O_K$ looks like numerically, in the case where $a=1$ and $b=-1$ and for $\gamma=\frac{3}{4}N$.

In this case we separated the results in $\alpha\leq\frac{1}{2}$ (Figure \ref{fig:2eigop}) and $\alpha>\frac{1}{2}$ (Figure \ref{fig:2eigop2}) only for practical reasons, because $\alpha=\frac{1}{2}$ is not a special point anymore. As we can see from numerical results, we have two different values of $\alpha$ beyond which $O_K$ begins to recover the eigenvalues of the full $O$: beyond $\alpha=\frac{1}{4}$ we see the recovery of the eigenvalue $\{+1\}$, while beyond $\alpha=\frac{3}{4}$ we see the recovery of $\{-1\}$. In Figures \ref{fig:2eigop} and \ref{fig:2eigop2}, these transition values of $\alpha$ are highlighted in red. Moreover, the one-point eigenvalue distribution (left column) is generally asymmetric. We have proved in the Section \ref{sec: DOS from FPT} that, for all values of $\alpha\in[0,1]$, the one-point eigenvalue distribution reads
\begin{align}\label{2eigdistr}
    \rho(\lambda)=\frac{\sqrt{12\alpha(1-\alpha)-(1-2\alpha-2\lambda)^2}}{4\pi(1-\lambda^2)\mathcal{N}(\alpha,\beta)}
\end{align}
where $\mathcal{N}(\alpha,\beta)=\min(\alpha,\frac{3}{4})-\max(\alpha-\frac{1}{4},0)$, and $\beta=\gamma/N$. This distribution is shown as a solid blue line in the left column of Figure \ref{fig:2eigop}.

Despite the differences, we see some properties that seem to be the same as the equal-multiplicity case. First, when $\alpha\ll 1$, the level density, the spacing distribution, and the SFF are the same as in the GUE $H$ that satisfies
\begin{align}
    \E[H_{ij}]=0,\qquad \E[|H_{ij}|^2]=\frac{1}{N}\,,
\end{align}
and begin to deviate from it when $\alpha$ becomes bigger. In particular, the biggest deviations appear when we go beyond the transition points. Moreover, we again see that the property of eigenvalue repulsion is lost when $\alpha$ exceeds the first transition points, where degeneracy starts to dominate, thereby contributing to the level cluster.

\subsection{The projection of a 4-eigenvalue operator with different multiplicities}
\begin{figure}[htbp]
    \centering
    
    \begin{subfigure}[b]{0.3\textwidth}
        \includegraphics[width=0.85\textwidth]{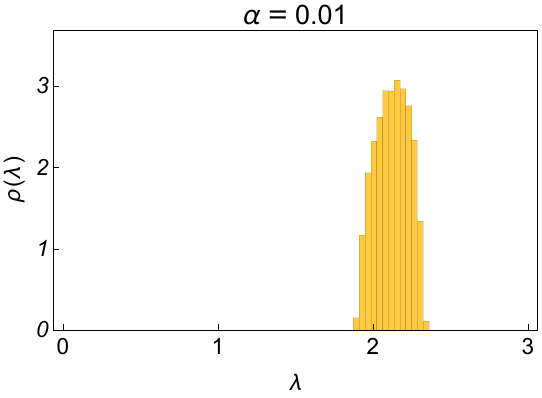}
        \caption{}
    \end{subfigure}
    \hfill
    \begin{subfigure}[b]{0.3\textwidth}
        \includegraphics[width=\textwidth]{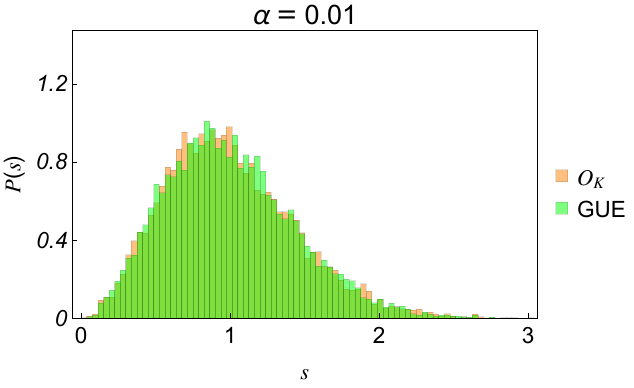}
        \caption{}
    \end{subfigure}
    \hfill
    \begin{subfigure}[b]{0.3\textwidth}
        \includegraphics[width=1.1\textwidth]{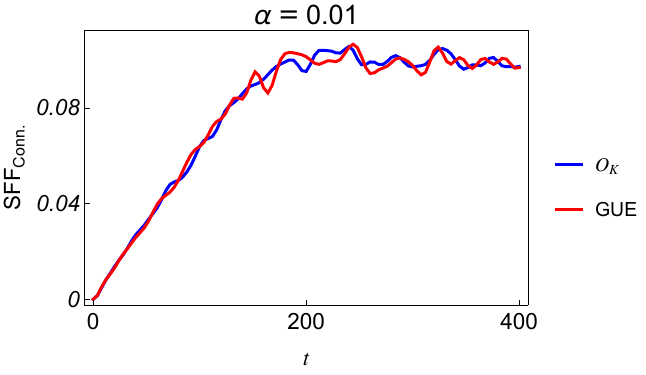}
        \caption{}
    \end{subfigure}
    
    \medskip

    \begin{subfigure}[b]{0.3\textwidth}
        \includegraphics[width=0.85\textwidth]{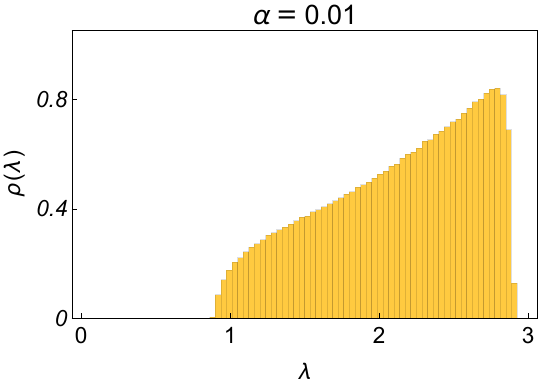}
        \caption{}
    \end{subfigure}
    \hfill
    \begin{subfigure}[b]{0.3\textwidth}
        \includegraphics[width=\textwidth]{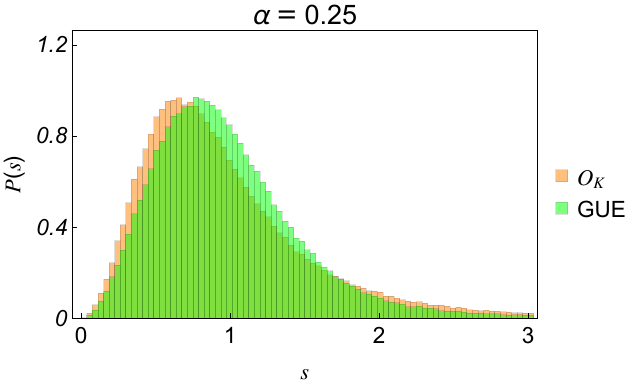}
        \caption{}
    \end{subfigure}
    \hfill
    \begin{subfigure}[b]{0.3\textwidth}
        \includegraphics[width=1.1\textwidth]{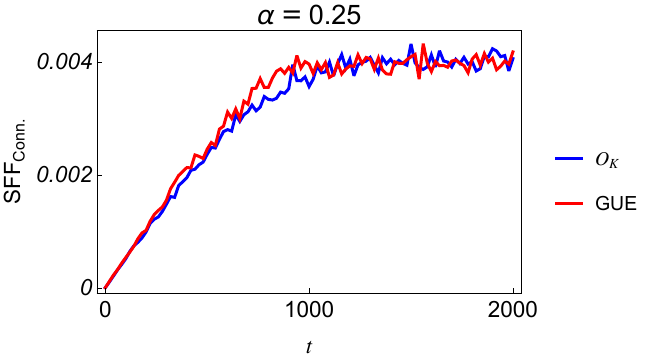}
        \caption{}
    \end{subfigure}
    
    \medskip
    
    \begin{subfigure}[b]{0.3\textwidth}
        \includegraphics[width=0.85\textwidth]{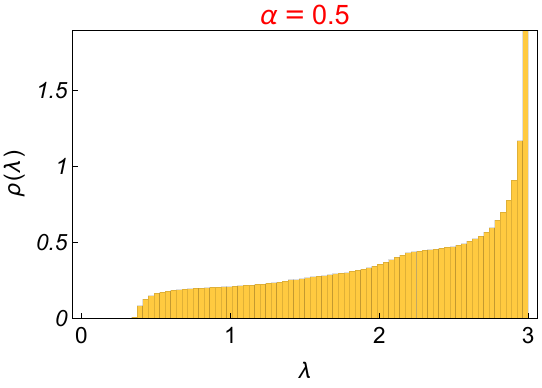}
        \caption{}
    \end{subfigure}
    \hfill
    \begin{subfigure}[b]{0.3\textwidth}
        \includegraphics[width=\textwidth]{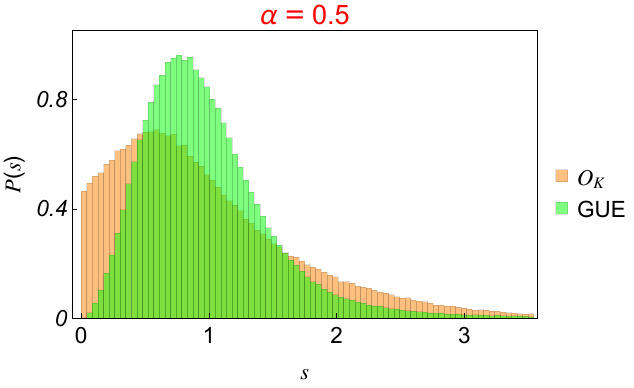}
        \caption{}
    \end{subfigure}
    \hfill
    \begin{subfigure}[b]{0.3\textwidth}
        \includegraphics[width=1.1\textwidth]{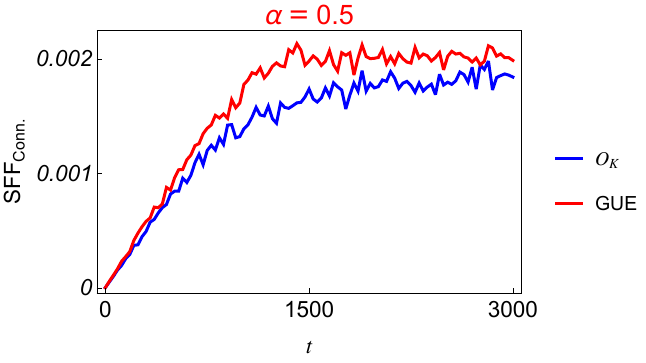}
        \caption{}
    \end{subfigure}
    
    \caption{Different statistics in the case of four eigenvalues with different multiplicities, for various values of $\alpha\leq\frac{1}{2}$. Left column: the normalized level density (yellow). Central column: the normalized spacing distribution of $O_K$ (orange), and of the GUE (green). Right column: the connected SFF for $O_K$ (blue), and for the GUE (red). In red, we highlight the values of $\alpha$ for which the transition occurs.}
    \label{fig:4eigop}
\end{figure}

\begin{figure}[htbp]
    \centering
    
    \begin{subfigure}[b]{0.3\textwidth}
        \includegraphics[width=0.85\textwidth]{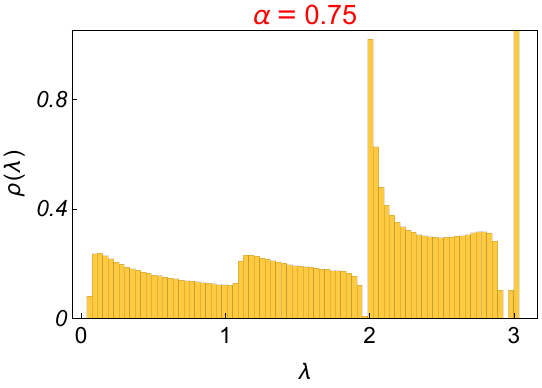}
        \caption{}
    \end{subfigure}
    \hfill
    \begin{subfigure}[b]{0.3\textwidth}
        \includegraphics[width=\textwidth]{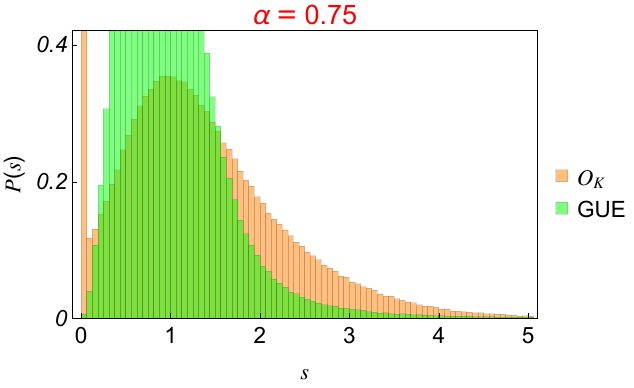}
        \caption{}
    \end{subfigure}
    \hfill
    \begin{subfigure}[b]{0.3\textwidth}
        \includegraphics[width=1.1\textwidth]{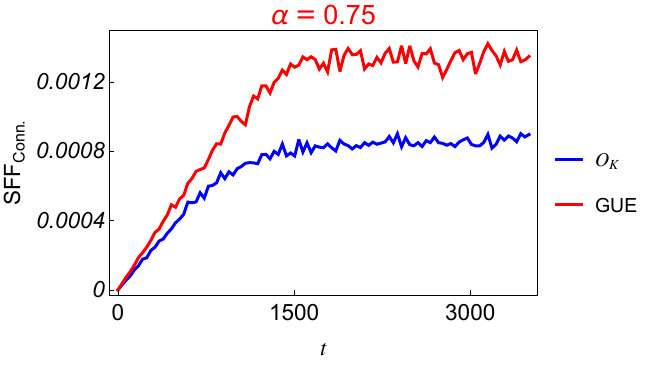}
        \caption{}
    \end{subfigure}
    
    \medskip
    
    \begin{subfigure}[b]{0.3\textwidth}
        \includegraphics[width=0.85\textwidth]{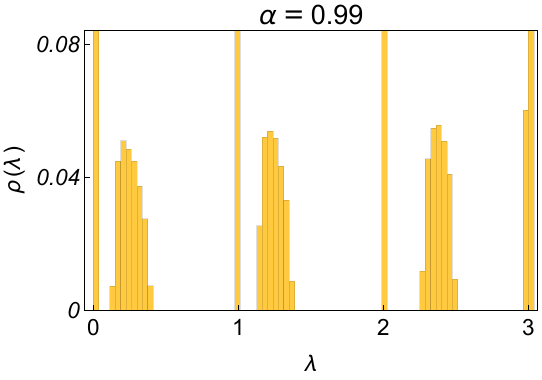}
        \caption{}
    \end{subfigure}
    \hfill
    \begin{subfigure}[b]{0.3\textwidth}
        \includegraphics[width=\textwidth]{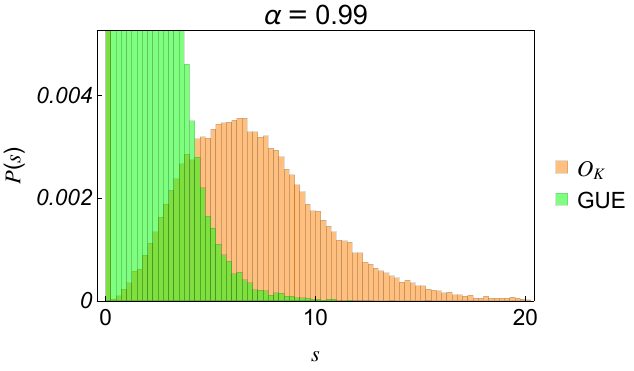}
        \caption{}
    \end{subfigure}
    \hfill
    \begin{subfigure}[b]{0.3\textwidth}
        \includegraphics[width=1.1\textwidth]{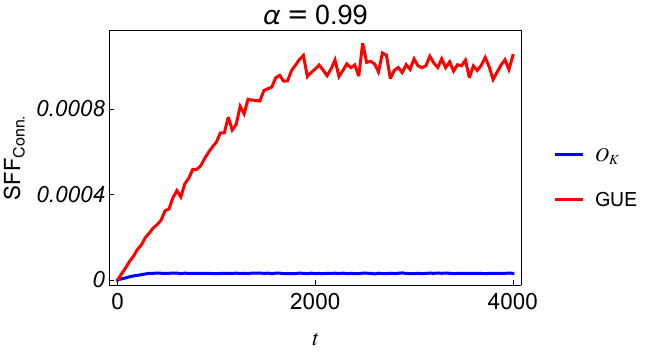}
        \caption{}
    \end{subfigure}
    
    \medskip
    
    \begin{subfigure}[b]{0.3\textwidth}
        \includegraphics[width=0.85\textwidth]{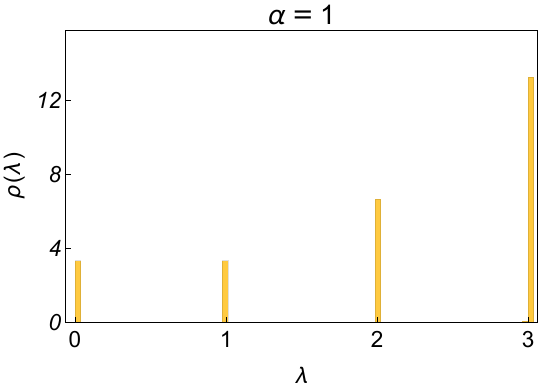}
        \caption{}
    \end{subfigure}
    \hfill
    \begin{subfigure}[b]{0.3\textwidth}
        \includegraphics[width=\textwidth]{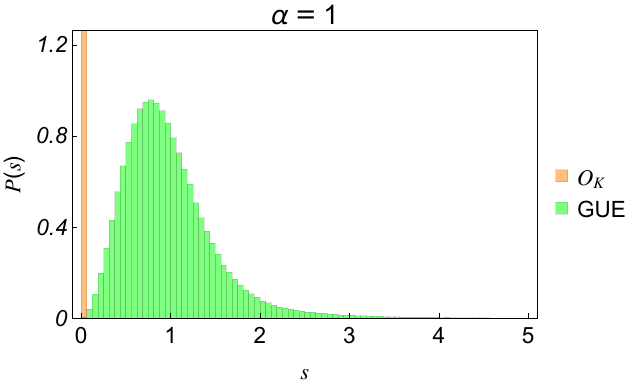}
        \caption{}
    \end{subfigure}
    \hfill
    \begin{subfigure}[b]{0.3\textwidth}
        \includegraphics[width=1.1\textwidth]{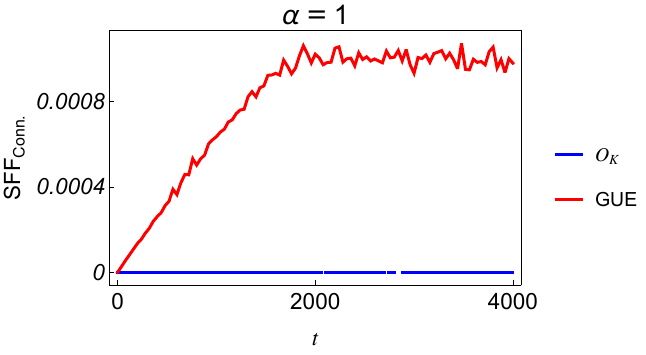}
        \caption{}
    \end{subfigure}
    \caption{Different statistics in the case of four eigenvalues with different multiplicities, for various values of $\alpha>\frac{1}{2}$. Left column: the normalized level density (yellow). Central column: the normalized spacing distribution of $O_K$ (orange), and of the GUE (green). Right column: the connected SFF for $O_K$ (blue), and for the GUE (red). In red, we highlight the values of $\alpha$ for which the transition occurs.}
    \label{fig:4eigop2}
\end{figure}
Let us get to a further generalization. Take now an $O$ with four eigenvalues $\{0,1,2,3\}$ with respective multiplicities $\left\{\frac{1}{8}N,\frac{1}{8}N,\frac{1}{4}N,\frac{1}{2}N\right\}$. We show the statistics in Figures \ref{fig:4eigop} and \ref{fig:4eigop2}, from which we can see that now we have three values of alpha beyond which $O_K$ recovers the eigenvalues of $O$: beyond $\alpha=\frac{1}{2}$ $O_K$ recovers $\{3\}$, beyond $\alpha=\frac{3}{4}$ it recovers $\{2\}$ and beyond $\alpha=\frac{7}{8}$ it recovers $\{0,1\}$. All the other properties are very similar to the previous cases: we lose eigenvalue repulsion at the transition points, and for $\alpha\ll 1$ $O_K$ shares the statistics with the GUE $H$, which has moments
\begin{align}
    \E[H_{ij}]=\mu\delta_{ij},\qquad \E[|H_{ij}|^2]=\frac{1}{N}
\end{align}
where $\mu=\frac{1}{N}\Tr O$ is the average of the eigenvalues of $O$.

\section{Density of states from Free Compression using $R$-transform for spin-$1$ operators}\label{App: DOS for spin1}
In this Appendix, we show the eigenvalue distributions of the projected spin-$1$ operators for several values of projection parameters $\alpha=0.01, 0.25,0.5,0.75,0.99$. These distributions are obtained analytically from the R-transform (\ref{eq-freeDec R-tranform sp1}), using the free compression method described in Section \ref{sec: FreeCompression}. The corresponding plots are shown in blue in Figures \ref{fig:spin1_stats} and \ref{fig:spin1_stats2} (left columns), which perfectly match the numerical results.

\begin{equation}
    \mu_{ \tilde{a}/100+ \tilde{b}/100}(\lambda)=\frac{-90000 \lambda^4+f_1^{2/3}-26427 \lambda^2-38809}{6 \sqrt{3} \pi  \lambda \left(\lambda^2-1\right) f_1^{1/3}}\,,
\end{equation}
where 
\begin{eqnarray}
    f_1&=&7645373+9 \left(30000 \left(100 \lambda ^2-397\right) \lambda ^2+1264627\right) \lambda ^2\nonumber\\
    &&+2673 \sqrt{-\lambda ^2 \left(\lambda ^2-1\right)^2 \left(30000 \lambda ^2 \left(30000 \lambda ^2+8809\right)-7645373\right)}\,.
\end{eqnarray}

\begin{equation}
 \mu_{ \tilde{a}/4+ \tilde{b}/4}(\lambda)=\frac{-144 \lambda ^4+69 \lambda ^2+f_2^{2/3}-25}{6 \sqrt{3} \pi  \lambda  \left(\lambda ^2-1\right) f_2^{1/3}}\,,
\end{equation}
where
\begin{equation}
    f_2=81 \sqrt{-\lambda ^2 \left(\lambda ^2-1\right)^2 \left(48 \lambda ^2 \left(48 \lambda ^2-23\right)-125\right)}+9 \left(192 \lambda ^4-624 \lambda ^2+307\right) \lambda ^2+125\,.
\end{equation}

\begin{equation}
 \mu_{ \tilde{a}/2+ \tilde{b}/2}(\lambda)=\frac{-36 \lambda ^4+33 \lambda ^2+f_3^{2/3}-1}{6 \sqrt{3} \pi  \lambda  \left(\lambda ^2-1\right) f_3^{1/3}}\,,
\end{equation}
where
\begin{equation}
    f_3=9 \left(24 \lambda ^4-60 \lambda ^2+35\right) \lambda ^2+27 \sqrt{-\lambda ^2 \left(\lambda ^2-1\right)^2 \left(144 \lambda ^4-132 \lambda ^2-1\right)}+1\,.
\end{equation}

\begin{equation}
 \mu_{ 3\tilde{a}/4+ 3\tilde{b}/4}(\lambda)=\frac{-144 \lambda ^4+141 \lambda ^2+f_4^{2/3}-1}{18 \sqrt{3} \pi  \lambda  \left(\lambda ^2-1\right) f_4^{1/3}}\,,
\end{equation}
where
\begin{equation}
    f_4=9 \left(48 \left(4 \lambda ^2-7\right) \lambda ^2+145\right) \lambda ^2+27 \sqrt{3} \sqrt{-\lambda ^2 \left(\lambda ^2-1\right)^2 \left(768 \lambda ^4-752 \lambda ^2+1\right)}-1\,.
\end{equation}

\begin{equation}
 \mu_{ 99\tilde{a}/100+ 99\tilde{b}/100}(\lambda)=\frac{-90000 \lambda ^4+61773 \lambda ^2+f_5^{2/3}-9409}{594 \sqrt{3} \pi  \lambda  \left(\lambda ^2-1\right) f_5^{1/3}}\,,
\end{equation}
where
\begin{eqnarray}
    f_5&=&9024057 \lambda ^2+270000 \left(100 \lambda ^2-103\right) \lambda ^4-912673 \nonumber\\
    &&+27 \sqrt{3} \sqrt{-\lambda ^2 \left(\lambda ^2-1\right)^2 \left(10000 \left(30000 \lambda ^2-20591\right) \lambda ^2+30118209\right)}\,.
\end{eqnarray}

\bibliographystyle{ytphys}
\bibliography{ref}

\end{document}